\definecolor{citecolor}{HTML}{0000C0}
\definecolor{urlcolor}{HTML}{000080}
\newtheorem{theorem}{Theorem}
\newtheorem{lemma}{Lemma}
\newtheorem{corollary}{Corollary}
\newtheorem{definition}{Definition}
\theoremstyle{remark}
\newcommand{\namedref}[2]{\hyperref[#2]{#1~\ref*{#2}}}
\newcommand{\sectionref}[1]{\namedref{Section}{#1}}
\newcommand{\theoremref}[1]{\namedref{Theorem}{#1}}
\newcommand{\corollaryref}[1]{\namedref{Corollary}{#1}}
\newcommand{\figureref}[1]{\namedref{Figure}{#1}}
\newcommand{\lemmaref}[1]{\namedref{Lemma}{#1}}
\newcommand{\defref}[1]{\namedref{Definition}{#1}}
\renewcommand{\vec}[1]{\mathbf{#1}}
\newcommand{\bN}{\mathbb{N}}
\newcommand{\sgo}{\ensuremath{\operatorname{GO}}}
\newcommand{\sfire}{\ensuremath{\operatorname{FIRE}}}
\DeclareMathOperator{\polylog}{polylog}
\DeclareMathOperator{\N}{\mathbb N}
\newenvironment{mycover}
               {\list{}{\listparindent 0pt
                        \itemindent    \listparindent
                        \leftmargin    0pt
                        \rightmargin   0pt
                        \parsep        0pt}%
                \raggedright
                \item\relax}
               {\endlist}
\begin{document}

\begin{mycover}
  
{\LARGE \textbf{Near-Optimal Self-Stabilising Counting and \\Firing Squads}\par}

\bigskip
\bigskip

\bigskip 
\textbf{Christoph Lenzen}\, $\cdot$\, \href{mailto:clenzen@mpi-inf.mpg.de}{\nolinkurl{clenzen@mpi-inf.mpg.de}}

\smallskip
{\small Department of Algorithms and Complexity, \\
Max Planck Institute for Informatics, \\
Saarland Informatics Campus \par}

\bigskip
\textbf{Joel Rybicki}\, $\cdot$\, \href{mailto:joel.rybicki@helsinki.fi}{\nolinkurl{joel.rybicki@helsinki.fi}}

\smallskip
{\small Department of Biosciences, University of Helsinki\footnote[1]{Current affiliation of JR.} \par}
\medskip

{\small Helsinki Institute for Information Technology HIIT, \\
Department of Computer Science, Aalto University\par}

\end{mycover}

\bigskip
\paragraph{Abstract.}

Consider a fully-connected synchronous distributed system consisting of $n$ nodes, where up to $f$ nodes may be faulty and every node starts in an arbitrary initial state. In the \emph{synchronous $C$-counting} problem, all nodes need to eventually agree on a counter that is increased by one modulo $C$ in each round for given $C>1$. In the \emph{self-stabilising firing squad} problem, the task is to eventually guarantee that all non-faulty nodes have simultaneous responses to external inputs: if a subset of the correct nodes receive an external ``go'' signal as input, then all correct nodes should agree on a round (in the not-too-distant future) in which to jointly output a ``fire'' signal. Moreover, no node should  generate a ``fire'' signal without some correct node having previously received a ``go'' signal as input.

We present a framework reducing both tasks to binary consensus at very small cost. For example, we obtain a deterministic algorithm for self-stabilising Byzantine firing squads with optimal resilience $f<n/3$, asymptotically optimal stabilisation and response time $O(f)$, and message size $O(\log f)$. As our framework does not restrict the type of consensus routines used, we also obtain efficient randomised solutions, and it is straightforward to adapt our framework for other types of permanent faults.

\thispagestyle{empty}
\setcounter{page}{0}
\newpage

\section{Introduction}\label{sec:intro}

The design of distributed systems faces several unique issues related to redundancy and fault-tolerance, timing and synchrony, and the efficient use of communication as a resource~\cite{lynch96book}. In this work, we give near-optimal solutions to two fundamental distributed synchronisation and coordination tasks: the synchronous counting and the firing squad problems. For both tasks, we devise fast self-stabilising algorithms~\cite{dolev00self-stabilization} that are not only communication-efficient, but also tolerate the optimal number of permanently faulty nodes. That is, our algorithms efficiently recover from transient failures that may arbitrarily corrupt the state of the distributed system \emph{and} permanently damage a large number of the nodes.

\subsection{Synchronous counting and firing squads}

We assume a synchronous message-passing model of distributed computation. The distributed system consists of a fully-connected network of $n$ nodes, where up to $f$ of the nodes may be faulty and the initial state of the system is arbitrary. To model the behaviour of faulty nodes, we consider three typical classes of permanent faults:
\begin{itemize}[noitemsep]
 \item crash (the faulty node stops sending information),
 \item omission (some or all of the messages sent by the faulty node are lost), and 
 \item Byzantine faults (the faulty node exhibits arbitrary misbehaviour).
\end{itemize}

Note that even though the communication proceeds in a synchronous fashion, the nodes may have different notions of current time due to the arbitrary initial states. However, many typical distributed protocols assume that the system has either been properly initialised or that the nodes should collectively agree on the rounds in which to perform certain actions. Thus, we are essentially faced with the task of having to agree on a common time in a manner that is both self-stabilising and tolerates permanently faulty behaviour from some of the nodes. To address this issue, we study the synchronous counting and firing squad problems, which are among the most fundamental challenges in fault-tolerant distributed systems. 

In the \emph{synchronous counting} problem, all the nodes receive well-separated synchronous clock pulses that designate the start of a new round. The received clock pulses are anonymous, and hence, all correct nodes should eventually stabilise and agree on a round counter that increases consistently by one modulo $C$. The problem is also known as \emph{digital clock synchronisation}, as all non-faulty nodes essentially have to agree on a shared logical clock. A stabilising execution of such a protocol for $n=4$, $f=1$, and $C=3$ is given below:

\begin{center}
 \includegraphics[page=1,scale=1.0]{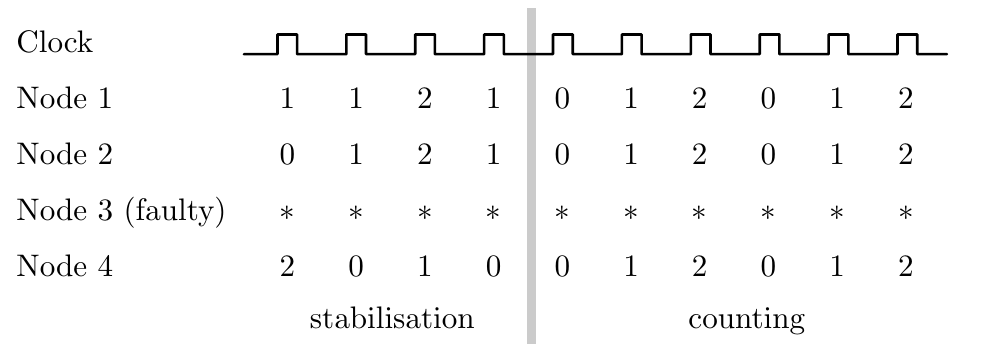}
\end{center}

 In the \emph{self-stabilising firing squad problem}, the task is to have all correct nodes eventually stabilise and respond to an external input simultaneously. That is, once stabilised, when a sufficiently large (depending on the type of permanent faults) subset of the correct nodes receive an external ``go'' signal, then all correct nodes should eventually generate a local ``fire'' event on the same round. The time taken to react to the ``go'' signal is called the response time. Note that before stabilisation the nodes may generate spurious firing signals, but after stabilisation no correct node should generate a ``fire'' event without some correct node having previously received a ``go'' signal as input. An execution of such a protocol with $n=4$, $f=1$, and response time $R=5$ is illustrated below:

 \begin{center}
 \includegraphics[page=2,scale=1.0]{squad-figures.pdf}
\end{center}

 A firing squad protocol can be used, for example, to agree in a self-stabilising manner on when to initiate a new instance of a non-self-stabilising distributed protocol, as response to internal or external ``go'' inputs.

\subsection{Connections to fault-tolerant consensus}

Reaching agreement is perhaps the most intrinsic problem in fault-tolerant distributed computing. It is known that both the synchronous counting~\cite{dolev15survey} and the self-stabilising firing squad problem~\cite{dolev12optimal} are closely connected to the well-studied consensus problem~\cite{pease80reaching,lamport82byzantine}, where each node is given an input bit and the task is to agree on a common output bit such that if every non-faulty node received the same value as input, then this value must also be the output value. Indeed, the connection is obvious on an intuitive level, as in each task the goal is to agree on a common decision (that is, the output bit, clock value, or whether to generate a firing event).

However, the key difference between the problems lies in self-stabilisation. Typically, the consensus problem is considered in a non-self-stabilising setting with only permanent faults (e.g.\ $f < n/3$ nodes with arbitrary behaviour), whereas synchronous counting copes with both transient and permanent faults. In fact, it is easy to see that synchronous counting is trivial in a non-self-stabilising setting: if all nodes are initialised with the same clock value, then they can simply locally increment their counters each round without any communication. Furthermore, in a properly initialised system, one can reduce the firing squad problem to repeatedly calling a consensus routine~\cite{burns85firing}. 

Interestingly, imposing the requirement of self-stabilisation -- convergence to correct system behavior from arbitrary initial states -- reverses the roles. Solving either the synchronous counting or firing squad problem in a self-stabilising manner also yields a solution to binary consensus, but the converse is not true. In fact, in order to internally or externally trigger a consistent execution of a consensus protocol (or any other non-self-stabilising protocol, for that matter), one first needs a self-stabilising synchronous counting or firing squad algorithm, respectively!

In light of this, the self-stabilising variants of both problems are important generalisations of consensus. While considerable research has been dedicated to both tasks~\cite{dolev12optimal,dolev15survey,lenzen15towards,lenzen15efficient,dolev07actions,ben-or08fast,dolev04clock-synchronization,dolev16synthesis}, our understanding is significantly less developed than for the extensively studied consensus problem. Moreover, it is worth noting that all existing algorithms utilise consensus subroutines~\cite{dolev07actions,lenzen15towards,lenzen15efficient} or shared coins~\cite{ben-or08fast}, the latter of which essentially solves consensus as well. Given that both tasks are at least as hard as consensus~\cite{dolev15survey}, this seems to be a natural approach. However, it raises the question how much of an overhead must be incurred by such a reduction. In this paper, we subsume and improve upon previous results by providing a generic reduction of synchronous counting and self-stabilising firing squad to binary consensus that incurs very small overheads.

\subsection{Contributions}

We develop a framework for efficiently transforming \emph{non-self-stabilising} consensus algorithms into \emph{self-stabilising} algorithms for synchronous counting and firing squad problems. In particular, the resulting self-stabilising algorithms have the same resilience as the original consensus algorithms, that is, the resulting algorithms tolerate the same number and type of permanent faults as the original consensus algorithm (e.g.\ crash, omission, or Byzantine faults). 

The construction we give incurs a small overhead compared to time and bit complexity of the consensus routines: the stabilisation time and message size are, up to constant factors, given as the sum of the cost of the consensus routine for $f$ faults and recursively applying our scheme to $f'<f/2$ faults. Finally, our construction can be used in conjunction with both deterministic and randomised consensus algorithms. Consequently, we also obtain algorithms for probabilistic variants of the synchronous counting and firing squad problems.

Our novel framework enables us to address several open problems related to self-stabilising firing squads and synchronous counting. We now give a brief summary of the open problems we solve and the new results obtained using our framework.

\paragraph{Self-stabilising firing squads.}
In the case of self-stabilising firing squads, Dolev et al.~\cite{dolev12optimal} posed the following two open problems:
\begin{enumerate}[noitemsep]
 \item Are there solutions that tolerate either omission or Byzantine (i.e., arbitrary) faults?
 \item Are there algorithms using $o(n)$-bit messages only?
\end{enumerate}
We answer both questions in the affirmative by giving algorithms that achieve both properties \emph{simultaneously}. Concretely, our framework implies a deterministic solution for the self-stabilising Byzantine firing squad problem that 
\begin{itemize}[noitemsep]
 \item tolerates the optimal number of $f < n/3$ Byzantine faulty nodes, 
 \item uses messages of $O(\log f)$ bits, and
 \item is guaranteed to stabilise and respond to inputs in linear-in-$f$ communication rounds.
\end{itemize}
Thus, compared to prior state-of-the-art solutions~\cite{dolev12optimal}, our algorithm tolerates a much stronger form of faulty behaviour and uses exponentially smaller messages, yet retains asymptotically optimal stabilisation and response time. We also obtain algorithms that tolerate $f < n/2$ omission failures and $f < n$ crash failures while retaining a small message size of $O(\log f)$ bits.

\paragraph{Synchronous counting.}
We attain novel algorithms for synchronous counting, which is also known as self-stabilising Byzantine fault-tolerant \emph{digital clock synchronisation}~\cite{dolev04clock-synchronization,hoch06digital,ben-or08fast}. Our new algorithms resolve questions left open by our own prior work~\cite{lenzen15towards}, namely, whether there exist
\begin{enumerate}[noitemsep]
 \item deterministic linear-time algorithms with optimal resilience and message size $o(\log^2 f)$, or 
 \item randomised sublinear-time algorithms with small bit complexity.
\end{enumerate}
Again, we answer both questions positively using our framework developed in this paper. For the first question, we give linear-time deterministic algorithms that have message size $O(\log f)$ bits. For the second question, we show that our framework can utilise efficient randomised consensus algorithms to obtain probabilistic variants of the synchronous counting and firing squad problems. For example, the result of King and Saia~\cite{king11breaking} implies algorithms that stabilise with high probability in $\polylog n$ rounds and use message size $\polylog n$, assuming private communication links and an adaptive Byzantine adversary corrupting $f<n/(3+\varepsilon)$ nodes for an arbitrarily small constant $\varepsilon>0$.

\subsection{Related work}\label{ssec:related}

In this section, we overview prior work on the synchronous counting and firing squad problems. By now it has been established that both problems~\cite{dolev12optimal,dolev15survey} are closely connected to the well-studied (non-self-stabilising) consensus~\cite{pease80reaching,lamport82byzantine}. As there exists a vast body of literature on synchronous consensus, we refer the interested reader to e.g.\ the survey by Raynal~\cite{raynal10survey}. We note that self-stabilising variants of consensus have been studied~\cite{doerr11stabilizing,sdolev10self-stabilization,angluin06stabilizing,daliot06self-stabilizing} but in different models of computation and/or for different types of failures than what we consider in this work. 

\paragraph{Synchronous counting and digital clock synchronisation.}
In the past two decades, there has been increased interest in combining self-stabilisation with Byzantine fault-tolerance. One reason is that algorithms in this fault model are very attractive in terms of designing highly-resilient hardware~\cite{dolev15survey}. A substantial amount of work on synchronous counting has been carried out~\cite{dolev04clock-synchronization,hoch06digital,dolev07actions,ben-or08fast,dolev16synthesis,lenzen16extended}, comprising both positive and negative results.

In terms of lower bounds, many impossibility results for consensus~\cite{pease80reaching,fischer82lower,dolev82byzantine,dolev85bounds} also directly apply to synchronous counting, as synchronous counting solves binary consensus~\cite{dolev15survey,dolev16synthesis}. In particular, no algorithm can tolerate more than $f<n/3$ Byzantine faulty nodes~\cite{pease80reaching} (unless cryptographic assumptions are made) and any deterministic algorithm needs at least $f+1$ rounds to stabilise~\cite{fischer82lower}.

In a seminal work, Dolev and Welch~\cite{dolev04clock-synchronization} showed that the task can be solved in a self-stabilising manner in the presence of (the optimal number of) $f < n/3$ Byzantine faults using randomisation; see also~\cite[Ch.~6]{dolev00self-stabilization}. While this algorithm can be implemented using only constant-size messages, the expected stabilisation time is exponential. Later, Ben-Or et al.~\cite{ben-or08fast} showed that it is possible to obtain optimally-resilient solutions that stabilise in expected constant time. However, their algorithm relies on shared coins, which are costly to implement and assume private communication channels. 

In addition to the lower bound results, there also exist deterministic algorithms for the synchronous counting problem~\cite{hoch06digital,dolev07actions,dolev16synthesis,lenzen16extended}. Many of these algorithms utilise consensus routines~\cite{hoch06digital,dolev07actions,lenzen16extended}, but obtaining fast and communication-efficient solutions with optimal resilience has been a challenge. For example, Dolev and Hoch~\cite{dolev07actions} apply a pipelining technique, where $\Omega(f)$ consensus instances are run in parallel. While this approach attains optimal resilience and linear stabilisation time in $f$, the large number of parallel consensus instances necessitates large messages. 

In order to achieve better communication and state complexity, the use of computational algorithm design and synthesis techniques have also been investigated~\cite{dolev16synthesis,bloem16synthesis}. While this line of research has produced novel optimal and computer-verified algorithms, so far the techniques have not scaled beyond $f=1$ faulty node due to the inherent combinatorial explosion in the search space of potential algorithms.

Recently, we gave recursive constructions that achieve linear stabilisation time using only polylogarithmic message size and state bits per node~\cite{lenzen15towards,lenzen15efficient}; see also the extended and revised version~\cite{lenzen16extended}. However, our previous constructions relied on specific (deterministic) consensus routines and their properties in a relatively ad hoc manner. In contrast, our new framework presented here lends itself to any (possibly randomised) synchronous consensus routine and improves the best known upper bound on the message size to $O(\log f)$ bits. Currently, it is unknown whether it is possible to deterministically achieve message size $o(\log f)$.

\paragraph{Firing squads.}

In the original formulation of the firing squad synchronisation problem, the system consists of an $n$-length path consisting of finite state machines (whose number of states is independent of $n$) and the goal is to have all machines switch to the same ``fire'' state simultaneously after one node receives a ``go'' signal. This formulation of the problem has been attributed to John Myhill and Edward Moore and has subsequently been studied in various settings; see e.g.~\cite{nishitani81firing} for survey of early work related to the problem.

In the distributed computing community, the firing squad problem has been studied in fully-conneted networks in the presence of faulty nodes. Similarly to synchronous counting, the firing squad problem is closely connected to Byzantine agreement and simultaneous consensus~\cite{burns85firing,coan89firing,dwork90knowledge,coan91simultaneity,dolev12optimal}. Both Burns and Lynch~\cite{burns85firing} and Coan et al.~\cite{coan89firing} studied the firing squad problem in the context of Byzantine failures. Burns and Lynch~\cite{burns85firing} considered both permissive and strict variants of the problem (i.e., whether faulty nodes can trigger a firing event or not) and showed that both can be solved using Byzantine consensus algorithms with only a relatively small additional overhead in the number of communication rounds and total number of bits communicated. On the other hand, Coan et al.~\cite{coan89firing} gave \emph{authenticated} firing squad algorithms for various Byzantine fault models. Coan and Dwork~\cite{coan91simultaneity} gave time lower bounds of $f+1$ rounds for deterministic and randomised algorithms solving the firing squad problem in the crash fault model. 

However, neither the solutions of Burns and Lynch~\cite{burns85firing} or Coan et al.~\cite{coan89firing} are self-stabilising or use small messages. Almost two decades later, Dolev et al.~\cite{dolev12optimal} gave the first self-stabilising algorithm for the firing squad problem. In particular, their solution has optimal stabilisation time and response time depending on the fault pattern. However, their algorithm tolerates only crash faults and uses messages of size $\Theta(n \log n)$ bits. In this work, we improve on this result by achieving Byzantine fault-tolerance using messages of $O(\log n)$ bits. 

\subsection{Outline of the paper}

The article is structured as follows. For the first part of the paper, we confine the presentation to Byzantine faults. In the second part, we discuss how to extend our results in two ways: first, we consider the randomised setting, where sublinear time algorithms are possible, and secondly, other fault models that allow a larger number of faulty nodes.

We start with \sectionref{sec:preliminaries}, where we give formal definitions related to the model of computation, synchronous counting, and firing squads in the Byzantine setting. In the sections following this, we show our main result in a top-down fashion as illustrated in \figureref{fig:dependency-overview}. We introduce a series of new problems and give reductions between them: 
\begin{itemize}[noitemsep]
 \item \sectionref{sec:firing-squad} shows how to obtain synchronous counting and firing squad algorithms that rely on binary consensus routines and \emph{strong pulsers},
 \item \sectionref{sec:strong-pulsers} devises strong pulsers with the help of \emph{weak pulsers} and multivalued consensus, 
 \item \sectionref{sec:weak-pulsers} constructs weak pulsers using \emph{silent consensus} and less resilient strong pulsers.
\end{itemize}
\sectionref{sec:main} combines the results of \sectionref{sec:strong-pulsers} and \sectionref{sec:weak-pulsers} to obtain a recursive construction for strong pulsers used by the algorithms given in \sectionref{sec:firing-squad}. Finally, to demonstrate the flexibility and generality of our approach, we discuss how to extend our results to randomised consensus routines in \sectionref{sec:rand}, and cover deterministic solutions under crash and omission faults in \sectionref{sec:other}.

\begin{figure}
\begin{center}
 \includegraphics[page=5,scale=1.0]{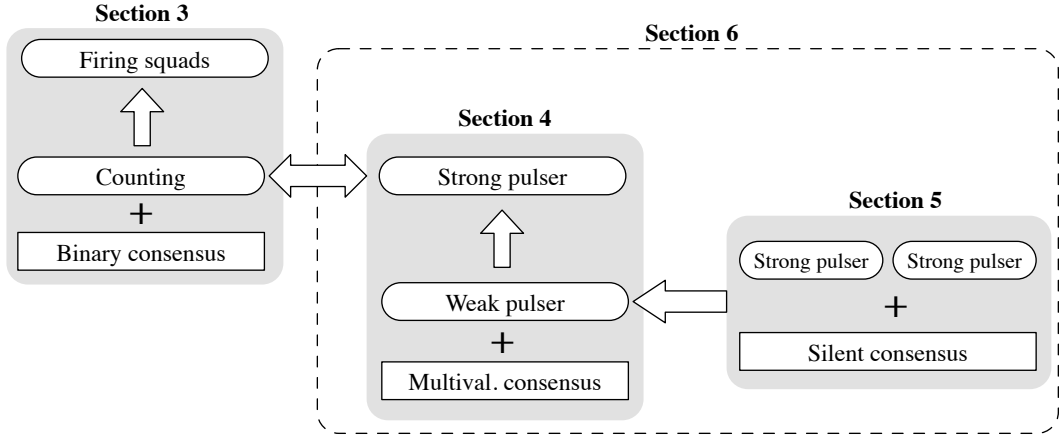}
\end{center}
\caption{High-level overview of our construction and the structure of the paper. Rounded boxes denote algorithms that are both self-stabilising and Byzantine fault-tolerant, whereas rectangular boxes denote non-stabilising Byzantine fault-tolerant consensus routines.\label{fig:dependency-overview}}
\end{figure}

\section{Preliminaries}\label{sec:preliminaries}

In this section, we first fix some basic notation, then describe the model of computation, and finally give formal definitions of the synchronous counting, self-stabilising firing squad, and consensus problems.

\subsection{Notation}

We use $\bN = \{1, 2, \ldots \}$ to denote the set of positive integers and $\bN_0 = \{0\} \cup \bN$ to denote the set of all non-negative integers. For any $k \in \bN$, we write $[k] = \{0, \ldots, k-1\}$ to be the set of the first $k$ non-negative integers.

\subsection{Model of computation}

We consider a fully-connected synchronous network on node set $V$ consisting of $n=|V|$ processors. We assume there exists a subset of $F\subseteq V$ faulty nodes that is (at least initially) unknown to all nodes, where the upper bound $f$ on the size $|F| \le f$ is known to the nodes. We say that nodes in $V \setminus F$ are correct and nodes in $F$ are faulty. 

All correct nodes in the system will follow a given algorithm $\vec A$ that is the same for all the nodes in the system. The execution proceeds in synchronous rounds, where in each round $t \in \bN$ the nodes take the following actions in lock-step:
\begin{enumerate}[noitemsep]
 \item perform local computations, 
 \item send messages to other nodes, and 
 \item receive messages from other nodes. 
\end{enumerate}
We assume that nodes have unique identifiers from $\{1,\ldots,n\}$ and can identify the sender of incoming messages.

We say that an algorithm $\vec A$ has message size $M(\vec A)$ if no correct node sends more than $M(\vec A)$ bits to any other node during a single round.

The local computations of a node $v$ determine the decision which messages to send to other nodes and what is the new state of the node $v$. As we are interested in self-stabilising algorithms, the initial state of a node is arbitrary; this is equivalent to assuming that transient faults have arbitrarily corrupted the state of each node, but the transient faults have ceased by the beginning of the first round. 

As mentioned above, we allow for additional (possibly permanent) \emph{Byzantine} faults. A Byzantine faulty node $v\in F$ may deviate from the algorithm arbitrarily, i.e., send arbitrary messages in each round. In particular, a Byzantine faulty node can send different messages to each correct node in the system, even if the algorithm specifies otherwise. Since we consider deterministic algorithms, the meaning of ``arbitrary'' in this context is that the algorithm must succeed for any possible choice of behavior of the faulty nodes. We require that $f=|F|<n/3$, as otherwise none of the problems we consider can be solved due to the impossibility of consensus under $f\geq n/3$ Byzantine faults~\cite{pease80reaching}.

\subsection{Synchronous counting}

In the \emph{synchronous $C$-counting} problem, the task is to have each node $v \in V$ output a counter value $c(v,t) \in [C]$ on each round $t\in \bN$ in a consistent manner. We say that an execution of an algorithm \emph{stabilises in round $t$} if and only if all $t\leq t'\in \bN$ and $v,w\in V\setminus F$ satisfy
\begin{enumerate}[label=SC\arabic*.,noitemsep]
 \item {\bf Agreement:} $c(v,t')=c(w,t')$ and 
 \item {\bf Consistency:} $c(v,t'+1)=c(v,t')+1\bmod C$.
\end{enumerate}
We say that $\vec A$ is an $f$-resilient $C$-counting algorithm that stabilises in time $t$ if all executions with at most $f$ faulty nodes stabilise by round $t$. The stabilisation time $T(\vec A)$ of $\vec A$ is the maximum such $t$ over all executions.

\subsection{Self-stabilising firing squad}\label{ssec:firing}

In the self-stabilising Byzantine firing squad problem, in each round $t\in \bN$, each node $v\in V$ receives an external input $\sgo(v,t)\in \{0,1\}$. Moreover, the algorithm determines an output $\sfire(v,t)\in \{0,1\}$ at each node $v\in V$ in each round $t\in \bN$. We say that an execution of an algorithm \emph{stabilises in round $t\in \bN$} if the following three properties hold:
\begin{enumerate}[label=FS\arabic*.,noitemsep]
  \item {\bf Agreement:} $\sfire(v,t')=\sfire(w,t')$ for all $v,w\in V\setminus F$ and $t\leq t'\in \bN$.
  \item {\bf Safety:} If $\sfire(v,t_F)=1$ for $v\in V\setminus F$ and $t\leq t_F\in \bN$, then there is $t_F\geq t_G\in \bN$ s.t.
\begin{enumerate}[label=(\roman*)]
 \item $\sgo(w,t_G)=1$ for some $w\in V\setminus F$ and 
 \item $\sfire(v,t')=0$ for all $t'\in \{t_G+1,\ldots,t_F-1\}$.
\end{enumerate}
  \item {\bf Liveness:} If $\sgo(v,t_G)=1$ for at least $f+1$ nodes $v\in V\setminus F$ and $t\leq t_G\in \bN$, then $\sfire(v,t_F)=1$ for all nodes $v\in V\setminus F$ and some $t_G< t_F\in \bN$.
\end{enumerate}
Note that the liveness condition requires $f+1$ correct nodes to observe a $\sgo$ input, as otherwise it would be impossible to guarantee that a correct node observed a $\sgo$ input when firing; this corresponds to the definition of a strict Byzantine firing squad~\cite{burns85firing}. We say that an execution stabilised by round $t$ has \emph{response time $R$ from round $t$} if
\begin{enumerate}[label=(\roman*),noitemsep]
 \item when firing is required in response to (sufficiently many) $\sgo$ inputs of $1$ in round $t_G\geq t$, this happens no later than round $t_G+R$, and 
 \item when the squad fires in round $t_F\geq t$, there was sufficient support (in terms of $\sgo$ inputs of $1$) justifying this in a round $t_G$ with $t_F > t_G \geq t_F - R$. 
\end{enumerate}
Finally, we say that an algorithm $\vec F$ is an $f$-resilient firing squad algorithm with stabilisation time $T(\vec F)$ and response time $R(\vec F)$ if in any execution of the system with at most $f$ faulty nodes there is a round $t\leq T(\vec F)$ such that the algorithm stabilised and has response time at most $R(\vec F)$ from round $t$.

We remark that under Byzantine faults, previous non-stabilising algorithms~\cite{burns85firing} have considered the case where the input signals (from different nodes) do not need to be received on the same round, but they can be spread out over several rounds. In the self-stabilising setting, we can easily cover the case where $f+1$ input signals are received within a time window of $\Delta$ rounds as follows: instead of relying on the input $\sgo$ signals as-is, we can use an auxiliary variable $\sgo'(v,t)$ as input to our algorithms, where $\sgo'(v,t)=1$ iff there is a round $t'\in \{t-\Delta+1,\ldots,t\}$ with $\sgo(v,t')=1$.

\subsection{Consensus}

Let us conclude this section by definining the \emph{multivalued consensus problem}. Unlike the synchronous counting and self-stabilising firing squad problems, the standard definition of consensus does \emph{not} require self-stabilisation: we assume that all nodes start from a fixed starting state and the algorithm terminates in finitely many communication rounds.

In the multivalued consensus problem for $L>1$ values, each node $v \in V$ receives an input value $x(v) \in [L]$ and the task is to have all correct nodes output the same value $y \in [L]$. We say that an algorithm $\vec C$ is an $f$-resilient $T(\vec C)$-round consensus algorithm if the following conditions hold when there are at most $f$ faulty nodes:
\begin{enumerate}[label=C\arabic*.,noitemsep]
 \item {\bf Termination:} Each $v \in V \setminus F$ decides on an output $y(v) \in [L]$ by the end of round $T(\vec C)$.
 \item {\bf Agreement:} For all $v, w \in V \setminus F$, it holds that $y(v) = y(w)$.
 \item {\bf Validity:} If there exists $x \in [L]$ such that for all $v \in V \setminus F$ it holds that $x(v)=x$, then each $v \in V \setminus F$ outputs the value $y(v) = x$.
\end{enumerate}
We remark that one may ask for stronger validity conditions, but for our purposes this condition is sufficient. The \emph{binary consensus} problem is the special case of $L=2$ of the above multivalued consensus problem. In the case of binary consensus, the stated validity condition is equivalent to requiring that if $v \in V \setminus F$ outputs $y(v) = x \in \{0,1\}$, then some $w \in V \setminus F$ has input value $x(w)=x$.

Later, we utilise the fact that multivalued consensus can be reduced to binary consensus with only a small overhead in time. In~\cite{lenzen13time}, it is shown how to do this with $1$-bit messages and an additive overhead of $O(\log L)$ rounds, preserving resilience. 

\begin{theorem}[\cite{lenzen13time}]\label{thm:multi-valued}
Let $L > 1$. Given an $f$-resilient binary consensus algorithm $\vec C$, we can solve $L$-value consensus in $O(\log L) + T(\vec C)$ rounds using $M(\vec C)$-bit messages while tolerating $f$ faults.
\end{theorem}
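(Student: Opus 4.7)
The plan is to apply a Turpin--Coan style reduction from $L$-value to binary consensus, implementing its two value-broadcast phases bit-serially so that the per-link message size never exceeds $M(\vec{C})$.

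\textbf{Protocol sketch.} In the first $\lceil \log L \rceil$ rounds every node transmits its input $x(v) \in [L]$ bit by bit; let $x_w$ denote the value a fixed correct $v$ reconstructs from $w$ (arbitrary for Byzantine $w$). Each correct node sets its \emph{candidate} $y_v$ to the unique value occurring at least $n-f$ times among the $x_w$, or $\bot$ if no such value exists; uniqueness holds because any two sets of $n-f \geq 2f+1$ nodes share at least one correct member. A second bit-serial phase of $\lceil \log L \rceil + 1$ rounds broadcasts each $y_v$ (with a flag for $\bot$). Then each correct $v$ sets its binary-consensus input $b_v := 1$ iff at least $f+1$ of the received candidates coincide in some non-$\bot$ value $c_v$, and $b_v := 0$ otherwise. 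Finally the nodes invoke $\vec{C}$ on the $b_v$ and output $c_v$ if the result is $1$, a fixed default value otherwise.

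\textbf{Correctness.} Two lemmas drive the proof. First, if correct $v, v'$ both hold non-$\bot$ candidates $y_v = c$ and $y_{v'} = c'$, then $c = c'$: their two witnessing sets of $n-f$ senders overlap in at least $n-2f \geq f+1$ nodes, at least one of which is correct and sent the same value to both. Second, if any correct node sets $b_v = 1$ with recorded $c_v = c$, then every correct $v'$ either has $b_{v'} = 0$ or also records $c_{v'} = c$; this follows from applying the first lemma to the candidates broadcast in the second phase. Combined with agreement of $\vec{C}$ this yields agreement on the final output. For validity of the whole protocol, if all correct inputs equal $x$ then each correct $v$ receives $x$ from all $\geq n-f$ correct senders in both phases, obtaining $y_v = x$ and $b_v = 1$ with $c_v = x$, so by validity of $\vec{C}$ the result is $1$ and every correct node outputs $x$.

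\textbf{Accounting and main obstacle.} The two broadcast phases contribute $2\lceil \log L \rceil + O(1) = O(\log L)$ rounds of $1$-bit messages; the invocation of $\vec{C}$ contributes $T(\vec{C})$ rounds of $M(\vec{C})$-bit messages, for totals of $O(\log L) + T(\vec{C})$ rounds and $\max\{1, M(\vec{C})\} = M(\vec{C})$ bits per message. The main care is needed in the coupling argument underlying the second lemma: one must check that bit-serial delivery affords Byzantine nodes no more freedom than a one-shot broadcast (a correct sender still emits an identical bit stream to every recipient), so the intersection argument carries over unchanged, and that the threshold $f+1$ defining $b_v$ is low enough for validity to be preserved yet high enough to forbid two correct nodes from simultaneously voting $1$ with conflicting $c_v$.
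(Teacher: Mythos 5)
Your overall plan---a bit-serial Turpin--Coan reduction feeding a single binary consensus instance---is the right kind of argument (the paper itself does not prove this theorem but cites it from \cite{lenzen13time}), but the threshold $f+1$ you use to define $b_v$ and $c_v$ leaves a genuine gap in the agreement argument. Your second lemma only controls correct nodes that vote $1$; it says nothing about a correct node $v'$ with $b_{v'}=0$ in the case where $\vec C$ nevertheless outputs $1$, which validity of $\vec C$ does not rule out when inputs are mixed. Such a $v'$ has no well-defined $c_{v'}$ and cannot recover $c$. Concretely, take $n=3f+1$, let $f+1$ correct nodes have input $c$ and the remaining correct nodes input $c''$. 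In the first phase the faulty nodes send $c$ only to one correct node $u$, so $u$ sees $c$ exactly $n-f$ times and adopts candidate $c$, while every other correct node adopts $\bot$. In the second phase the faulty nodes send candidate $c$ to a single correct node $v$ (which then sees $c$ exactly $f+1$ times and sets $b_v=1$, $c_v=c$) and some $c'\neq c$ to everyone else (who see $c$ once and $c'$ only $f$ times, hence set $b=0$ with no recoverable candidate). If $\vec C$ now outputs $1$, node $v$ outputs $c$ while the others must fall back to the default (or could even be fooled into $c'$), violating agreement.

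The standard fix is to use the Turpin--Coan thresholds: set $b_v=1$ only if the same non-$\bot$ candidate is received at least $n-f$ times, and define $c_v$ for \emph{every} node as the non-$\bot$ candidate received most often. Then any correct node with $b_v=1$ certifies that at least $n-2f\ge f+1$ correct nodes hold candidate $c$; by your first lemma these are the only non-$\bot$ candidates held by correct nodes, so every correct node receives $c$ more than $f$ times and any other value at most $f$ times, making $c_v=c$ uniform across all correct nodes whether or not they voted $1$. With that change the remainder of your argument---the intersection lemma for the first phase, the observation that bit-serial delivery gives Byzantine senders no extra power, and the round and message-size accounting yielding $O(\log L)+T(\vec C)$ rounds with $M(\vec C)$-bit messages---goes through.
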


\section{Synchronous counting and firing squads}\label{sec:firing-squad}

In this section, we give a firing squad algorithm with asymptotically optimal stabilisation and response times. The algorithm relies on two auxiliary routines: a so-called strong pulser and a consensus algorithm. We start with a discussion on strong pulsers.

\subsection{Strong pulsers and counting}

Our approach to the firing squad problem is to solve it by repeated consensus, which in turn is controlled by a joint round counter. To minimise message size, however, we will not communicate counter values directly. Instead we make use of what we call a \emph{strong pulser}.
\begin{definition}[Strong pulser]
An algorithm $\vec P$ is an $f$-resilient strong $\Psi$-pulser that stabilises in $T(\vec P)$ rounds if it satisfies the following conditions in the presence of at most $f$ faulty nodes. Each node $v \in V$ produces an output bit $p(v,t) \in \{0,1\}$ on each round $t \in \bN$. We say that $v$ generates a pulse in round $t$ if $p(v,t)=1$ holds. We require that there is a round $t_0 \leq T(\vec P)$ such that:
\begin{enumerate}[label=S\arabic*.,noitemsep]
 \item For any $v\in V\setminus F$ and round $t=t_0+k\Psi$, where $k \in \bN_0$, it holds that $p(v,t)=1$.
 \item For any $v\in V\setminus F$ and round $t\geq t_0$ satisfying $t\neq t_0+k\Psi$ for $k \in \bN_0$, we have $p(v,t)=0$.
\end{enumerate}
\end{definition}
\begin{figure}
 \begin{center}
 \includegraphics[page=3,scale=1.0]{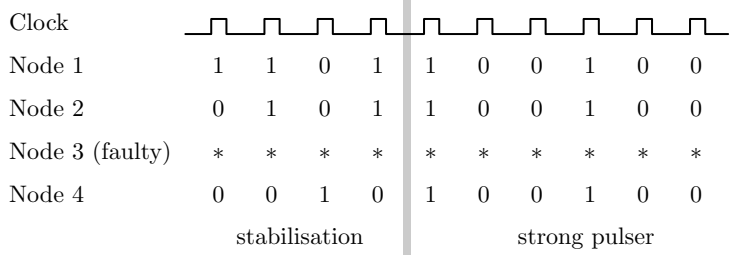}
\end{center}
\caption{An example execution of a strong 3-pulser on $n=4$ nodes with $f=1$ faulty node.\label{fig:strong-pulser}}
\end{figure}
Put otherwise, a strong $\Psi$-pulser consistently generates pulses at all non-faulty nodes exactly every $\Psi$ rounds. \figureref{fig:strong-pulser} illustrates an execution of a strong pulser with $\Psi = 3$. It is straightforward to see that strong pulsers and synchronous counting are almost equivalent.

\begin{lemma}\label{lemma:strong_pulsers_equal_counters}
Let $C \in \bN$ and $\Psi \in \bN$. If $C$ divides $\Psi$, then a strong $\Psi$-pulser that stabilises in $T$ rounds implies a synchronous $C$-counter that stabilises in at most $T$ rounds. If $\Psi$ divides $C$, then a synchronous $C$-counter that stabilises in $T$ rounds implies a strong $\Psi$-pulser that stabilises in at most $T+\Psi-1$ rounds.
\end{lemma}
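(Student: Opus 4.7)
The plan is to establish both implications by simple local transformations: in one direction, use the pulses to reset a mod-$C$ counter; in the other, use the counter's residue mod $\Psi$ to define the pulse.

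For the first direction, suppose $\vec P$ is a strong $\Psi$-pulser with $C \mid \Psi$. Each node $v$ maintains a local counter by setting $c(v,t) := 0$ whenever $p(v,t) = 1$ and $c(v,t+1) := c(v,t) + 1 \bmod C$ otherwise (with an arbitrary initial value). Let $t_0 \leq T$ be the round guaranteed by the pulser definition. At round $t_0$, every correct node sees $p(v,t_0)=1$ and thus sets $c(v,t_0) = 0$, so SC1 holds at round $t_0$. For any later round $t' > t_0$, write $t' - t_0 = k\Psi + r$ with $0 \leq r < \Psi$; since $C \mid \Psi$, the reset at round $t_0 + k\Psi$ is consistent with $c(v,t_0 + k\Psi) = 0 \equiv (t' - t_0) - r \pmod{C}$, so the increments agree across correct nodes and both SC1 and SC2 hold from round $t_0$ onward. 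Thus the counter stabilises by round $t_0 \leq T$.

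For the second direction, suppose we have a synchronous $C$-counter stabilising in round $T$ with $\Psi \mid C$. Each node $v$ outputs $p(v,t) := 1$ if $c(v,t) \equiv 0 \pmod \Psi$ and $p(v,t) := 0$ otherwise. Let $t_0$ be the smallest round with $t_0 \geq T$ and $c(v,t_0) \equiv 0 \pmod \Psi$ for some (equivalently, every) correct $v$. Since the counter increments by one per round, such a round exists within $\Psi$ rounds of $T$, so $t_0 \leq T + \Psi - 1$. For $t \geq t_0$, by agreement and consistency together with $\Psi \mid C$, we have $c(v,t) \equiv 0 \pmod \Psi$ if and only if $t \equiv t_0 \pmod \Psi$, so S1 and S2 follow directly.

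There is no real obstacle here; the only subtlety is justifying that $C \mid \Psi$ (respectively $\Psi \mid C$) is what ensures the reset-and-increment scheme (respectively residue readout) is globally consistent across successive pulses (respectively successive cycles of the counter). The additive $\Psi - 1$ term in the second direction is exactly the worst-case wait until the stabilised counter first hits a multiple of $\Psi$.
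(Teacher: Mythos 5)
Your proposal is correct and uses exactly the same constructions as the paper (reset the counter to $0$ on each pulse and increment mod $C$ otherwise; emit a pulse whenever the counter is $0$ bmod $\Psi$); the paper's proof is just a terser statement of these two transformations, and your added verification of the divisibility conditions and the $T+\Psi-1$ bound is accurate.
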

\begin{proof}
For the first claim, set $c(v,t)=0$ in any round $t$ for which $p(v,t)=1$ and $c(v,t)=c(v,t-1)+1\bmod C$ in all other rounds. For the second claim, set $p(v,t)=1$ in all rounds $t$ in which $c(v,t)\bmod \Psi = 0$ and $p(v,t)=0$ in all other rounds.
\end{proof}
Another way of interpreting this relation is to view a strong $\Psi$-pulser as a different encoding of the output of a $\Psi$-counter: since the system is synchronous, it suffices to communicate when the counter overflows to value $0$ and otherwise count locally. This saves bandwidth when communicating the state of the counter.

\subsection{Firing squads via pulsers and consensus}\label{ssec:counting-squad}

We now show how an $f$-resilient strong pulser and $f$-resilient binary consensus algorithm can be used to devise an $f$-resilient firing squad algorithm. As a strong pulser can be used to control repeated execution of a non-self-stabilising algorithm, it enables us to run consensus on whether a firing event should be triggered or not repeatedly. As the firing squad problem is at least as hard as consensus, this maintains asymptotically optimal round complexity. 

Recall that for the \emph{Byzantine} firing squad problem, we are interested in a liveness condition in which a firing event needs to be generated if at least $f+1$ non-faulty nodes $v \in V \setminus F$ recently saw $\sgo(v,t)=1$ on some round $t$. To this end, we have each node continuously inform all other nodes about its \sgo\ values (i.e.\ their received input signals). Whenever node $v \in V$ sees $f+1$ nodes $w \in V$ claim $\sgo(w,t)=1$, it will memorise this and use input $x(v)=1$ for the next consensus instance. Otherwise, it will use the input value $x(v)=0$; this ensures that at least one non-faulty node $w$ had $\sgo(w,t)=1$ recently in case $v$ uses input $x(v)=1$. The validity condition of the (arbitrary) $T(\vec C)$-round consensus routine $\vec C$ thus ensures both liveness and safety for the resulting firing squad algorithm. Apart from $\vec C$, the algorithm concurrently runs a strong $\Psi$-pulser $\vec P$ for some $\Psi> T(\vec C)$.

\paragraph{The firing squad algorithm.}
Given a strong $\Psi$-pulser algorithm $\vec P$ and a binary consensus algorithm $\vec C$, each node $v$ stores the following variables on every round $t$:
\begin{itemize}[noitemsep]
 \item $p(v,t) \in \{0,1\}$, the output variable of $\vec P$, 
 \item $x(v,t) \in \{0,1\}$ and $y(v,t) \in \{0,1\}$, the input and output variables of $\vec C$, and 
 \item $m(v,t) \in \{0,1\}$, an auxiliary variable used to memorise whether sufficiently many \sgo\ signals were received to warrant a firing event.
\end{itemize}
In the following algorithm, on each round $t \in \bN$ any (correct) node $v \in V$ will broadcast the value $\sgo(v,t)$ and receive the values $\sgo(v,w,t-1)$  sent by every $w \in V$ in the previous round. The algorithm consists of each node $v$ executing the following operations\footnote{For better readability, we allow for statements about what a node communicates appearing anywhere in the description. Note, however, that sending operations happen after local computation, i.e., only information sent in the previous rounds is available for computations.}
in each round $t\in \bN$:
\begin{enumerate}[noitemsep]
 \item Broadcast $\sgo(v,t)$.
 \item If received at least $f+1$ nodes $w \in V$ sent $\sgo(v,w,t-1)=1$, then set $x(v,t) = 1$ and $m(v,t) = 1$. Otherwise, set $x(v,t) = x(v,t-1)$ and $m(v,t) = m(v,t-1)$.
 \item If $p(v,t)=1$, start executing a new instance of $\vec C$ using the value $x(v,t)$ as input and set $m(v,t) = 0$ while aborting any previously running instance. More specifically, this entails the following:
 \begin{itemize}[noitemsep]
   \item Maintain a local round counter $r$, which is initialised to $1$ on round $t$ and increased by $1$ after each round.
   \item Maintain the local state variables related to the consensus routine $\vec C$.
   \item On each round, execute round $r$ of algorithm $\vec C$; if the state variables indicate that $\vec C$ terminated at $v$, then do nothing.
   \item On the round when $r$ would attain the value $T(\vec C)+1$, stop the simulation (indicating this, e.g., by setting $r(v) = \bot$) and locally output the value of $y(v)$ computed by the simulation of $\vec C$.
 \end{itemize}
 \item If $\vec C$ outputs $y(v,t)=1$ on round $t$, then output $\sfire(v,t)=1$ and set $x(v,t) = 0$. \\ Otherwise, set $\sfire(v,t)=0$.
 \item If $\vec C$ outputs $y(v,t)=0$ on round $t$ and $m(v,t)=0$, then set $x(v,t) = 0$.
\end{enumerate}

We now show that the above algorithm satisfies the properties required from a self-stabilising firing squad.

\begin{theorem}\label{thm:squad}
Suppose there exists an $f$-resilient strong $\Psi$-pulser $\vec P$ and a consensus algorithm $\vec C$, where $\Psi > T(\vec C)$. Then there exists an $f$-resilient firing squad algorithm $\vec F$ that 
\begin{itemize}[noitemsep]
 \item stabilises in time $T(\vec F) \le T(\vec P) + \Psi$,
 \item has response time $R(\vec F) \le \Psi + T(\vec C)$, and
 \item uses message of size $M(\vec F) \le M(\vec P) + M(\vec C) + 1$ bits.
\end{itemize}
\end{theorem}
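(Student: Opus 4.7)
The message-size bound is immediate: in every round each correct node transmits its pulser message, its consensus message, and the single bit $\sgo(v,t)$, giving $M(\vec F) \le M(\vec P) + M(\vec C) + 1$. For the remaining bounds I fix the round $t_0 \le T(\vec P)$ at which $\vec P$ stabilises, so that pulses occur exactly at the rounds $t_0 + k\Psi$ ($k \in \bN_0$) at every correct node. Since $\Psi > T(\vec C)$, the fresh consensus instance started by Step~3 at each pulse completes at round $t_0 + k\Psi + T(\vec C)$, strictly before the next pulse. I will show stabilisation by round $t^* := t_0 + \Psi \le T(\vec P) + \Psi$. Agreement on $\sfire$ from $t^*$ onwards then follows directly from the agreement property of $\vec C$: from $t_0$ onwards, all correct nodes abort any stale instance and subsequently start and finish fresh instances in lock-step, so Step~4 produces identical $\sfire$ values at completion rounds, while $\sfire = 0$ on all other rounds.

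For Safety together with the response-time bound, the key structural facts are that $x(v,\cdot)$ can flip from $0$ to $1$ only via Step~2, whose $f+1$-claim threshold forces at least one correct witness $w$ with $\sgo(w,\cdot) = 1$, whereas $x$ is reset only at consensus completion rounds via Step~4 (a fire, which by Agreement resets $x$ simultaneously at all correct nodes) or Step~5 (requiring $y = 0$ and $m = 0$). The step I expect to be hardest is showing, given any $\sfire(v, t_F) = 1$ with $t_F \ge t^*$, a suitable $\sgo$ witness. I will write $t_F = t_0 + k\Psi + T(\vec C)$ for some $k \ge 1$, apply validity of $\vec C$ to obtain a correct $v'$ with $x(v', t_P) = 1$ at $t_P := t_F - T(\vec C)$, and analyse the consensus completing at $t_C := t_0 + (k-1)\Psi + T(\vec C)$ (which is the cleanup consensus when $k = 1$). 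Either $x(v',\cdot)$ was reset at $t_C$, in which case $x(v', t_P) = 1$ forces a subsequent Step~2 trigger at $v'$ and hence a witness $t_G \in [t_C, t_P - 1]$; or $x(v',\cdot)$ was not reset at $t_C$, forcing $y = 0$ and $m(v', t_C) = 1$, which (since $m$ was reset to $0$ at pulse $t_0 + (k-1)\Psi$) implies a Step~2 trigger at $v'$ in $[t_0 + (k-1)\Psi + 1, t_C]$ and thus a witness $t_G \in [t_0 + (k-1)\Psi, t_C - 1]$. In both cases $t_G \in [t_F - R, t_F - 1]$ for $R = \Psi + T(\vec C)$, yielding the response-time condition; and no fire occurs at $v$ in $(t_G, t_F)$, because the only intermediate consensus completion is at $t_C$, which is either excluded from $(t_G, t_F)$ (first case) or does not fire (second case). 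The delicate point is confirming that this case analysis correctly handles the fully adversarial initialisation of $x$ and $m$, which is precisely what the cleanup consensus for $k = 1$ achieves.

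For Liveness, assume at least $f+1$ correct nodes have $\sgo(\cdot, t_G) = 1$ for some $t_G \ge t^*$. In round $t_G + 1$ every correct node receives at least $f+1$ matching claims, so Step~2 sets $x = m = 1$ at all correct nodes; with $m = 1$, Step~5 cannot reset $x$, and Step~2 cannot decrease it, so $x$ stays at $1$ at every correct node until a fire occurs. Let $t_P$ be the first pulse strictly after $t_G$, so $t_P \le t_G + \Psi$. Either a fire already occurred in $(t_G, t_P]$, in which case Agreement yields simultaneous fires at all correct $v$ by round $t_P \le t_G + R$; or every correct input to the consensus started at $t_P$ equals $1$, so validity of $\vec C$ forces $y = 1$ and $\sfire(v, t_P + T(\vec C)) = 1$ for all correct $v$ by round $t_G + \Psi + T(\vec C) = t_G + R$. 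This establishes Liveness with response time $R(\vec F) \le \Psi + T(\vec C)$.
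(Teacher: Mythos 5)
Your proposal is correct and follows essentially the same route as the paper: the same algorithm, the same decomposition into agreement, safety, and liveness, and the same key mechanism (validity of $\vec C$ combined with the $m$-flag being reset at each pulse and $x$ being reset only at consensus-completion rounds), with your safety and liveness arguments being direct case analyses where the paper argues by contradiction. No gaps; the response-time and stabilisation bounds are derived exactly as in the paper.
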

\begin{proof}
Let $\vec F$ be the algorithm described above. We now argue that the algorithm satisfies the three properties given in \sectionref{ssec:firing}: (FS1) agreement, (FS2) safety, and (FS3) liveness. We will show that the algorithm has a response time bounded by $R = T(\vec C) + \Psi$.

(FS1) Denote by $t_0\leq T(\vec P)$ the round in which the execution of the strong $\Psi$-pulser $\vec P$ has stabilised and generated a pulse. That is, for rounds $t \geq t_0$ we have that $p(v,t)=1$ is equivalent to $t = t_0+k\Psi$ for some $k \in \bN_0$. This implies that the algorithm will correctly simulate instances of the consensus routine $\vec C$ and locally output its decision on rounds $r_k = t_0+T(\vec C)+k\Psi < t_{k+1}$ for $k\in \bN_0$. The agreement property of the firing squad thus follows from the agreement property of consensus for all rounds $t\geq t_0$, as $\sfire(v,t)=1$ if and only if $t = r_k$ and the simulation of $\vec C$ output the value $y(v,t)=1$ in Step 4.

(FS2) Concerning safety, suppose $v\in V\setminus F$ outputs $\sfire(v,t_F)=1$ in round $t_F \ge t_1+T(\vec C)$. By the above discussion and the validity property of consensus, this implies that there was some node $w \in V \setminus F$ that started a (successfully and completely simulated) instance of $\vec C$ with input $x(w,t_k)=1$ in round $t_k=t_F-T(\vec C)=t_0+k\Psi$ and that $t_F = r_k$ for some $k\in \bN$. Assume for contradiction that there are no $u\in V\setminus F$ and $t_G \in \{ t_{k-1}, \ldots, t_k - 1\}$ satisfying $\sgo(u,t_G)=1$. Then, $w$ does not set $x(w,t')$ or $m(w,t')$ to $1$ in rounds $t' \in \{ t_{k-1}+1, \ldots, t_k \}$ in Step 2. However, in round $t_{k-1} = t_F-T(\vec C)-\Psi=t_0+(k-1)\Psi$ node $w$ set $m(w,t_{k-1})=0$ (by Step 3) and thus $w$ sets $x(w,r_{k-1}) = 0$ later in round $r_{k-1}$ (by Steps 4 and 5), the round in which the previous instance of $\vec C$ locally output some value. This contradicts the fact that $x(w,t_k)=1$ is set in round $t_k$. Hence, there must be $u\in V\setminus F$ and $t_G \in \{ t_{k-1}, \ldots, t_k - 1 \}$ such that $\sgo(u,t_G)=1$.

Recall that the above claimed existence of $u\in V\setminus F$ and $t_G$ such that $\sgo(u,t_G)=1$ is necessary for the safety condition to hold, but not sufficient. 
It is also required that $\sfire(v,t')=0$ for all $t'\in \{t_G+1,\ldots,t_F-1\}$. To show this, observe that the time $t_G$ shown to exist by the above reasoning does not satisfy this additional constraint if and only if some instance of $\vec C$ locally outputs $y(v,t')=1$ at node $v$ in such a round $t'$. The only possible such round $t'$ is $r_{k-1}$, as $t' \ge t_G + 1 > t_{k-1} > r_{k-2}$. However, in this case, each $w\in V\setminus F$ sets $x(w,r_{k-1})=0$ in round $r_{k-1}$ regardless of $m(w,r_{k-1})$ in Step 4, and we can conclude that some $w\in V\setminus F$ must set $x(w,t'')=1$ in some round $t'' \in \{r_{k-1} +1,\ldots, t_k \}$. As above, it follows that there is a round $t_G\in \{t_{k-1},\ldots,t_k-1 \}$ and a node $u\in V\setminus F$ such that $\sgo(u,t_G)=1$. Overall, we see that the safety condition for a firing squad algorithm with response time 
\[
t_F - t_G \le r_k - t_{k-1} = t_k + T(\vec C) - t_{k-1} = \Psi + T(\vec C)  = R
\]
is satisfied in rounds $t_F \ge r_1 = t_1+T(\vec C)$.

(FS3) It remains to argue that the algorithm satisfies the liveness property with response time bounded by $R$. Suppose at least $f+1$ nodes $v\in V\setminus F$ satisfy $\sgo(v,t_G)=1$ in some round $t_G\geq t_0-1$. Then, in round $t_G+1\geq t_0$ all nodes $v\in V\setminus F$ set $x(v,t_G+1)=1$ and $m(v, t_G+1)=1$ according to Step 2. Assume for contradiction that $\sfire(v,t)=0$ for all $t\in \{t_G+1,\ldots,t_G+R\}$. Denote by $t_G+1 \le t_k \leq t_G+\Psi$ the unique round such that $t_k=t_0+k\Psi$ for some $k\in \bN_0$. The instance of $\vec C$ started in this round will satisfy that all correct nodes $v\in V\setminus F$ have input $x(v,t_k)=1$: by our assumption towards contradiction, no node can locally output $y(v,t')=1$ during rounds $t' \in \{ t_G+1, \ldots, t_G + R \}$; thus, no node can set $x(v,\cdot)$ to $0$ without setting $m(v,\cdot)$ to $0$ first (by Step 3 and Step 5), which in turn entails that at time $t_k$ an instance of $\vec C$ with value of $x(v,t_k)=1$ is started before this happens. By the properties of $\vec C$, it follows that each $v\in V\setminus F$ locally outputs $1$ in round $r_k = t_k+T(\vec C) \leq t_G+\Psi+T(\vec C) \le t_G + R$, contradicting our previous assumption. We conclude that our algorithm satisfies the liveness property with response time $R=\Psi+T(\vec C)$ for rounds $t_G\geq t_0-1$. 

As $t_0\leq T(\vec P)$, it follows that the algorithm satisfies (FS1) agreement after round $t_0$, (FS2) safety after round $t_1$, and (FS3) liveness after round $t_0-1$. Since $t_1 = t_0 + \Psi \le T(\vec P) + \Psi$, it follows that the algorithm is a firing squad with response time at most $R = \Psi+T(\vec C)$ that stabilises in $\max\{T(\vec P), T(\vec P) + \Psi , T(\vec P)-1 \}= T(\vec P) +\Psi$ rounds. The bound on the message size follows from the fact that the algorithm $\vec F$ only broadcasts 1 bit in Step 1 in addition to the messages related to $\vec P$ and $\vec C$.
\end{proof}

\section{From weak pulsers to strong pulsers}\label{sec:strong-pulsers}

In \sectionref{sec:firing-squad}, we established that it suffices to construct suitable strong pulsers to solve the synchronous counting and firing squad problems. We will now reduce the construction of strong pulsers to constructing \emph{weak pulsers}.

\subsection{Weak pulsers}

A weak $\Phi$-pulser is similar to a strong pulser, but does not guarantee a fixed frequency of pulses. However, it guarantees to \emph{eventually} generate a pulse followed by $\Phi-1$ rounds of silence. Formally, we define weak pulsers as follows.

\begin{definition}[Weak pulsers]\label{def:weak}
An algorithm $\vec W$ is an $f$-resilient weak $\Phi$-pulser that stabilises in $T(\vec W)$ rounds if the following holds. In each round $t\in \bN$, each node $v\in V$ produces an output $a(v,t)$. Moreover, there exists a round $t_0\leq T(\vec W)$ such that
\begin{enumerate}[label=W\arabic*.,noitemsep]
 \item for all $v,w\in V\setminus F$ and all rounds $t\geq t_0$, $a(v,t)=a(w,t)$, 
 \item $a(v,t_0)=1$ for all $v \in V \setminus F$, and
 \item $a(v,t)=0$ for all $v \in V \setminus F$ and $t\in \{t_0+1,\ldots,t_0+\Phi-1\}$.
\end{enumerate}
We say that on round $t_0$ a \emph{good} pulse is generated by $\vec W$.
\end{definition}
\figureref{fig:weak-pulser} illustrates a weak $4$-pulser. Note that while the definition formally only asks for one good pulse, the fact that the algorithm guarantees this property for any starting state implies that there is a good pulse at least every $T(\vec W)$ rounds. 

\begin{figure}
 \begin{center}
 \includegraphics[page=4,scale=1.0]{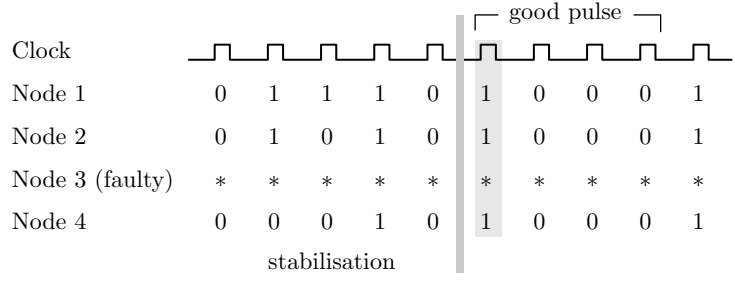}
\end{center}
\caption{An example execution of a weak 4-pulser on $n=4$ nodes with $f=1$ faulty node. Eventually, a good pulse is generated, which is highlighted. A good pulse is followed by three rounds in which no correct node generates a pulse. In contrast, the pulse two rounds earlier is not good, as it is followed by only one round of silence. \label{fig:weak-pulser}}
\end{figure}

\subsection{Constructing strong pulsers from weak pulsers}

Recall that a strong pulser can be obtained by having nodes locally count down the rounds until the next pulse, provided we have a way of ensuring that the local counters eventually agree. This can be achieved by using a weak pulser to control a suitable consensus routine, where again we always have only a single instance running at any time. While some instances will be aborted before they can complete, this will not affect the counters, as we only adjust them when the consensus routine completes. On the other hand, the weak pulser guarantees that within $T(\vec W)$ rounds, there will be a pulse followed by $\Phi-1$ rounds of silence, enabling to complete a run of any consensus routine $\vec C$ satisfying $T(\vec C)\leq \Phi$. Thus, for constructing a strong $\Psi$-pulser, we assume that we have the following $f$-resilient algorithms available:
\begin{itemize}[noitemsep]
 \item a $T(\vec C)$-round $\Psi$-value consensus algorithm $\vec C$ and
 \item a weak $\Phi$-pulser $\vec W$ for $\Phi \ge T(\vec C)$.
\end{itemize}

Given the above two algorithms, we show how to construct an $f$-resilient strong $\Psi$-pulser for any $\Psi > 1$. The pulser will stabilise in time $T(\vec W) + T(\vec C) + \Psi$ and the message size of the strong pulser will be bounded by $M(\vec W) + M(\vec C)$.

As mentioned earlier, the idea is to have nodes simply count locally between pulses and use the weak pulser to execute a single instance of the consensus algorithm $\vec C$. Eventually, a good pulse will run an instance consistently and establish agreement among the local counters. Leveraging validity, we can ensure that the counters will never be affected by the consensus routine running in the background again.

\paragraph{Variables.}
Beside the variables of the weak pulser $\vec W$ and (a single copy of) $\vec C$, our construction of a strong $\Psi$-pulser uses the following local variables:
\begin{itemize}[noitemsep]
 \item $a(v,t) \in \{0,1\}$ is the output variable of the weak $\Phi$-pulser $\vec W$,
 \item $b(v,t) \in \{0,1\}$ is the output variable of the strong $\Psi$-pulser we are constructing,
 \item $c(v,t) \in [\Psi]$ is the local counter keeping track on when the next pulse occurs, and
 \item $d(v,t) \in \{1,\ldots,T(\vec C)\} \cup \{ \bot \} $ keeps track of how many rounds an instance of $\vec C$ has been executed since the last pulse from the weak pulser $\vec W$. The value $\bot$ denotes that the consensus routine has stopped.
\end{itemize}
\paragraph{Strong pulser algorithm.}
The algorithm is as follows. Each node $v$ executes the weak $\Phi$-pulser algorithm~$\vec W$ in addition to the following instructions on each round $t \in \bN$:
\begin{enumerate}[noitemsep]
 \item If $c(v,t) = 0$, then set $b(v,t)=1$ and otherwise $b(v,t)=0$.
 \item Set $c'(v,t) = c(v,t)$.
 \item If $d(v,t) \neq \bot$, then 
 \begin{enumerate}
  \item Execute the instructions of $\vec C$ for round $d(v,t)$.
  \item If $d(v,t)\neq T(\vec C)$, set $d(v,t+1) = d(v,t) + 1$.
  \item If $d(v,t)=T(\vec C)$, then
  \begin{enumerate}
    \item Set $c'(v,t) = y(v,t) + T(\vec C) \bmod \Psi$, where $y(v,t)$ is the output value of $\vec C$.
    \item Set $d(v,t+1)=\bot$.
  \end{enumerate}
 \end{enumerate}
 \item Update $c(v,t+1) = c'(v,t) + 1 \bmod \Psi$.
 \item If $a(v,t)=1$, then 
 \begin{enumerate}
  \item Start a new instance of $\vec C$ using $c'(v,t)$ as input (resetting all state variables of $\vec C$).
  \item Set $d(v,t+1)=1$.
 \end{enumerate}
\end{enumerate}

In the above algorithm, the first step simply translates the counter value to the output of the strong pulser. We then use a temporary variable $c'(v,t)$ to hold the counter value, which is overwritten by the output of $\vec C$ (increased by $T(\vec C)\bmod \Psi$) if it completes a run in this round. In either case, the counter value needs to be increased by $1\bmod \Psi$ for the next round. The remaining code does the bookkeeping for an ongoing run of $\vec C$ and starting a new run if the weak pulser generates a pulse.

Observe that in the above algorithm, each node only sends messages related to the weak pulser $\vec W$ and the consensus algorithm $\vec C$. Thus, there is no additional overhead in communication and the message size is bounded by $M(\vec W) + M(\vec C)$. Hence, it remains to show that the local counters $c(v,t)$ implement a strong $\Psi$-counter.

\begin{theorem}\label{thm:strong-agreement}
The variables $c(v,t)$ in the above algorithm implement a synchronous $\Psi$-counter that stabilises in $T(\vec W) + T(\vec C) + 1$ rounds and uses messages of at most $M(\vec W) + M(\vec C)$ bits.
\end{theorem}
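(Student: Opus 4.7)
My plan is to verify the message-size bound directly, establish initial agreement at the first good pulse of $\vec W$ (giving a base case at some round $\le T(\vec W) + T(\vec C) + 1$), and then maintain agreement and the counting relation by induction. The message size bound is immediate: the construction adds no new communication beyond what is sent by $\vec W$ and by the (at most one) concurrent instance of $\vec C$, so at most $M(\vec W) + M(\vec C)$ bits per round.

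For the base case, I would invoke \defref{def:weak} to obtain $t_0 \le T(\vec W)$ with $a(v, t_0) = 1$ at every correct node and $a(v, t) = 0$ throughout $\{t_0 + 1, \ldots, t_0 + \Phi - 1\}$. Step 5 then starts a fresh $\vec C$-instance synchronously at every correct node in round $t_0$ (possibly with disagreeing inputs), and since $\Phi \ge T(\vec C)$ no subsequent pulse interferes, so $d(v, t_0 + k) = k$ for $k \in \{1, \ldots, T(\vec C)\}$ and step 3(c) fires simultaneously at every correct node in round $t_0 + T(\vec C)$. The agreement property of $\vec C$ ensures a common output $y$ at that point, and step 4 then sets $c(v, t_0 + T(\vec C) + 1) = y + T(\vec C) + 1 \bmod \Psi$ at every correct node.

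For the inductive step, I assume agreement on $c(v, t)$ (and on $a(v, t)$, since $\vec W$ has stabilised) at some $t \ge t_0 + T(\vec C) + 1$. If step 3(c) does not fire at round $t$, then $c'(v, t) = c(v, t)$ and step 4 trivially preserves agreement and the relation $c(v, t+1) = c(v, t) + 1 \bmod \Psi$. Otherwise, the completing $\vec C$-instance was started at $t_s = t - T(\vec C)$ with common input $c'(v, t_s)$, and no pulse occurred during $\{t_s + 1, \ldots, t - 1\}$ (else step 5 would have reset $d$), so step 3(c) also did not fire during those rounds. Iterating step 4 therefore yields $c(v, t) = c'(v, t_s) + T(\vec C) \bmod \Psi$; validity of $\vec C$ forces $y(v, t) = c'(v, t_s)$; whence $c'(v, t) = c(v, t)$ after step 3(c), and $c(v, t+1) = c(v, t) + 1 \bmod \Psi$ after step 4.

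The main obstacle, as foreshadowed above, is the bookkeeping of $c$, $c'$, $d$, and the ongoing $\vec C$-instance during the inductive step: one must carefully argue that validity of $\vec C$ guarantees the output of a completing instance equals the counter value ordinary incrementing would have produced, so that the overwrite in step 3(c) is invisible to the counter. Once this invariant is in place, combining $t_0 \le T(\vec W)$ with the base case yields the claimed stabilisation time $T(\vec W) + T(\vec C) + 1$.
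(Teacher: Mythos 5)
Your proposal is correct and follows essentially the same route as the paper's proof: establish agreement on the counter via the agreement property of $\vec C$ at the first good pulse of $\vec W$ (round $t_0+T(\vec C)$), and then use validity of $\vec C$ to show that every later completion of an instance overwrites $c'$ with exactly the value local incrementing would have produced, so the counter is never disturbed. The paper phrases this second part as a minimal-counterexample contradiction rather than a forward induction, and makes explicit the small point you leave implicit (that any instance completing after $t_0+T(\vec C)$ must have been started at a pulse round $\ge t_0+\Phi\ge t_0+T(\vec C)$, so its inputs are already in agreement), but these are cosmetic differences.
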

\begin{proof}
Suppose round $t_0\leq T(\vec W)$ is as in \defref{def:weak}, that is, $a(v,t)=a(w,t)$ for all $t\geq t_0$, and a good pulse is generated in round $t_0$. Thus, all correct nodes participate in simulating an instance of $\vec C$ during rounds $t_0+1,\ldots,t_0+T(\vec C)$, since no pulse is generated during rounds $t_0+1,\ldots,t_0+T(\vec C)-1$, and thus, also no new instance is started in the last step of the code during these rounds. 

By the agreement property of the consensus routine, it follows that $c'(v,t_0+T(\vec C))=c'(w,t_0+T(\vec C))$ for all $v,w\in V\setminus F$ after Step 3ci. By Steps 2 and 4, the same will hold for both $c(\cdot,t')$ and $c'(\cdot,t')$, $t'>t_0+T(\vec C)$, provided that we can show that in rounds $t'>t$, Step 3ci never sets $c'(v,t)$ to a value different than $c(v,t)$ for any $v\in V\setminus F$; as this also implies that $c(v,t'+1)=c(v,t')+1\bmod \Psi$ for all $v\in V\setminus F$ and $t'>t_0+T(\vec C)$, this will complete the proof.

Accordingly, consider any execution of Step 3ci in a round $t'>t_0+T(\vec C)$. The instance of $\vec C$ terminating in this round was started in round $t'-T(\vec C)>t_0$. However, in this round the weak pulser must have generated a pulse, yielding that, in fact, $t'-T(\vec C)\geq t_0+T(\vec C)$. Assuming for contradiction that $t'$ is the earliest round in which the claim is violated, we thus have that $c'(v,t'-T(\vec C))=c'(w,t'-T(\vec C))$ for all $v,w\in V\setminus F$, i.e., all correct nodes used the same input value $c$ for the instance. By the validity property of $\vec C$, this implies that $v\in V\setminus F$ outputs $y(v,t')=c$ in round $t'$ and sets $c'(v,t')=c+T(\vec C)\bmod \Psi$. However, since $t'$ is the earliest round of violation, we already have that $c'(v,t')=c(v,t')=c+T(\vec C)\bmod \Psi$ after the second step, contradicting the assumption and showing that the execution stabilised in round $t_0+T(\vec C)+1\leq T(\vec W)+T(\vec C)+1$.
\end{proof}

Together with \lemmaref{lemma:strong_pulsers_equal_counters}, we get the following corollary.

\begin{corollary}\label{cor:strong-pulsers}
Let $\Psi > 1$. Suppose there exists an $f$-resilient $\Psi$-value consensus routine $\vec C$ and a weak $\Phi$-pulser $\vec W$, where $\Phi \ge T(\vec C)$. Then there exists an $f$-resilient strong $\Psi$-pulser $\vec P$ that
\begin{itemize}[noitemsep]
 \item stabilises in time $T(\vec P) \leq T(\vec C) + T(\vec W) + \Psi$, and
 \item uses message of size at most $M(\vec P) \leq M(\vec C) + M(\vec W)$ bits.
\end{itemize}
\end{corollary}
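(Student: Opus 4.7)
The plan is to simply compose the two preceding results. First I would invoke \theoremref{thm:strong-agreement} to obtain a synchronous $\Psi$-counter, realised by the variables $c(v,t)$ of the algorithm described above, which stabilises in at most $T(\vec W) + T(\vec C) + 1$ rounds and whose messages are bounded by $M(\vec W) + M(\vec C)$ bits. This already delivers the message-size bound claimed by the corollary, because the local transformation that converts a counter into a pulser (below) introduces no additional communication.

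Second, I would feed this $\Psi$-counter into the second half of \lemmaref{lemma:strong_pulsers_equal_counters}, applied with the trivial divisibility $\Psi \mid \Psi$. That lemma turns any synchronous $\Psi$-counter stabilising by round $T$ into a strong $\Psi$-pulser stabilising by round $T + \Psi - 1$ via the purely local rule $p(v,t) = 1 \iff c(v,t) \bmod \Psi = 0$. Plugging in $T = T(\vec W) + T(\vec C) + 1$ gives
\[
T(\vec P) \;\le\; T(\vec W) + T(\vec C) + 1 + (\Psi - 1) \;=\; T(\vec C) + T(\vec W) + \Psi,
\]
matching the claim. Since the pulser-from-counter conversion only reads $c(v,t)$ and writes a local output bit, it contributes nothing to the communication, so $M(\vec P) \le M(\vec C) + M(\vec W)$ as required.

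There is no real obstacle: the substantive work has already been done by \theoremref{thm:strong-agreement} (correctness and agreement of the counter) and by \lemmaref{lemma:strong_pulsers_equal_counters} (the encoding equivalence between counters and pulsers). The corollary is a bookkeeping step, and the only point worth flagging is that the $\Psi - 1$ round slack in the counter-to-pulser conversion is free of communication cost, which is immediate from the local nature of that transformation.
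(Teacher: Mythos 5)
Your proposal is correct and matches the paper's own argument: the paper derives this corollary exactly by combining \theoremref{thm:strong-agreement} (the counter with stabilisation time $T(\vec W)+T(\vec C)+1$ and message size $M(\vec W)+M(\vec C)$) with the second half of \lemmaref{lemma:strong_pulsers_equal_counters} applied with $C=\Psi$, giving $T(\vec P)\le T(\vec W)+T(\vec C)+\Psi$ via the purely local counter-to-pulser conversion. Nothing is missing.
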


\section{Constructing weak pulsers from less resilient strong pulsers}\label{sec:weak-pulsers}

Having seen that we can construct strong pulsers from weak pulsers using a consensus algorithm, the only piece missing in our framework is the existence of efficient weak pulsers. Indeed, having a pair of an $f$-resilient weak pulser and a consensus routine, we immediately obtain a corresponding firing squad algorithm.

In this section, we devise a recursive construction of a weak pulser from strong pulsers of smaller resilience. Given that a $0$-resilient pulser is trivial and that we can obtain strong pulsers from weak ones without losing resilience, this is sufficient for constructing strong pulsers of optimal resilience from consensus algorithms of optimal resilience.

Our approach bears similarity to our constructions from earlier work \cite{lenzen15efficient,lenzen15towards}, but attains better bit complexity and can be used with an arbitrary consensus routine. On a high level, we take the following approach as also illustrated in \figureref{fig:filter-overview}:
\begin{enumerate}
 \item Partition the network into two parts, each running a strong pulser (with small resilience). Our construction guarantees that at least one of the strong pulsers stabilises.
 \item Filtering of pulses generated by the strong pulsers:
 \begin{enumerate}[label=(\alph*)]
 \item Nodes consider the observed pulses generated by the strong pulsers as \emph{potential} pulses.
 \item Since one of the strong pulsers may not stabilise, it may generate \emph{spurious} pulses, that is, pulses that only a subset of the correct nodes observe. 
 \item We limit the \emph{frequency} of the spurious pulses using a filtering mechanism based on threshold voting.
 \end{enumerate}
 \item We enforce any spurious pulse to be observed by all correct nodes by employing a \emph{silent consensus} routine. In silent consensus, no message is sent (by correct nodes) if all correct nodes have input $0$. Thus, if all nodes actually participating in an instance have input $0$, non-participating nodes behave \emph{as if they participated} with input $0$. This avoids the chicken-and-egg problem of having to solve consensus on participation in the consensus routine. We make sure that if any node uses input $1$, i.e., the consensus routine may output 1, all nodes participate. Thus, when a pulse is generated, all correct nodes agree on this.
 \item If a potential pulse generated by one of the pulsers both passes the filtering step and the consensus instance outputs ``1'', then a weak pulse is generated.
\end{enumerate}

\begin{figure}
\begin{center}
 \includegraphics[page=6,scale=1.0]{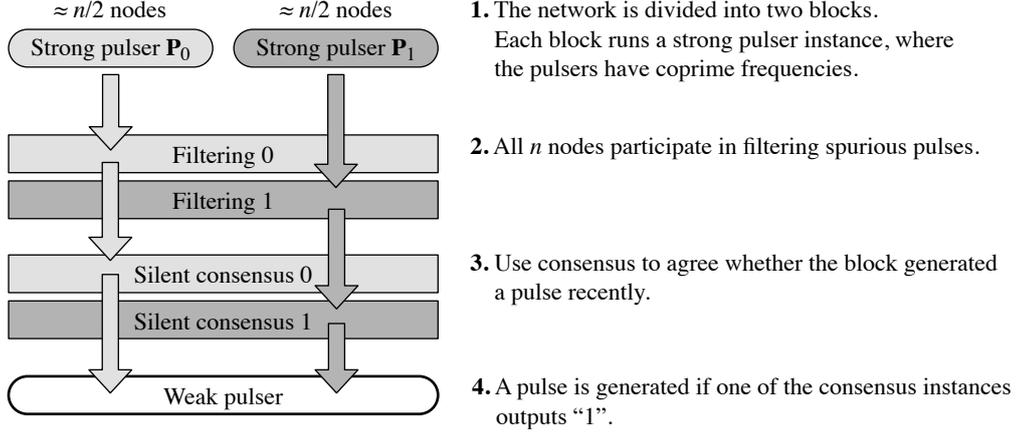}
\end{center}
\caption{Overview of the weak pulser construction. Light and dark grey boxes correspond to steps of block 0 and 1, respectively. The small rounded boxes denote the pulser algorithms $\vec P_i$ that are run (in parallel) on two disjoint sets of roughly $n/2$ nodes, whereas the wide rectangular boxes denote to the filtering steps in which all of the $n$ nodes are employed. The arrows indicate the flow of information for each block.\label{fig:filter-overview}}
\end{figure}

\subsection{The filtering construction}

Our goal is to construct a weak $\Phi$-pulser (for sufficiently large $\Phi$) with resilience $f$. We partition the set of $n$ nodes into two disjoint sets $V_0$ and $V_1$ with $n_0$ and $n_1$ nodes, respectively. Thus, we have $n = n_0 + n_1$. For $i \in \{0,1\}$, let $\vec P_i$ be an $f_i$-resilient strong $\Psi_i$-pulser. That is, $\vec P_i$ generates a pulse every $\Psi_i$ rounds once stabilised, granted that $V_i$ contains at most $f_i$ faulty nodes. Nodes in block $i$ execute the algorithm $\vec P_i$. Our construction tolerates $f = f_0 + f_1 + 1$ faulty nodes. Since we consider Byzantine faults, we require the additional constraint that $f < n/3$. 

Let $a_i(v,t) \in \{0,1\}$ indicate the output bit of $\vec P_i$ for a node $v \in V_i$. Note that we might have a block $i \in \{0,1\}$ that contains more than $f_i$ faulty nodes. Thus, it is possible that the algorithm $\vec P_i$ never stabilises. In particular, we might have the situation that some of the nodes in block $i$ produce a pulse, but others do not. We say that a pulse generated by such a $\vec P_i$ is \emph{spurious.} We proceed by showing how to filter out such spurious pulses if they occur too often.

\paragraph{Filtering rules.}
We define five variables with the following semantics:
\begin{itemize}[noitemsep]
 \item $m_i(v,t+1)$ indicates whether at least $n_i-f_i$ nodes $u \in V_i$ sent $a_i(u,t)=1$,
 \item $M_i(v,t+1)$ indicates whether at least $n-f$ nodes $u \in V$ sent $m_i(u,t)=1$,
 \item $\ell_i(v,t)$ indicates when was the last time block $i$ triggered a (possibly spurious) pulse, 
 \item $w_i(v,t)$ indicates how long any firing events coming from block $i$ are ignored, and
 \item $b_i(v,t)$ indicates whether node $v$ accepts a firing event from block $i$.
\end{itemize}
The first two of the above variables are set according to the following rules:
\begin{itemize}[noitemsep]
 \item $m_i(v,t+1) = 1$ if and only if $|\{ u \in V_i : a_i(v, u,t) = 1 | \} \ge n_i - f_i$, 
 \item $M_i(v,t+1) = 1$ if and only if $|\{ u \in V : m_i(v, u,t) = 1 \} \ge n-f$,
\end{itemize}
where $a_i(v,u,t)$ and $m_i(v,u,t)$ denote the values for $a(\cdot)$ and $m(\cdot)$ node $v$ received from $u$ at the end of round $t$, respectively. Furthermore, we update the $\ell(\cdot,\cdot)$ variables using the rule
\[
 \ell_i(v,t+1) = \begin{cases}
                0 & \text{if } |\{ u\in V : m_i(u,t)=1\}| \ge f+1, \\
                \min \{ \Psi_i, \ell_i(v,t) + 1  \} & \text{otherwise.} 
              \end{cases}
\]
In words, the counter is reset on round $t+1$ if $v$ has proof that at least one correct node $u$ had $m_i(u,t)=1$, that is, some $u$ observed $\vec P_i$ generating a (possibly spurious) pulse.

We reset the cooldown counter $w_i$ whenever suspicious activity occurs. The idea is that it is reset to its maximum value $C$ by node $v$ in the following two cases:
\begin{itemize}[noitemsep]
 \item some other correct node $u \neq v$ observed block $i$ generating a pulse, but the node $v$ did not 
 \item block $i$ generated a pulse, but this happened either too soon or too late.
\end{itemize}
To capture this behaviour, the cooldown counter is set with the rule
\[
 w_i(v,t+1) = \begin{cases}
                C & \text{if } M_i(v,t+1) = 0 \text{ and } \ell_i(v,t+1) = 0, \\
                C & \text{if } M_i(v,t+1) = 1 \text{ and } \ell_i(v,t) \neq \Psi_i - 1, \\
                \max \{ w_i(v,t) - 1 , 0 \} & \text{otherwise,}
              \end{cases}
\]
where $C = \max\{ \Psi_0, \Psi_1 \} + \Phi + 2$. Finally, a node $v$ accepts a pulse generated by block $i$ if the node's cooldown counter is zero and it saw at least $n-f$ nodes supporting the pulse. The variable $b_i(v,t)$ indicates whether node $v$ accepted a pulse from block $i$ on round $t$. The variable is set using the rule
\[
 b_i(v,t) = \begin{cases}
                1 & \text{if } w_i(v,t) = 0 \text{ and } M_i(v,t)=1, \\
                0 & \text{otherwise.}
              \end{cases}
\]

\subsection{Analysis of the filtering construction}

We now analyse when the nodes accept firing events generated by the blocks. We say that a block $i$ is correct if it contains at most $f_i$ faulty nodes. Note that since there are at most $f = f_0 + f_1 + 1$ faulty nodes, at least one block $i \in \{0,1\}$ will be correct. Thus, eventually the algorithm $\vec P_i$ run by a correct block $i$ will stabilise. This yields the following lemma.

\begin{lemma}\label{lemma:one_correct}
For some $i \in \{0,1\}$, the strong pulser algorithm $\vec P_i$ stabilises by round $T(\vec P_i)$.
\end{lemma}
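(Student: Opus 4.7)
The plan is to argue by a simple pigeonhole / counting argument on the distribution of faulty nodes across the two blocks. Recall that the total number of faulty nodes is at most $f = f_0 + f_1 + 1$, and these nodes are split between the disjoint sets $V_0$ and $V_1$. Let $F_i = F \cap V_i$ denote the faulty nodes inside block $i$, so $|F_0| + |F_1| = |F| \le f_0 + f_1 + 1$.

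Now I would show that at least one block $i \in \{0,1\}$ satisfies $|F_i| \le f_i$, which is exactly the definition of block $i$ being correct. Suppose toward contradiction that this is false, so $|F_0| \ge f_0 + 1$ and $|F_1| \ge f_1 + 1$. Adding these inequalities yields $|F| = |F_0| + |F_1| \ge f_0 + f_1 + 2 > f$, contradicting the bound $|F| \le f$. Hence there exists $i \in \{0,1\}$ with $|F_i| \le f_i$.

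Since $\vec P_i$ is an $f_i$-resilient strong $\Psi_i$-pulser executed on the nodes of $V_i$, and block $i$ contains at most $f_i$ faulty nodes, the guarantees of $\vec P_i$ apply: by definition of an $f_i$-resilient strong pulser (and the fact that it is executed from arbitrary initial states, as required by self-stabilisation), $\vec P_i$ stabilises within $T(\vec P_i)$ rounds. This yields the claimed $i$.

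There is no real obstacle here; the lemma is essentially a book-keeping observation that the specific choice $f = f_0 + f_1 + 1$ forces at least one block to stay within its own resilience budget. The only subtlety worth noting explicitly in the proof is that each $\vec P_i$ is run entirely on the subnetwork $V_i$ and relies only on the fault bound local to that block, so the guarantee of $\vec P_i$ indeed transfers from the ``correct'' block to the full system.
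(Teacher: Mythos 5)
Your proof is correct and matches the paper's reasoning: the paper also observes that since $|F|\le f=f_0+f_1+1$, at least one block $i$ contains at most $f_i$ faulty nodes (otherwise the total would be at least $f_0+f_1+2>f$), and hence the $f_i$-resilient strong pulser $\vec P_i$ run on that block stabilises within $T(\vec P_i)$ rounds. Your write-up just makes the pigeonhole step and the locality of each $\vec P_i$ to its own block explicit, which the paper treats as an immediate observation.
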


We proceed by establishing some bounds on when (possibly spurious) pulses generated by block $i$ are accepted. We start with the case of having a correct block $i$.

\begin{lemma}
If block $i$ is correct, then there exists a round $r_0 \le T(\vec P_i) + 2$ such that for each $v\in V\setminus F$, $M_i(v,t) = 1$ if and only if $t = r_0 + k\Psi_i$ for $k \in \bN_0$. \label{lemma:correct-block-votes}
\end{lemma}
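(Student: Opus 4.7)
My plan is to unfold the two threshold rules defining $m_i$ and $M_i$ in sequence, each of which contributes a one-round delay, and then read off $r_0$ as the stabilisation round of $\vec P_i$ shifted by two. Since block $i$ is correct, $V_i$ contains at most $f_i$ faulty nodes, so the $f_i$-resilient strong pulser $\vec P_i$ gives a round $t_0 \leq T(\vec P_i)$ after which $a_i(u,t)=1$ for every correct $u \in V_i \setminus F$ exactly on the arithmetic progression $t_0, t_0+\Psi_i, t_0+2\Psi_i, \dots$.

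I would first argue that for every $v \in V \setminus F$ and every $t \geq t_0$, the variable $m_i(v,t+1)$ is $1$ exactly when $t$ is one of these pulse rounds. The forward direction is immediate, since at least $n_i - f_i$ correct nodes of $V_i$ transmit $1$, meeting the threshold. The reverse direction uses the inequality $f_i < n_i - f_i$ (weaker than the Byzantine bound $f_i < n_i/3$ already required for $\vec P_i$ to function), so that on a non-pulse round the at most $f_i$ faulty senders cannot meet the threshold on their own. Crucially, all correct $v$ end up with the \emph{same} $m_i(v,t+1)$.

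Next I would apply the analogous argument one level up for $M_i$, now with the outer threshold $n-f$ and the inequality $f < n - f$ (again weaker than $f < n/3$). Because the $m_i$ values at correct nodes have already synchronised on the pulse pattern from round $t_0+1$ onward, all $n - f$ correct senders report the same bit to each $v$, while the at most $f$ faulty senders cannot push the count across the threshold either way. This adds a second one-round delay, so setting $r_0 = t_0 + 2 \leq T(\vec P_i) + 2$ yields $M_i(v,t) = 1 \iff t = r_0 + k\Psi_i$ for every $v \in V \setminus F$ and $t \geq r_0$.

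I do not anticipate a genuine obstacle here: the two steps are structurally identical threshold-voting arguments, and all strict inequalities needed are built into the Byzantine resilience assumptions for $V_i$ and $V$. The only things to be careful about are the bookkeeping of the two-round delay between $a_i$, $m_i$, and $M_i$, and the fact that the iff-characterisation is to be understood for rounds $t \geq r_0$, since on the arbitrary initial segment $M_i(v,t)$ may take any value.
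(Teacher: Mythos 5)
Your proposal is correct and follows essentially the same route as the paper's proof: use stabilisation of $\vec P_i$ to get the pulse pattern at round $t_0$, then run the two threshold votes ($n_i-f_i$ over $V_i$ for $m_i$, then $n-f$ over $V$ for $M_i$), each adding one round of delay, with the non-pulse rounds handled by $f_i < n_i-f_i$ (from $f_i < n_i/3$) and $f < n-f$, giving $r_0 = t_0+2$. Your remark that the iff-characterisation is meant for rounds after stabilisation matches the paper's intended reading.
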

\begin{proof}
If block $i$ is correct, then the algorithm $\vec P_i$ stabilises by round $T(\vec P_i)$. Hence, there is some $t_0 \leq T(\vec P)$ so that the output variable $a_i(\cdot)$ of $\vec P_i$ satisfies
\[
 a_i(v, t) = 1 \text{ if and only if } t = t_0 + k \Psi_i \text{ for } k \in \bN_0
\]
holds for all $t \geq t_0$. We will now argue that $r_0 = t_0 + 2$ satisfies the claim of the lemma.

If $\vec P_i$ generates a pulse on round $t\geq t_0$, then at least $n_i - f_i$ correct nodes $u \in V_i \setminus F$ have $a_i(u,t)=1$. Therefore, for all $v \in V \setminus F$ we have $m_i(v,t+1) = 1$, and consequently, $M_i(v,t+2) = 1$. Since block $i$ is correct, there are at most $f_i$ faulty nodes in the set $V_i$. Observe that by \lemmaref{lemma:strong_pulsers_equal_counters} strong pulsers solve synchronous counting, which in turn is as hard as consensus \cite{dolev15survey}. This implies that we must have $f_i < n_i/3$, as $\vec P_i$ is a strong $f_i$-resilient pulser for $n_i$ nodes. Therefore, if $\vec P_i$ does not generate a pulse on round $t\geq t_0$, then at most $f_i < n_i - f_i$ faulty nodes $u$ may claim $a_i(u,t)=1$. This yields that $m_i(v,t+1) = M_i(v,t+2) = 0$ for all $v \in V \setminus F$.
\end{proof}

We can now establish that a correct node accepts a pulse generated by a correct block $i$ exactly every $\Psi_i$ rounds. 

\begin{lemma}
\label{lemma:correct-block-fires}
If block $i$ is correct, then there exists a round $t_0 \le T(\vec P_i) + 2C$ such that for each $v\in V\setminus F$, $b_i(v,t) = 1$ for any $t \ge t_0$ if and only if $t = t_0 + k\Psi_i$ for $k \in \bN_0$.
\end{lemma}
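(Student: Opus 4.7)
The plan is to combine \lemmaref{lemma:correct-block-votes} with a careful analysis of the auxiliary variables $\ell_i$ and $w_i$. The high-level goal is to show that the cooldown counter $w_i$ eventually and permanently reaches $0$, and that once this happens, $b_i(v,t)$ simply mirrors the periodic pattern of $M_i(v,t)$ established by \lemmaref{lemma:correct-block-votes}.

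The first step is to lift that lemma's reasoning from $M_i$ to $\ell_i$. The argument used for $M_i$ already showed that for correct $v$ and $t\geq t_0$, $m_i(v,t+1)=1$ if and only if $\vec P_i$ pulsed in round $t$; since at most $f$ faulty nodes can lie about their $m_i$ value, $v$ observes $m_i(u,t)=1$ from at least $f+1$ nodes exactly in the rounds where some correct node genuinely has $m_i(u,t)=1$, i.e., exactly when $\vec P_i$ pulsed in round $t-1$. Consequently, for $t'\geq r_0$ we get the equivalence $\ell_i(v,t')=0 \Leftrightarrow t'=r_0+k\Psi_i \Leftrightarrow M_i(v,t')=1$. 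Between resets $\ell_i$ simply counts up by one, so $\ell_i(v,r_0+k\Psi_i-1)=\Psi_i-1$ for every $k\geq 1$.

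With this alignment in hand, I would argue that from round $r_0+1$ onwards neither reset branch for $w_i$ ever fires. The first branch requires $M_i(v,t')=0$ and $\ell_i(v,t')=0$ simultaneously, which contradicts the equivalence above. The second branch requires $M_i(v,t')=1$, hence $t'=r_0+k\Psi_i$, and then the computation above gives $\ell_i(v,t'-1)=\Psi_i-1$ for all $k\geq 1$, defeating the trigger. Thus from round $r_0+1$ on, only the decrement rule applies, so $w_i$ strictly decreases until it hits $0$ and then stays there. Since $w_i$ is bounded by $C$ by construction, we conclude $w_i(v,t')=0$ for every $t'\geq r_0+C+1$. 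The only delicate point is the very first pulse at $t'=r_0$: there $\ell_i(v,r_0-1)$ depends on the arbitrary initial state and may trigger the second reset branch once, setting $w_i(v,r_0)=C$, but this is already absorbed by the $C$ rounds of decrement that follow.

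Finally, by definition $b_i(v,t)=1$ iff $M_i(v,t)=1$ and $w_i(v,t)=0$. Let $t_0:=r_0+k^\star\Psi_i$ be the first round of the $M_i$-pulse sequence satisfying $t_0\geq r_0+C+1$; then $t_0\leq r_0+C+\Psi_i$, and for every $t\geq t_0$ with $M_i(v,t)=1$ we also have $w_i(v,t)=0$, yielding $b_i(v,t)=1$ iff $t=t_0+k\Psi_i$ for some $k\in\bN_0$. Since $r_0\leq T(\vec P_i)+2$ and $C=\max\{\Psi_0,\Psi_1\}+\Phi+2\geq \Psi_i+2$, this gives $t_0\leq T(\vec P_i)+2+C+\Psi_i\leq T(\vec P_i)+2C$, as required. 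The main obstacle I expect is pinning down the $\ell_i$/$M_i$ alignment cleanly in the presence of adversarial $m_i$-claims from faulty nodes in either block; once that equivalence is in place, the rest is bookkeeping on $w_i$.
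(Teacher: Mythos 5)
Your proposal is correct and follows essentially the same route as the paper's proof: align the periodicity of $\ell_i$ with that of $M_i$ via \lemmaref{lemma:correct-block-votes}, show that the reset branches for $w_i$ can no longer fire so the cooldown counter decays to $0$, and then read off $b_i$ from its definition. You are in fact more explicit than the paper about the one possible reset at round $r_0$ caused by the uncontrolled value $\ell_i(v,r_0-1)$ and about verifying $\ell_i(v,r_0+k\Psi_i-1)=\Psi_i-1$ for $k\geq 1$, details the paper compresses into a single ``it follows that'' step.
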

\begin{proof}
 \lemmaref{lemma:correct-block-votes} implies that there exists $r_0 \le T(\vec P_i) + 2$ such that both $M_i(v,t) = 1$ and $\ell_i(v,t) = 0$ hold for $t \ge r_0$ if and only if $t = r_0 + k\Psi_i$ for $k \in \bN_0$. Thus, it follows that $w_i(v,t+1) = \max\{w_i(v,t)-1,0\}$ for all such $t$ and hence $w_i(v,t')=0$ for all $t' \ge r_0 + C + 2$. The claim now follows from the definition of $b_i(v,t')$, the choice of $r_0$, and the fact that $\Psi_i \leq C-2$.
\end{proof}

It remains to deal with the faulty block. If we have Byzantine nodes, then a block $i$ with more than $f_i$ faulty nodes may attempt to generate spurious pulses. However, the filtering mechanism prevents the spurious pulses from occuring too frequently.

\begin{lemma}\label{lemma:any-block-fires}
Let $v, v' \in V \setminus F$ and $t>2$. Suppose $b_i(v,t)=1$ and suppose that $t'>t$ is minimal such that $b_i(v',t')=1$. Then $t' = t + \Psi_i$ or $t' > t+C$.
\end{lemma}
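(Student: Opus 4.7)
The plan is to prove the lemma in three steps, exploiting the interplay between $\ell_i$ and $w_i$ at $v'$.

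First I would establish that $\ell_i(v',t)=0$ for every correct $v'\in V\setminus F$. Since $b_i(v,t)=1$ forces $M_i(v,t)=1$, node $v$ received at least $n-f$ messages $m_i(u,t-1)=1$, of which at least $n-2f\ge f+1$ come from correct senders (using $n\ge 3f+1$). Those correct nodes broadcast the same value to every recipient, so $v'$ also receives at least $f+1$ such messages, and the $\ell_i$-update rule resets $\ell_i(v',t)=0$.

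Next I would derive the lower bound $t'\ge t+\Psi_i$. For $b_i(v',t')=1$ we need $w_i(v',t')=0$ together with $M_i(v',t')=1$. Inspecting the $w_i$-update rule, given $M_i(v',t')=1$, the only way to avoid the assignment $w_i(v',t')=C$ is for the ``otherwise'' branch to apply, which requires $\ell_i(v',t'-1)=\Psi_i-1$. Since $\ell_i(v',t)=0$ and $\ell_i$ can increase by at most one per round, this forces $(t'-1)-t\ge \Psi_i-1$, i.e., $t'\ge t+\Psi_i$.

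Assuming $t'>t+\Psi_i$, I would conclude $t'>t+C$ via a dichotomy on whether $w_i(v',\cdot)$ is reset to $C$ at some round in $(t,t']$. If such a round $s\in(t,t']$ exists, then $w_i$ can decrement by at most one per round afterwards, forcing $w_i(v',t')=0$ only at $t'\ge s+C\ge t+1+C>t+C$, as desired. Otherwise no cooldown-reset fires at $v'$ in $(t,t']$, and the ``otherwise'' branch of the $w_i$-rule applies at every round of this interval; tracing $\ell_i(v',\cdot)$ from value $0$ at $t$, the no-reset conditions pin down the first subsequent $\ell_i$-reset at $v'$ to occur at round $t+\Psi_i$, together with $M_i(v',t+\Psi_i)=1$ and $\ell_i(v',t+\Psi_i-1)=\Psi_i-1$; decrementing $w_i$ from round $t$ then yields $w_i(v',t+\Psi_i)=0$ and hence $b_i(v',t+\Psi_i)=1$, contradicting the minimality of $t'>t+\Psi_i$.

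The hard part will be verifying the no-reset branch, where one must show by a careful subcase analysis of the two $w_i$-reset triggers and the $\Psi_i$-cap on $\ell_i$ that any alternative $\ell_i$-reset pattern at $v'$ in $(t,t+\Psi_i]$ would violate the ``otherwise'' condition at some intermediate round and hence trigger a $w_i$-reset, contradicting the case assumption.
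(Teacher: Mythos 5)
Your first two steps coincide with the paper's own argument: $b_i(v,t)=1$ forces $M_i(v,t)=1$, hence at least $n-2f\ge f+1$ correct nodes had $m_i(\cdot,t-1)=1$, so every correct node (in particular $v'$) resets $\ell_i(\cdot,t)=0$; and since $b_i(v',t')=1$ with $M_i(v',t')=1$ can only avoid the second cooldown-reset trigger if $\ell_i(v',t'-1)=\Psi_i-1$, the bound $t'\ge t+\Psi_i$ follows. The reset branch of your dichotomy is also fine: a reset at $s\in(t,t']$ gives $w_i(v',s)=C$ and hence $t'\ge s+C>t+C$. Moreover, the part you flag as the ``hard part'' (pinning the first $\ell_i$-reset after $t$ to round $t+\Psi_i$, with $M_i(v',t+\Psi_i)=1$, under the no-reset assumption) is indeed provable exactly as you sketch.

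The genuine gap is the final claim of the no-reset branch: ``decrementing $w_i$ from round $t$ then yields $w_i(v',t+\Psi_i)=0$.'' This tacitly assumes $w_i(v',t)=0$ (or at least $w_i(v',t)\le\Psi_i$), but the hypothesis only gives $w_i(v,t)=0$ for the \emph{other} node $v$; nothing constrains $w_i(v',t)$, which in a self-stabilising execution may be any value up to $C$ (e.g.\ a corrupted initial value, or a mid-countdown after a reset that happened before round $t$). If $w_i(v',t)>\Psi_i$ while the block (say, one with at least $n_i-f_i$ Byzantine members) makes all correct nodes see $M_i(\cdot)=1$ exactly every $\Psi_i$ rounds, then at $v'$ neither reset trigger ever fires, the cooldown merely counts down, $w_i(v',t+\Psi_i)=w_i(v',t)-\Psi_i>0$ so $b_i(v',t+\Psi_i)=0$, and $v'$ first fires at $t+2\Psi_i$ or later; no contradiction with the minimality of $t'$ arises. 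The arithmetic of the constants does not rescue this for block $0$: with $\Psi_0=2\Phi$ and $C=4\Phi+2$ one has $t+2\Psi_0\le t+C$, so the missing case is not vacuously covered. Closing it needs an extra ingredient -- e.g.\ a bound on $w_i(v',t)$, or settling for the weaker conclusion that $t'-t$ is a multiple of $\Psi_i$ or $t'>t+C$. In fairness, this is exactly the point where the paper's own proof is terse: it asserts that $t'\neq t+\Psi_i$ forces a cooldown reset inside the interval without addressing the countdown-from-a-large-value possibility, and your attempt to spell that step out is precisely what exposes the difficulty.
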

\begin{proof}
Suppose $b_i(v,t)=1$ for some correct node $v \in V$ and $t>2$. Since $b_i(v,t)=1$, $w_i(v,t)=0$ and $M_i(v,t)=1$. Because $M_i(v,t)=1$, there must be at least $n-2f>f$ correct nodes $u$ such that $m_i(u,t-1)=1$. Hence, $\ell_i(u,t)=0$ for every node $u \in V \setminus F$. 

Recall that $t'>t$ is minimal so that $b_i(v',t')=1$. Again, $w_i(v',t')=0$ and $M_i(v',t')=1$. Moreover, since $\ell_i(v',t)=0$, we must have $\ell_i(v',r) < \Psi_i - 1$ for all $t \le r < t + \Psi_i - 1$. This implies that $t' \ge t + \Psi_i$, as $w_i(v',t')=0$ and $M_i(v',t')=1$ necessitate that $\ell_i(v',t'-1)=\Psi_i-1$. In the event that $t' \neq t + \Psi_i$, the cooldown counter must have been reset at least once, i.e., $w_i(v',r)=C$ holds for some $t < r \le t' - C$, implying that $t' > t + C$.
\end{proof}

\subsection{Introducing silent consensus}

The above filtering mechanism prevents spurious pulses from occurring too often: if some node accepts a pulse from block $i$, then no node accepts a pulse from this block for at least $\Psi_i$ rounds. We now strengthen the construction to enforce that any (possibly spurious) pulse generated by block $i$ will be accepted by either all or no correct nodes. 
In order to achieve this, we employ \emph{silent consensus.}
\begin{definition}[Silent consensus]
We call a consensus protocol \emph{silent}, if in each execution in which all
correct nodes have input $0$, correct nodes send no messages.
\end{definition}
The idea is that this enables to have consistent executions even if not all correct nodes actually take part in an execution, provided we can ensure that in this case all participating correct nodes use input 0: the non-participating nodes send no messages either, which is the exact same behavior participating nodes would exhibit. We show that silent consensus protocols can be obtained from non-silent ones using a simple transformation.
\begin{theorem}\label{thm:silent-consensus}
Any consensus protocol $\vec C$ can be transformed into a silent binary consensus protocol $\vec C'$ with $T(\vec C') = T(\vec C)+2$ and the same resilience and message size.
\end{theorem}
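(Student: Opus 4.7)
My plan is to wrap $\vec C$ with two wake-up rounds and a conditional-transmission rule. Concretely, $\vec C'$ proceeds as follows. In round 1, each correct node $v$ with $x(v)=1$ broadcasts the single bit ``1'' (others remain silent). In round 2, each correct node broadcasts ``1'' iff it received at least $f+1$ ``1''s in round 1. In rounds 3 through $T(\vec C)+2$, each correct node simulates $\vec C$ using its original input $x(v)$, but transmits the simulated outgoing messages only if it broadcast ``1'' in round 1 or 2. Correct nodes that never transmit output $0$ by default; the rest output the value decided by their local simulation of $\vec C$.

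Silence is the easy direction: if all correct nodes have input $0$, no correct broadcasts in round 1, and each correct node then receives at most $f<f+1$ ``1''s (all from Byzantine nodes), so no correct broadcasts in round 2 and hence no correct transmits during the simulation of $\vec C$. Validity for input $1$ follows because every correct broadcasts in round 1 and, since $f<n/3$, every correct node receives at least $n-f\geq 2f+1>f+1$ ``1''s, so every correct broadcasts in round 2, participates in $\vec C$ with input $1$, and by $\vec C$'s validity outputs $1$. Termination in $T(\vec C)+2$ rounds, preservation of resilience, and the message-size bound (rounds 1 and 2 carry only one bit each) are immediate.

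The main obstacle is agreement under mixed inputs, where only some correct nodes may cross the round-1 threshold. My plan is to show, via a two-step threshold chain, that participation in $\vec C$ is essentially \emph{all-or-nothing} among correct nodes: if any correct node transmits during the simulation, then every correct node does so, and thus $\vec C$'s own agreement property lifts directly to $\vec C'$. To rule out the adversarial split where only some correct nodes pass the round-1 threshold, I expect to strengthen the activation rule to also trigger transmission when $v$ receives at least $f+1$ round-2 broadcasts, and then argue that any correct sender in round 2 forces every correct node's round-2 count to meet this threshold. Making this cascade tight within only two extra rounds is the crux; a backup plan is to argue that in the ambiguous regime the effective input supplied to $\vec C$ is uniformly $0$ among the active correct nodes, so $\vec C$'s validity forces output $0$, matching the default output of the inactive correct nodes.
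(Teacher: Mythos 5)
Your wrapper has the right shape (two gating rounds, transmit only if ``activated'', default output $0$ for silent nodes), and your silence, validity, termination and message-size arguments are fine. The gap is exactly where you locate it: agreement under mixed inputs, and neither of your patches closes it. Patch (a) fails because a single correct round-2 sender contributes only one unit to every other node's round-2 count, so with an $f{+}1$ threshold on both rounds the adversary can activate an arbitrary subset of the input-$0$ correct nodes (it simply sends round-1 ``1''s to exactly the nodes it wants to cross the threshold); there is no all-or-nothing cascade from an $(f{+}1,f{+}1)$ chain. Patch (b) fails for two reasons: in your scheme the active set always contains the correct nodes with original input $1$ (they broadcast in round 1), so the active correct nodes do \emph{not} all feed $0$ into $\vec C$; and even if they did, you cannot invoke validity of $\vec C$, because an execution in which more than $f$ correct nodes are silent is not a legal execution of $\vec C$ and carries no guarantees whatsoever. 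In that regime the active nodes' local simulations may output anything, including $1$ at some nodes and $0$ at others, while the inactive nodes output $0$ --- so agreement genuinely breaks for a generic $\vec C$.

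The paper's proof avoids this with two devices you are missing. First, the gating rounds \emph{rewrite the inputs}: a node downgrades its input to $0$ unless it receives at least $n-f$ ones, while participation only requires $f+1$ messages in round 1. The gap between the two thresholds gives the all-or-nothing property: if any correct node still holds input $1$ after round 1, then at least $n-2f\geq f+1$ correct nodes broadcast, so \emph{every} correct node crosses the participation threshold. Hence either all correct nodes participate (and $\vec C$ runs legally), or no correct node retains input $1$ and no correct node transmits in round 2. Second, there is an extra output rule decoupled from $\vec C$: a node outputs $0$ whenever it receives at most $f$ messages in round 2. In the partial-participation case this forces every correct node to output $0$ without consulting $\vec C$ at all; in the full-participation case, if any node defaults to $0$ this way, then fewer than $n-f$ nodes sent in round 2, so all correct inputs to $\vec C$ were downgraded to $0$ and validity of $\vec C$ makes the remaining nodes output $0$ as well. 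You need both the $n-f$ input-downgrading threshold and the round-2 default-to-$0$ rule; with them, your two extra rounds suffice.
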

\begin{proof}
The new protocol $\vec C'$ can be seen as a ``wrapper'' protocol that manipulates the inputs and then lets each node decide whether it participates in an instance of the original protocol. The output of the original protocol, $\vec C$, will be taken into account only by correct nodes that participate throughout the protocol, as specified below.

In the first round of the new protocol, $\vec C'$, each participating node broadcasts its input if it is $1$ and otherwise sends nothing. If a node receives fewer than $n-f$ times the value $1$, it sets its input to $0$. In the second round, the same pattern is applied.

Subsequently, $\vec C$ is executed by all nodes that received at least $f+1$ messages in the first round. If during the execution of $\vec C$ a node
\begin{enumerate}[label=(\roman*)]
 \item cannot process the messages received in a given round in accordance with $\vec C$ (this may happen e.g. when not all of the correct nodes participate in the instance, which is not covered by the model assumptions of $\vec C$), 
 \item would have to send more bits than it would have according to the known bound $M(\vec C)$, or
 \item would violate the running time bound of $\vec C$,
\end{enumerate}
then the node (locally) aborts the execution of $\vec C$. 
 Finally, a node outputs $0$ in the new protocol if it did not participate in the execution of $\vec C$, aborted it, or received $f$ or fewer messages in the second round, and it outputs the result according to the run of $\vec C$ otherwise.

We first show that the new protocol, $\vec C'$, is a consensus protocol with the same resilience as $\vec C$ and the claimed bounds on communication complexity and running time. We distinguish two cases. First, suppose that all correct nodes participate in the execution of $\vec C$ at the beginning of the third round. As all nodes participate, the bounds on resilience, communication complexity, and running time that apply to $\vec C$ hold in this execution, and no node will quit executing the protocol before termination. To establish agreement and validity, again we distinguish two cases. If all nodes output the outcome of the execution of $\vec C$, these properties follow right away since $\vec C$ satisfies them; here we use that although the initial two rounds might affect the inputs of nodes, a node will change its input to $0$ only if there is at least one correct node with input $0$. On the other hand, if some node outputs $0$ because it received $f$ or fewer messages in the second round of $\vec C'$, no node received more than $2f<n-f$ messages in the second round. Consequently, all nodes executed $\vec C$ with input $0$ and computed output $0$ by the agreement property of $\vec C$, implying agreement and validity of
the new protocol.

The second case is that some correct node does not participate in the execution of $\vec C$. Thus, it received at most $f$ messages in the first round of $\vec C'$, implying that no node received more than $2f<n-f$ messages in this round. Consequently, correct nodes set their input to $0$ and will not transmit in the second round. While some nodes may execute $\vec C$, all correct nodes will output $0$ no matter how $\vec C$ behaves. Since nodes abort the execution of $\vec C$ if the bounds on communication or time complexity are about to be violated, the claimed bounds for the new protocol hold.

It remains to show that the new protocol is silent. Clearly, if all correct nodes have input $0$, they will not transmit in the first two rounds. In particular, they will not receive more than $f$ messages in the first round and not participate in the execution of $\vec C$. Hence correct nodes do not send messages at all, as claimed.
\end{proof}

For example, plugging in the phase king protocol~\cite{berman89consensus}, we get the following corollary.

\begin{corollary}
For any $f < n/3$, there exists an $f$-resilient deterministic silent binary consensus protocol $\vec C$ with $T(\vec C) \in \Theta(f)$ and $M(\vec C) \in O(1)$.
\end{corollary}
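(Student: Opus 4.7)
The plan is to instantiate the transformation of \theoremref{thm:silent-consensus} with a concrete optimally-resilient binary consensus protocol whose running time is linear in $f$ and whose message size is constant. The phase king protocol of Berman, Garay, and Perry fits this bill: it tolerates any $f<n/3$ Byzantine faults, terminates in $O(f)$ rounds, and transmits only a constant number of bits per channel per round (essentially a single bit in each round, plus a phase-king identifier used only $O(1)$ times per phase, which can be broadcast bit-by-bit without affecting the $\Theta(f)$-round and $O(1)$-bit asymptotics).

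Given such a protocol $\vec C_0$ as input, I would simply apply \theoremref{thm:silent-consensus} to obtain a silent binary consensus protocol $\vec C$. By that theorem, $\vec C$ has the same resilience $f<n/3$ as $\vec C_0$, running time $T(\vec C) = T(\vec C_0) + 2 \in \Theta(f)$, and message size $M(\vec C) = M(\vec C_0) \in O(1)$, which is exactly what the corollary claims.

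The only point that warrants explicit mention is the message-size accounting for the phase king: since the protocol is structured in $\Theta(f)$ phases of $O(1)$ rounds each, with each node transmitting only a bit (plus at most one vote value, which is a single bit in the binary case) per round, the per-round per-edge communication is $O(1)$ bits. No further surgery is required; all properties follow from the black-box transformation.
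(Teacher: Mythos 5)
Your proposal matches the paper's argument exactly: the corollary is obtained by plugging the phase king protocol (resilience $f<n/3$, running time $\Theta(f)$, constant-size messages) into the transformation of \theoremref{thm:silent-consensus}, which preserves resilience and message size and adds only two rounds. Your extra remark on the per-round bit accounting for phase king is a harmless elaboration of what the paper leaves implicit.
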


\subsection{Using silent consensus to prune spurious pulses}

As the filtering construction bounds the frequency at which spurious pulses may occur from above, we can make sure that at each time, only one consensus instance can be executed for each block. However, we need to further preprocess the inputs, in order to make sure that (i) all correct nodes participate in an instance or (ii) no participating correct node has input 1; here, output 1 means agreement on a pulse being triggered, while output 0 results in no action.

Recall that $b_i(v,t)\in \{0,1\}$ indicates whether $v$ observed a (filtered) pulse of the strong pulser $\vec P_i$ in round $t$. Moreover, assume that $\vec C$ is a silent consensus protocol running in $T(\vec C)$ rounds. We use two copies $\vec C_i$, where $i\in \{0,1\}$, of the consensus routine $\vec C$. We require that $\Psi_i \geq T(\vec C)$, which guarantees by \lemmaref{lemma:any-block-fires} that (after stabilisation) every instance of $\vec C$ has sufficient time to complete. Adding one more level of voting to clean up the inputs, we arrive at the following routine.

\paragraph{The pruning algorithm.}
Besides the local variables of $\vec C_i$, the algorithm will use the following variables for each $v \in V$ and round $t \in \bN$:
\begin{itemize}[noitemsep]
 \item $y_i(v,t)\in \{0,1\}$ denotes the output value of consensus routine $\vec C_i$, 
 \item $r_i(v,t)\in \{1,\ldots,T(\vec C)\}\cup \{\bot\}$ is a local round counter for controlling $\vec C_i$, and 
 \item $B_i(v,t) \in \{0,1\}$ is the output of block $i$. 
\end{itemize}
Now each node $v$ executes the following on round $t$:
\begin{enumerate}[noitemsep]
 \item Broadcast the value $b_i(v,t)$.
 \item If $b_i(v,w,t-1)=1$ for at least $n-2f$ nodes $w\in V$, then reset $r_i(v,t)=1$.
 \item If $r_i(v,t)=1$, then 
 \begin{enumerate}[noitemsep]
  \item start a new instance of $\vec C_i$, that is, re-initialise the variables of $\vec C_i$ correctly,
  \item use input $1$ if $b_i(v,w,t-1)=1$ for at least $n-f$ nodes $w\in V$ and $0$ otherwise.
 \end{enumerate}
 \item If $r_i(v,t)=T(\vec C)$, then 
 \begin{enumerate}[noitemsep]
  \item execute round $T(\vec C)$ of $\vec C_i$,
  \item set $r_i(v,t+1) = \bot$, 
  \item set $B_i(v,t+1)=y_i(v,t)$, where $y_i(v,t) \in \{0,1\}$ is the output variable of $\vec C_i$.
 \end{enumerate}
 Otherwise, set $B_i(v,t+1)=0$.
 \item If $r_i(v,t)\not \in \{T(\vec C),\bot\}$, then 
 \begin{enumerate}[noitemsep]
  \item execute round $r_i(v,t)$ of $\vec C_i$, and
  \item set $r_i(v,t+1)=r_i(v,t)+1$.
 \end{enumerate}
\end{enumerate}

\paragraph{Analysis.}
Besides the communication used for computing the values $b_i(\cdot)$, the above algorithm uses messages of size $M(\vec C)+1$, as $M(\vec C)$ bits are used when executing $\vec C_i$ and one bit is used to communicate the value of $b_i(v,t)$.

We say that $v\in V\setminus F$ \emph{executes round $r\in \{1,\ldots,T(\vec C)$\}} of $\vec C_i$ in round $t$ iff $r_i(v,t)=r$. By \lemmaref{lemma:any-block-fires}, in rounds $t>T(\vec C)+2$, there is always at most one instance of $\vec C_i$ being executed, and if so, consistently.
\begin{corollary}\label{coro:execute_consistently}
Suppose $v\in V\setminus F$ executes round $1$ of $\vec C_i$ in some round $t>T(\vec C)+2$. Then there is a subset $U\subseteq V\setminus F$ such that each $u\in U$ executes round $r\in \{1,\ldots,T(\vec C)\}$ of $\vec C_i$ in round $t+r-1$ and no $u\in V\setminus (F\cup U)$ executes any round of $\vec C_i$ in round $t+r-1$.
\end{corollary}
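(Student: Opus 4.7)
The plan is to define $U=\{u\in V\setminus F : r_i(u,t)=1\}$ and argue this set has the claimed properties. By assumption $v\in U$. The key starting observation is that $r_i(v,t)=1$ can only be set via Step 2, which requires at least $n-2f$ nodes to have sent $b_i=1$ in round $t-1$; since at most $f$ of those can be faulty and $n>3f$, at least one correct $w\in V\setminus F$ satisfies $b_i(w,t-1)=1$. Because $t>T(\vec C)+2\ge 3$, \lemmaref{lemma:any-block-fires} is applicable to this firing.

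From that firing at $w$, I would extract two consequences via \lemmaref{lemma:any-block-fires}. \emph{Forward:} the next correct firing of $b_i$ after round $t-1$ happens no earlier than round $(t-1)+\Psi_i$, and since $\Psi_i\ge T(\vec C)$, no correct node satisfies $b_i(\cdot,t')=1$ in any round $t'\in\{t,\ldots,t+T(\vec C)-2\}$. Consequently, no correct node can have $r_i$ reset to $1$ via Step 2 during rounds $\{t+1,\ldots,t+T(\vec C)-1\}$. \emph{Backward:} suppose for contradiction that some $u\in V\setminus F$ had $r_i(u,t)=r$ for some $r\in\{2,\ldots,T(\vec C)\}$. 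Tracing the algorithm's update rules (Steps 2, 4, 5) backward, this forces $r_i(u,t-r+1)=1$, which in turn requires a correct firing of $b_i$ at round $t-r$. Since $t-r>T(\vec C)+2-T(\vec C)=2$, \lemmaref{lemma:any-block-fires} applies to this earlier firing, yielding that the next correct firing occurs at round $\ge (t-r)+\Psi_i\ge t$. This contradicts the known firing at round $t-1<t$. Hence $r_i(u,t)\in\{1,\bot\}$ for every $u\in V\setminus F$, which means $r_i(u,t)=\bot$ for all $u\in V\setminus(F\cup U)$.

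The corollary then follows by a straightforward induction on $r\in\{1,\ldots,T(\vec C)\}$. For $u\in U$, Step 5 increments $r_i(u,\cdot)$ by one each round while it lies in $\{1,\ldots,T(\vec C)-1\}$, and the forward consequence above rules out any interfering reset via Step 2, so $r_i(u,t+r-1)=r$ for all $r\in\{1,\ldots,T(\vec C)\}$. For $u\in V\setminus(F\cup U)$, we have $r_i(u,t)=\bot$, and since no Step 2 reset fires in the subsequent window, $r_i(u,t+r-1)=\bot$ throughout, so such a $u$ does not execute any round of $\vec C_i$ during the window. The main delicate point is the backward argument ruling out correct stragglers from a previously started instance; this is where both halves of the frequency bound in \lemmaref{lemma:any-block-fires} together with the assumption $\Psi_i\ge T(\vec C)$ are essential.
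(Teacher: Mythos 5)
Your proof is correct and follows the same route the paper intends: the paper derives this corollary from \lemmaref{lemma:any-block-fires} with only a one-sentence justification, and your argument is a careful filling-in of exactly that derivation (using the lemma both forward, to exclude resets during the window, and backward, to exclude correct stragglers from an earlier instance). The details check out, including the use of $\Psi_i \ge T(\vec C)$ and the observation that $r_i(u,s)=1$ for $s\ge 2$ forces a Step-2 reset witnessed by at least one correct node's $b_i$-firing.
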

Exploiting silence of $\vec C_i$ and the choice of inputs, we can ensure that the case $U\neq V\setminus F$ causes no trouble.
\begin{lemma}\label{lemma:full_participation}
Let $t>T(\vec C)+2$ and $U$ be as in \corollaryref{coro:execute_consistently}. Then $U=V\setminus F$ or each $u\in U$ has input $0$ for the respective instance of $\vec C_i$. 
\end{lemma}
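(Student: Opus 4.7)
The plan is to prove the contrapositive: assume some $u \in U$ uses input $1$ in the instance of $\vec C_i$ started at round $t$, and deduce $U = V \setminus F$. By Step~3(b) of the pruning algorithm, $u$ uses input $1$ only if it received $b_i(u,w,t-1)=1$ from at least $n-f$ nodes $w \in V$. Discarding the at most $f$ potentially faulty senders among these, there is a set $W \subseteq V \setminus F$ with $|W| \geq n-2f$ such that each $w \in W$ broadcast $b_i(w,t-1)=1$ in round $t-1$.

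Next I would exploit the fact that the nodes in $W$ are correct: each $w \in W$ sends the same value to every recipient, so $b_i(v,w,t-1) = b_i(w,t-1) = 1$ for every $v \in V \setminus F$. Consequently, every correct node $v$ receives the value $1$ from at least $n-2f$ senders in round $t-1$, which is precisely the threshold in Step~2 of the pruning algorithm. Thus, $v$ unconditionally resets $r_i(v,t)=1$, and by Step~3 starts a fresh instance of $\vec C_i$ in round $t$, i.e., $v$ executes round $1$ of $\vec C_i$ in round $t$. Since this holds for an arbitrary $v \in V \setminus F$, we conclude $U = V \setminus F$, contradicting the assumption that we are in the case $U \neq V \setminus F$.

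The main (and really the only) subtle point is the interplay between the two thresholds appearing in the algorithm: the higher threshold $n-f$ that is required for a node to use input $1$, and the lower threshold $n-2f$ that triggers the reset in Step~2. The gap of $f$ between them is what drives the argument, since it lets correct contributions among the $n-f$ senders seen by~$u$ propagate to all other correct nodes and force them into $U$ as well. I also note that the hypothesis $t > T(\vec C)+2$ is not needed for this propagation itself; its role is only to ensure, via \corollaryref{coro:execute_consistently}, that the set $U$ is well defined as the participants of a single joint instance, so that speaking about ``the respective instance of $\vec C_i$'' is meaningful.
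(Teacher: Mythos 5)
Your proof is correct and follows essentially the same argument as the paper: a node using input $1$ must have received $b_i(\cdot)=1$ from $n-f$ nodes, at least $n-2f$ of which are correct, so every correct node meets the $n-2f$ reset threshold in Step~2 and joins the instance, giving $U=V\setminus F$. The paper's proof is phrased contrapositively in the same way and likewise concludes via \corollaryref{coro:execute_consistently}, so there is no substantive difference.
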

\begin{proof}
Suppose $u\in U$ starts an instance with input $1$ in round $t'\in \{t-T(\vec C)-1,\ldots,t\}$. Then $b_i(w,t'-1)=1$ for at least $n-2f$ nodes $w\in V\setminus F$, since $u$ received $b_i(u,w,t'-1)=1$ from $n-f$ nodes $w\in V$. Thus, each $v\in V\setminus F$ received $b_i(v,w,t'-1)=1$ from at least $n-2f$ nodes $w$ and sets $r_i(v,t')=1$, i.e., $U=V\setminus F$. The lemma now follows from \corollaryref{coro:execute_consistently}.
\end{proof}
Recall that if all nodes executing $\vec C_i$ have input $0$, non-participating correct nodes behave exactly as if they executed $\vec C_i$ as well, i.e., they send no messages. Hence, if $U\neq V\setminus F$, all nodes executing the algorithm will compute output $0$. Therefore, \corollaryref{coro:execute_consistently}, \lemmaref{lemma:any-block-fires}, and \lemmaref{lemma:full_participation} imply the following corollary.
\begin{corollary}\label{coro:full_participation}
In rounds $t>T(\vec C)+2$ it holds that $B_i(v,t)=B_i(w,t)$ for all $v,w\in V\setminus F$ and $i\in \{0,1\}$. Furthermore, if $B_i(v,t)=1$ for $v\in V\setminus F$ and $t>T(\vec C)+2$, then the minimal $t'>t$ so that $B_i(v,t')=1$ (if it exists) satisfies either $t'=t+\Psi_i$ or $t'>t+C=t+\max\{\Psi_0,\Psi_1\}+\Phi+2$.
\end{corollary}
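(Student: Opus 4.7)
The plan is to derive both claims by combining Corollary~\ref{coro:execute_consistently}, Lemma~\ref{lemma:full_participation}, Lemma~\ref{lemma:any-block-fires}, and the silence of $\vec C_i$.

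For the agreement claim, I fix $t > T(\vec C) + 2$ and case on whether any instance of $\vec C_i$ terminates at round $t-1$. If none does, every correct node sets $B_i(v,t)=0$ by default in Step~4. Otherwise Corollary~\ref{coro:execute_consistently} gives a set $U \subseteq V \setminus F$ of correct participants executing that instance in lockstep. If $U = V \setminus F$, the agreement property of $\vec C_i$ gives a common output $y_i$, hence a common $B_i(v,t)$. If $U \subsetneq V \setminus F$, Lemma~\ref{lemma:full_participation} forces every participant to use input $0$, and by silence of $\vec C_i$ no correct node broadcasts any message of $\vec C_i$. Hence the view of any participant is indistinguishable from the execution in which all of $V \setminus F$ participates with input $0$, so validity forces the output to $0$; the non-participating correct nodes set $B_i(v,t) = 0$ by default, giving agreement.

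For the spacing claim, I trace $B_i(v,t)=1$ back to a $b_i$ firing. The pruning algorithm sets $B_i(v,t)=1$ only if an instance of $\vec C_i$ launched at round $s_0 = t-T(\vec C)$ outputs $1$; by Step~2 the launch requires $v$ to have received at least $n-2f$ broadcasts of $b_i=1$ at round $\sigma_0 := s_0 - 1$, so at least one correct node had $b_i(\cdot, \sigma_0)=1$, and binary validity applied to the output $1$ additionally gives at least $n-2f$ correct nodes with $b_i=1$ at $\sigma_0$. The analogous statement at $\sigma' := t' - T(\vec C) - 1$ holds for the minimal next firing $t'$. Iterating Lemma~\ref{lemma:any-block-fires} from $\sigma_0$ shows that every consecutive $b_i$-firing gap is either exactly $\Psi_i$ or strictly greater than $C$. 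If $\sigma' = \sigma_0 + \Psi_i$ and the triggered instance outputs $1$, then $t'=t+\Psi_i$; if any gap between $\sigma_0$ and $\sigma'$ exceeds $C$, then $\sigma' > \sigma_0 + C$ and $t' > t+C$ follows.

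The hard case is when a $b_i$ firing occurs at $\sigma_0 + \Psi_i$ but its consensus instance outputs $0$, so that $\sigma' > \sigma_0 + \Psi_i$ while each inter-firing gap is still $\Psi_i$. Binary validity then gives that some correct node had input $0$, so fewer than $n-f$ broadcasts of $b_i=1$ reached it, meaning the firing at $\sigma_0 + \Psi_i$ is only partial rather than unanimous among correct nodes. I would argue that this asymmetry in $b_i$ at round $\sigma_0 + \Psi_i$ traces back to an asymmetry in $M_i$ across correct nodes, which via the second clause of the $w_i$ rule triggers a cooldown reset at at least one correct node; combined with the choice $C = \max\{\Psi_0, \Psi_1\} + \Phi + 2$, this reset blocks any $b_i$ firing for more than $C$ further rounds, forcing $\sigma' > \sigma_0 + C$ and hence $t' > t+C$. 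Making this propagation from "partial firing" to "globally enforced silent window" rigorous, by extending the bookkeeping of $M_i$, $\ell_i$, and $w_i$ used in the proof of Lemma~\ref{lemma:any-block-fires}, is the main technical obstacle.
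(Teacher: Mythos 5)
Your treatment of the agreement claim is correct and is exactly the paper's intended one-line derivation: \corollaryref{coro:execute_consistently} gives a consistently executed participant set $U$, \lemmaref{lemma:full_participation} plus silence force output $0$ whenever $U\neq V\setminus F$, and non-participants output $0$ by default. Your reduction of the spacing claim to the level of the $b_i$ variables (via the input rule and validity, giving at least $n-2f$ correct nodes with $b_i=1$ at $\sigma_0=t-T(\vec C)-1$ and at $\sigma'=t'-T(\vec C)-1$) is also right. The genuine gap is precisely the ``hard case'' you flag, and the mechanism you sketch to close it does not work. A round with $B_i=1$ only needs $n-2f$ correct nodes to fire $b_i$ and $n-f$ received $b_i$-values of $1$ (the $f$ faulty nodes may supply the rest), so a cooldown reset at one correct node --- or at up to $f$ of them --- does not create a ``globally enforced silent window'': the remaining $\geq n-2f$ correct nodes can retain a clean $\ell_i/w_i$ history (the reset of the others can be triggered at the pulse round itself via $M_i=0$, $\ell_i=0$, without disturbing anyone else), fire again at $\sigma_0+\Psi_i$ and $\sigma_0+2\Psi_i$, and the faulty nodes can withhold their $b_i$-values at the first of these rounds (all correct inputs $0$, output $0$) and supply them at the second (all correct inputs $1$, output $1$). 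Neither \lemmaref{lemma:any-block-fires} nor the numerology excludes this: with the instantiation of \lemmaref{lemma:good_pulse} one has $2\Psi_0=4\Phi\leq C=4\Phi+2$, so $t'=t+2\Psi_0$ is neither $t+\Psi_0$ nor $>t+C$. Hence the strict ``$t+\Psi_i$ or beyond $t+C$'' dichotomy cannot be reached along the route you propose.

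What your chaining argument does yield --- and what the paper actually uses downstream --- is the weaker statement: listing the correct-node $b_i$-firing times $\sigma_0<\sigma_1<\dots<\sigma_m=\sigma'$ between two consecutive $B_i$-pulses, \lemmaref{lemma:any-block-fires} makes every consecutive gap either exactly $\Psi_i$ or larger than $C$, so either $t'-t$ is a positive multiple of $\Psi_i$ or $t'>t+C$. This is exactly the form invoked in the proof of \lemmaref{lemma:good_pulse} (property 4 there permits pulses at $t'+k\Psi_{1-i}$ for any $k\in\bN_0$ inside the $C$-round window), and it suffices for the weak-pulser argument because $\Psi_0$ and $\Psi_1$ are chosen as distinct multiples of $\Phi$. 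So instead of trying to make the ``partial firing implies $C$ rounds of global silence'' propagation rigorous --- which fails for the reason above --- you should prove and use the multiples-of-$\Psi_i$ version; with that substitution your outline is complete and aligned with what the construction needs.
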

Finally, we observe that our approach does not filter out pulses from correct blocks.
\begin{lemma}\label{lemma:correct_passes}
If block $i$ is correct, there is a round $t_0\leq T(\vec P_i)+2C+T(\vec C)+1$ so that for any $t\geq t_0$, $B_i(v,t)=1$ if and only if $t=t_0+k\Psi_i$ for some $k\in \bN_0$.
\end{lemma}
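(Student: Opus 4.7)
The plan is to invoke \lemmaref{lemma:correct-block-fires} to obtain a round $t_0' \leq T(\vec P_i) + 2C$ such that, for all $t \geq t_0'$, the filtered signal satisfies $b_i(v,t) = 1$ for every $v \in V \setminus F$ precisely when $t = t_0' + k \Psi_i$ for some $k \in \bN_0$. From this clean $b_i$ signal, I then track how the pruning algorithm drives the counter $r_i$ and the simulated instances of $\vec C_i$ to produce a perfectly periodic $B_i$ output, delayed by exactly $T(\vec C) + 1$ rounds.

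First, I argue that Step~2 sets $r_i(v, t+1) = 1$ for every $v \in V \setminus F$ exactly at rounds $t = t_0' + k\Psi_i$, and at no other round $\geq t_0'$. In the pulse rounds, every correct node broadcasts $b_i = 1$ and each correct recipient sees at least $n - f \geq n - 2f$ such values. In the intervening rounds, no correct node broadcasts $b_i = 1$, so the at most $f$ claims from Byzantine nodes fall short of the threshold $n - 2f$ (using $f < n/3$), and no spurious reset occurs. Consequently, after round $t_0' + 1$ a single fresh instance of $\vec C_i$ is initialised per period $\Psi_i$, since Step~3a reinitialises all of $\vec C_i$'s local state; the counter $r_i$ then increments deterministically through $1, 2, \ldots, T(\vec C)$ before being set to $\bot$ in Step~4b, where it remains until the next pulse.

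Finally, I verify that each such instance outputs $1$ and that no other round produces $B_i = 1$. All correct nodes participate synchronously with input $1$ by Step~3b, because each sees at least $n - f$ values $b_i = 1$ at the triggering round, so agreement and validity of $\vec C_i$ yield $y_i(v, t_0' + k\Psi_i + T(\vec C)) = 1$, and Step~4c sets $B_i(v, t_0' + k\Psi_i + T(\vec C) + 1) = 1$. In every other round $t \geq t_0 := t_0' + T(\vec C) + 1$ we have $r_i(v,t-1) \neq T(\vec C)$, so the ``otherwise'' branch of Step~4 forces $B_i(v,t) = 0$; this gives the claimed characterisation with $t_0 \leq T(\vec P_i) + 2C + T(\vec C) + 1$. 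The delicate point is ruling out interference from the arbitrary initial state, namely a leftover run of $\vec C_i$ completing in a ``silent'' round; this is handled by the fact that the reset at round $t_0' + 1$ wipes out the internal variables of $\vec C_i$ through Step~3a, after which the threshold $n - 2f$ prevents any further reset for at least $\Psi_i - 1$ rounds and hence allows only one instance at a time.
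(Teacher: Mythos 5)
Your proof is correct and follows essentially the same route as the paper: invoke \lemmaref{lemma:correct-block-fires} to obtain the clean periodic $b_i$ signal, observe that every correct node then starts a fresh instance of $\vec C_i$ with input $1$ at each pulse, and use validity of $\vec C_i$ plus the structure of the pruning algorithm to get $B_i(v,\cdot)=1$ exactly $T(\vec C)+1$ rounds after each pulse. The only difference is that you spell out the threshold arguments and the handling of leftover instances explicitly, where the paper delegates consistency of execution to \corollaryref{coro:execute_consistently}.
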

\begin{proof}
\lemmaref{lemma:correct-block-fires} states the same for the variables $b_i(v,t)$ and a round $t_0'\leq T(\vec P_i)+2C$. If $b_i(v,t)=1$ for all $v\in V\setminus F$ and some round $t$, all correct nodes start executing an instance of $\vec C_i$ with input $1$ in round $t+1$. As, by \corollaryref{coro:execute_consistently}, this instance executes correctly and, by validity of $\vec C_i$, outputs $1$ in round $t+T(\vec C)$, all correct nodes satisfy $B_i(v,t+T(\vec C)+1)=1$. Similarly, $B_i(v,t+T(\vec C)+1)=0$ for such $v$ and any $t\geq t_0'$ with $b_i(v,t)=0$.
\end{proof}

\subsection{Obtaining the weak pulser}

Finally, we define the output variable of our weak pulser as
\[
 B(v,t) = \max\{B_0(v,t),B_1(v,t)\}.
\]
As we have eliminated the possibility that $B_i(v,t)\neq B_i(w,t)$ for $v,w\in V\setminus F$ and $t>T(\vec C)+2$, Property W1 holds. Since there is at least one correct block $i$ by \lemmaref{lemma:one_correct}, \lemmaref{lemma:correct_passes} shows that there will be good pulses (satisfying Properties W2 and W3) regularly, unless block $1-i$ interferes by generating pulses violating Property W3 (i.e., in too short order after a pulse generated by block $i$). Here the filtering mechanism comes to the rescue: as we made sure that pulses are either generated at the chosen frequency $\Psi_i$ or a long period of $C$ rounds of generating no pulse is enforced (\corollaryref{coro:full_participation}), it is sufficient to choose $\Psi_0$ and $\Psi_1$ as coprime multiples of $\Phi$.

Accordingly, we pick $\Psi_0=2\Phi$ and $\Psi_1=3\Phi$ and observe that this results in a good pulse within $O(\Phi)$ rounds after the $B_i$ stabilised.

\begin{lemma}\label{lemma:good_pulse}
In the construction described in the previous two subsections, choose $\Psi_0=2\Phi$ and $\Psi_1=3\Phi$ for any $\Phi \geq T(\vec C)$. Then $B(v,t)$ is the output variable of a weak $\Phi$-pulser with stabilisation time $\max\{T(\vec P_0),T(\vec P_1)\}+O(\Phi)$.
\end{lemma}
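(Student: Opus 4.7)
The plan is to verify the three properties W1--W3 of \defref{def:weak} for $B$ and to locate a good pulse round $t_0$ within the claimed bound. Property W1 follows immediately from \corollaryref{coro:full_participation}, which guarantees that each $B_i$ agrees across all correct nodes from round $T(\vec C)+3$ onward; taking $B=\max\{B_0,B_1\}$ preserves this agreement. Hence it suffices to identify a round $t_0$ at which at least one block fires while neither block fires during the following $\Phi-1$ rounds.

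By \lemmaref{lemma:one_correct}, some block $i^*\in\{0,1\}$ is correct, and \lemmaref{lemma:correct_passes} yields a round $t^*\le T(\vec P_{i^*})+2C+T(\vec C)+1$ after which $B_{i^*}$ fires exactly at the rounds $t^*+k\Psi_{i^*}$, $k\in\bN_0$. Meanwhile, by \corollaryref{coro:full_participation}, any two successive $B_j$ firings (with $j=1-i^*$) after round $T(\vec C)+2$ are separated by exactly $\Psi_j$ or by more than $C=\max\{\Psi_0,\Psi_1\}+\Phi+2$ rounds; iterating this, the distance between any two $B_j$ firings is either a positive multiple of $\Psi_j$ (if every intermediate gap is short) or strictly greater than $C$ (if some intermediate gap is long). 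I call a $B_{i^*}$ firing at round $p$ \emph{good} if $B_j$ remains silent throughout $\{p+1,\ldots,p+\Phi-1\}$. Since $\Psi_{i^*}\ge 2\Phi>\Phi$, $B_{i^*}$ itself is silent in these rounds, so any good $p$ serves as a valid $t_0$ witnessing W2 and W3.

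The core combinatorial observation is that no two consecutive $B_{i^*}$ firings $p$ and $p+\Psi_{i^*}$ can both be blocked by $B_j$. Indeed, if they were, there would exist $B_j$ firings $q\in(p,p+\Phi)$ and $q'\in(p+\Psi_{i^*},p+\Psi_{i^*}+\Phi)$, and $q'-q$ would lie in the open interval $(\Psi_{i^*}-\Phi,\Psi_{i^*}+\Phi)$. With $\{\Psi_0,\Psi_1\}=\{2\Phi,3\Phi\}$ this interval is either $(\Phi,3\Phi)$ or $(2\Phi,4\Phi)$, and one checks by inspection that in either case it contains no positive multiple of $\Psi_j$ (the only candidates, $\Psi_j$ and $2\Psi_j$, coincide with the excluded endpoints) and its supremum $\Psi_{i^*}+\Phi\le 4\Phi<C$; both alternatives for admissible $B_j$ distances are thereby ruled out, a contradiction.

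Consequently, at least every second $B_{i^*}$ firing after $t^*$ is good, so the first good firing occurs by round $t^*+\Psi_{i^*}\le t^*+3\Phi$, giving $t_0\le\max\{T(\vec P_0),T(\vec P_1)\}+O(\Phi)$ since $C,T(\vec C)=O(\Phi)$ under the hypothesis $\Phi\ge T(\vec C)$. The main obstacle is recognising that the particular choice $\Psi_0=2\Phi$, $\Psi_1=3\Phi$ is exactly what makes the ``forbidden window'' of distances $(\Psi_{i^*}-\Phi,\Psi_{i^*}+\Phi)$ fall strictly between multiples of $\Psi_j$ irrespective of which block is correct---and this is precisely why these two periods are chosen in the statement of the lemma.
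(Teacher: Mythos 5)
Your proof is correct and follows essentially the same route as the paper: it combines \lemmaref{lemma:one_correct}, \lemmaref{lemma:correct_passes}, and \corollaryref{coro:full_participation}, and then uses the coprime choice $\Psi_0=2\Phi$, $\Psi_1=3\Phi$ together with the cooldown bound $C>\max\{\Psi_0,\Psi_1\}+\Phi$ to show that the faulty block cannot spoil two consecutive pulses of the correct block. The only (harmless) difference is packaging: you derive a contradiction from the distance $q'-q$ between two hypothetical blocking pulses, whereas the paper tracks forward from a single interfering pulse $t'$ and checks that its admissible recurrence times miss the window after $t+\Psi_i$ --- the same arithmetic in either form.
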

\begin{proof}
We have that $C=\max\{\Psi_0,\Psi_1\}+\Phi+2\in O(\Phi)$. By the above observations, there is a round $t\in \max\{T(\vec P_0),T(\vec P_1)\}+T(\vec C)+O(\Phi)=\max\{T(\vec P_0),T(\vec P_1)\}+O(\Phi)$ satisfying the following four properties. For either block $i \in \{0,1\}$, we have by \corollaryref{coro:full_participation} that
\begin{enumerate}[noitemsep]
 \item $B_i(v,t') = B_i(w,t')$ and $B(v,t')=B(w,t')$ for any $v,w \in V \setminus F$ and $t' \ge t$.
\end{enumerate}
Moreover, for a correct block $i$ and for all $v \in V \setminus F$ we have from \lemmaref{lemma:correct_passes} that 
\begin{enumerate}[noitemsep,resume]
  \item $B_i(v,t)=B_i(v,t+\Psi_i)=1$,
  \item $B_i(v,t')=0$ for all $t'\in \{t+1,\ldots,t+\Phi-1\}\cup \{t+\Psi_i+1,\ldots,t+\Psi_i+\Phi-1\}$,
\end{enumerate}
and for a (possibly faulty) block $1-i$ we have from \corollaryref{coro:full_participation} that 
\begin{enumerate}[noitemsep,resume]
 \item if $B_{1-i}(v,t')=1$ for some $v \in V \setminus F$ and $t'\in \{t+1,\ldots,t+\Psi_i+\Phi-1\}$, then $B_{1-i}(u,t'')=0$ for all $u \in V \setminus F$ and $t''\in \{t'+1,\ldots,t'+C\}$ that  do not satisfy $t''=t'+k\Psi_{1-i}$ for some $k\in \N_0$.
\end{enumerate}

Now it remains to argue that a good pulse is generated. Suppose that $i$ is a correct block given by \lemmaref{lemma:one_correct}. By the first property, it suffices to show that a good pulse occurs in round $t$ or in round $t+\Psi_i$. From the second property, we get for all $v\in V\setminus F$ that $B(v,t)=1$ and $B(v,t+\Psi_i)=1$. If the pulse in round $t$ is good, the claim holds. Hence, assume that there is a round $t'\in \{t+1,\ldots,t+\Psi_i-1\}$ in which another pulse occurs, that is, $B(v,t')=1$ for some $v \in V \setminus F$. This entails that $B_{1-i}(v,t')=1$ by the third property. We claim that in this case the pulse in round $t+\Psi_i$ is good. To show this, we exploit the fourth property. Recall that $C>\Psi_i+\Phi$, i.e., $t'+C>t+\Psi_i+\Phi$. We distinguish two cases:
\begin{itemize}
  \item In the case $i=0$, we have that $t'+\Psi_{1-i}=t'+3\Phi=t'+\Psi_0+\Psi>t+\Psi_0+\Phi$, that is, the pulse in round $t+\Psi_0=t+\Psi_i$ is good.
  \item In the case $i=1$, we have that $t'+\Psi_{1-i}=t'+2\Phi< t+3\Phi= t+\Psi_1$ and $t'+2\Psi_{1-i}=t'+4\Phi=t'+\Psi_1+\Phi>t+\Psi_1+\Phi$, that is, the pulse in round $t+\Psi_1=t+\Psi_i$ is good.
\end{itemize}
In either case, a good pulse occurs by round $t+\max\{\Psi_0,\Psi_1\}\in \max\{T(\vec P_0),T(\vec P_1)\}+O(\Phi)$.
\end{proof}

From the above lemma and the constructions discussed in this section, we get the following theorem.

\begin{theorem}\label{thm:weak-pulsers}
Let $n = n_0 + n_1$ and $f = f_0 + f_1 + 1$, where $n>3f$. Suppose $\vec C$ is an $f$-resilient consensus algorithm on $n$ nodes and let $\Phi \ge T(\vec C)+2)$. If there exist $f_i$-resilient strong $\Psi_i$-pulser algorithms on $n_i$ nodes, where $\Psi_0 = 2\Phi$ 
and $\Psi_1 = 3\Phi$, then there exists an $f$-resilient weak $\Phi$-pulser $\vec W$ on $n$ nodes that satisfies
\begin{itemize}[noitemsep]
 \item $T(\vec W) \in \max\{T(\vec P_0),T(\vec P_1)\} + O(\Phi)$,
 \item $M(\vec W) \in \max\{M(\vec P_0),M(\vec P_1)\} + O(M(\vec C))$.
\end{itemize}
\end{theorem}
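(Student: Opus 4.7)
The plan is to assemble the pieces already developed in this section. First I would apply \theoremref{thm:silent-consensus} to transform $\vec C$ into a silent binary consensus protocol $\vec C'$ with $T(\vec C') = T(\vec C)+2$ and $M(\vec C')=M(\vec C)$, preserving $f$-resilience. The hypothesis $\Phi \ge T(\vec C)+2$ then becomes $\Phi \ge T(\vec C')$, which is precisely the precondition under which the pruning algorithm was analysed, and hence also the hypothesis of \lemmaref{lemma:good_pulse}. Moreover, $\Psi_0 = 2\Phi$ and $\Psi_1 = 3\Phi$ are coprime multiples of $\Phi$, both at least $T(\vec C')$, so all the frequency constraints used in the previous lemmas are met.

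With these parameters fixed, I would instantiate the whole pipeline end-to-end: partition $V$ into $V_0,V_1$ of sizes $n_0,n_1$; have each $V_i$ execute the given strong $\Psi_i$-pulser $\vec P_i$; run the filtering rules from Section~\ref{sec:weak-pulsers} to produce the variables $b_i(v,t)$; feed these into the pruning algorithm instantiated with two copies $\vec C_0',\vec C_1'$ of the silent consensus $\vec C'$ to produce $B_0(v,t)$ and $B_1(v,t)$; and finally define the weak-pulser output $B(v,t)=\max\{B_0(v,t),B_1(v,t)\}$.

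The three weak-pulser properties then follow directly from \lemmaref{lemma:good_pulse}: property W1 from the first bullet of its conclusion (agreement on $B$ after round $t$), and W2--W3 from the existence of a good pulse by round $t + \max\{\Psi_0,\Psi_1\} \in \max\{T(\vec P_0),T(\vec P_1)\} + O(\Phi)$. This yields the claimed stabilisation bound $T(\vec W) \in \max\{T(\vec P_0),T(\vec P_1)\} + O(\Phi)$.

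For the message size, I would simply tally the contributions per round: each node in $V_i$ contributes $M(\vec P_i)$ bits for $\vec P_i$; the filtering layer broadcasts the $O(1)$-bit values $a_i(v,t)$ and $m_i(v,t)$; the pruning layer broadcasts the $O(1)$-bit value $b_i(v,t)$ and runs the two silent consensus instances, each contributing $M(\vec C')=M(\vec C)$ bits. Summing, $M(\vec W) \le \max\{M(\vec P_0),M(\vec P_1)\} + O(M(\vec C))$. The main conceptual obstacle is not in the argument itself but in verifying that silent consensus can safely replace ordinary consensus inside the pruning stage; this is handled precisely by \corollaryref{coro:execute_consistently} and \lemmaref{lemma:full_participation}, which already exploit the silence property, so no new work is needed.
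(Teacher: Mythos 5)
Your proposal is correct and follows essentially the same route as the paper's proof: apply \theoremref{thm:silent-consensus} to obtain a silent protocol $\vec C'$ with $T(\vec C')=T(\vec C)+2$, run the filtering and pruning construction with $\Psi_0=2\Phi$, $\Psi_1=3\Phi$, invoke \lemmaref{lemma:good_pulse} for the weak-pulser properties and stabilisation bound, and tally message sizes using that $\vec P_0$ and $\vec P_1$ run on disjoint node sets while the two copies of $\vec C'$ and the filtering/pruning bits add $O(M(\vec C))+O(1)$ per round. Your explicit remark that $\Phi \ge T(\vec C)+2$ translates into $\Phi \ge T(\vec C')$ is exactly the bookkeeping the paper relies on implicitly.
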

\begin{proof}
By \theoremref{thm:silent-consensus}, we can transform $\vec C$ into a silent consensus protocol $\vec C'$, at the cost of increasing its round complexity by $2$. Using $\vec C'$ in the construction, \lemmaref{lemma:good_pulse} shows that we obtain a weak $\Phi$-pulser with the stated stabilisation time, which by construction tolerates $f$ faults. Concerning the message size, note that we run $\vec P_0$ and $\vec P_1$ on disjoint node sets. Apart from sending $\max\{M(\vec P_0),M(\vec P_1)\}$ bits per round for its respective strong pulser, each node may send $M(\vec C)$ bits each to each other node for the two copies $\vec C_i$ of $\vec C$ it runs in parallel, plus a constant number of additional bits for the filtering construction including its outputs $b_i(\cdot,\cdot)$.
\end{proof}

\section{Main results}\label{sec:main}

Finally, in this section we put the developed machinery to use. As our main result, we show how to recursively construct strong pulsers out of consensus algorithms. 

\begin{theorem}\label{thm:consensus-to-pulsers}
Suppose that we are given a family of $f$-resilient deterministic consensus algorithms $\vec C(f)$ running on any number $n>3f$ of nodes in $T(\vec C(f))$ rounds using $M(\vec C(f))$-bit messages, where $T(\vec C(f))$ and $M(\vec C(f))$ are non-decreasing in $f$. Then, for any $\Psi \in \bN$, $f\in \bN_0$, and $n>3f$, there exists a strong $\Psi$-pulser $\vec P$ on $n$ nodes that
\begin{itemize}[noitemsep]
  \item stabilises in time $T(\vec P) \in (1+o(1))\Psi + O\left(\sum_{j=0}^{\lceil\log f\rceil}T(\vec C(2^j))\right)$ and
  \item uses messages of size at most $M(\vec P) \in O\left(1+\sum_{j=0}^{\lceil\log f\rceil}M(\vec C(2^j))\right)$ bits, 
\end{itemize} 
where the sums are empty for $f=0$.
\end{theorem}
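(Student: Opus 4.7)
I would proceed by induction on $f$. For the base case $f = 0$, the claim reduces to constructing a $0$-resilient strong $\Psi$-pulser stabilising in $(1+o(1))\Psi$ rounds with $O(1)$-bit messages. A trivial ``follow-the-leader'' scheme works: node $1$ maintains a local counter modulo $\Psi$ and broadcasts the single bit indicating whether the counter equals $0$, while every node (including node $1$) outputs the bit it received from node $1$ in the previous round. Since node $1$'s counter reaches $0$ within $\Psi$ rounds, the pulser stabilises by round $\Psi + 1$ using $1$-bit messages, matching the claimed bounds with an empty sum.

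For the inductive step with $f \geq 1$ and $n > 3f$, I split $f - 1 = f_0 + f_1$ as evenly as possible (so $f_i \leq \lceil (f-1)/2 \rceil$ and hence $\lceil \log f_i \rceil \leq \lceil \log f \rceil - 1$) and partition the $n$ nodes into disjoint sets $V_0, V_1$ of sizes $n_i \geq 3 f_i + 1$, which is possible since $n \geq 3f + 1 = 3(f_0 + f_1 + 1) + 1$. Setting $\Phi = \Theta(T(\vec C(f)) + \log \Psi)$ large enough to satisfy $\Phi \geq T(\vec C(f)) + 2$ (required by \theoremref{thm:weak-pulsers}) and $\Phi \geq T(\vec C(f)) + O(\log \Psi)$ (required by \corollaryref{cor:strong-pulsers} after applying \theoremref{thm:multi-valued}), I invoke the induction hypothesis to obtain an $f_0$-resilient strong $(2\Phi)$-pulser $\vec P_0$ on $V_0$ and an $f_1$-resilient strong $(3\Phi)$-pulser $\vec P_1$ on $V_1$. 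Plugging $\vec P_0$, $\vec P_1$, and $\vec C(f)$ into \theoremref{thm:weak-pulsers} gives an $f$-resilient weak $\Phi$-pulser $\vec W$ on $n$ nodes, and plugging $\vec W$ together with the $\Psi$-value consensus routine obtained from $\vec C(f)$ via \theoremref{thm:multi-valued} into \corollaryref{cor:strong-pulsers} yields the desired $f$-resilient strong $\Psi$-pulser $\vec P$.

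The complexity bounds then follow by unrolling the recursion. Chaining the inequalities from \corollaryref{cor:strong-pulsers}, \theoremref{thm:multi-valued}, and \theoremref{thm:weak-pulsers} gives
\[
  T(\vec P) \leq \Psi + T(\vec C(f)) + O(\log \Psi + \Phi) + \max_{i \in \{0,1\}} T(\vec P_i),
\]
and analogously $M(\vec P) \leq O(M(\vec C(f))) + \max_i M(\vec P_i)$. Substituting the inductive bound on $T(\vec P_i)$, noting that the inner pulse frequencies satisfy $\Psi_i = O(\Phi) = O(T(\vec C(f)) + \log \Psi)$, and using monotonicity of $T(\vec C(\cdot))$ together with $\lceil \log f_i \rceil \leq \lceil \log f \rceil - 1$, the recursion unrolls geometrically into $T(\vec P) \leq (1 + o(1))\Psi + O\bigl(\sum_{j=0}^{\lceil \log f \rceil} T(\vec C(2^j))\bigr)$; the message-size bound unrolls analogously.

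The delicate point is keeping the leading coefficient of $\Psi$ equal to $1 + o(1)$ rather than growing across recursion levels. The $+\Psi$ term enters only once, at the outermost invocation of \corollaryref{cor:strong-pulsers}. The $O(\log \Psi)$ overhead from the multivalued reduction does propagate through $\Phi$ into the inner frequencies $\Psi_0, \Psi_1$, but since $\Psi_i = O(T(\vec C(f)) + \log \Psi)$, its effect is absorbed into the consensus-time sum at deeper levels rather than multiplying $\Psi$; the single additive $O(\log \Psi)$ surviving at the top combines with the outer $\Psi$ to give the claimed $(1 + o(1))\Psi$ factor.
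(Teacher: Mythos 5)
Your proposal is correct and takes essentially the same route as the paper's own proof: a recursive construction that splits the nodes into two blocks of roughly halved resilience, feeds the inherited strong pulsers together with $\vec C(f)$ into \theoremref{thm:weak-pulsers}, and combines the resulting weak $\Phi$-pulser with the $\Psi$-value consensus routine from \theoremref{thm:multi-valued} via \corollaryref{cor:strong-pulsers}, anchored at the same trivial leader-based $0$-resilient pulser, with the same absorption argument for the $(1+o(1))\Psi$ term. The only differences are bookkeeping: you induct on $f$ with an even split of $f-1$ and use $\lceil\log f_i\rceil\le\lceil\log f\rceil-1$ together with monotonicity of $T(\vec C(\cdot))$ and $M(\vec C(\cdot))$, whereas the paper inducts on $k$ with $f<2^k$ and carries monotone pulser bounds through the induction.
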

\begin{proof}
We show by induction on $k$ that $f$-resilient strong $\Psi$-pulsers $\vec P(f,\Psi)$ on $n>3f$ nodes with the stated complexity exist for any $f< 2^k$, with the addition that the (bounds on) stabilisation time and message size of our pulsers are non-decreasing in $f$. We anchor the induction at $k=0$, i.e., $f=0$, for which, trivially, a $0$-resilient strong $\Psi$-pulser with $n\in \bN$ nodes is given by one node generating pulses locally and informing the other nodes when to do so. This requires $1$-bit messages and stabilises in $\Psi+1$ rounds.

Now assume that $2^k\leq f < 2^{k+1}$ for $k\in \bN_0$ and the claim holds for all $0 \leq f' < 2^k$. Since $2\cdot (2^k-1)+1 = 2^{k+1}-1$, there are $f_0,f_1<2^k$ such that $f=f_0+f_1+1$. Moreover, as $n>3f>3f_0+3f_1$, we can pick $n_i>3f_i$ for both $i\in \{0,1\}$ satisfying $n=n_0+n_1$. Let $\vec P({f',\Psi'})$ denote a strong $\Psi'$-pulser that exists by the induction hypothesis for $f'<2^k$.

Choose $\Phi \in O(\log \Psi)+T(\vec C(f))$ in accordance with \theoremref{thm:multi-valued} for $L=\Psi$; without loss of generality we may assume that the $O(\log \Psi)$ term is at least 2, that is, $\Phi \ge 2 + T(\vec C(f))$. We apply \theoremref{thm:weak-pulsers} to $\vec C(f)$ and $\vec P_i=\vec P({f_i,\Psi_i})$, where $\Psi_0 = 2\Phi$ and $\Psi_1=3\Phi$, to obtain a weak $\Phi$-pulser $\vec W$ with resilience $f$ on $n$ nodes and stabilisation time of
\begin{align*}
  T(\vec W) &\in \max\{T(\vec P_0),T(\vec P_1) \} + O(\Phi),
\end{align*}
and message size of 
\begin{align*}
  M(\vec W) &\in \max\{M(\vec P_0),M(\vec P_1) \} + O(M(\vec C(f))).
\end{align*}
Next, we apply \theoremref{thm:multi-valued} to $\vec C(f)$ to obtain an $f$-resilient $\Psi$-value consensus protocol $\vec C'$ that uses $M(\vec C(f))$-bit messages and runs in $T(\vec C') \le \Phi$ rounds. We feed the weak pulser $\vec W$ and the multivalued consensus protocol $\vec C'$ into \corollaryref{cor:strong-pulsers} to obtain an $f$-resilient strong $\Psi$-pulser $\vec P$ with 
a stabilisation time of 
\begin{align*}
  T(\vec P) & \le T(\vec C') + T(\vec W)+\Psi \le T(\vec W) + \Psi +\Phi \\
            & \in \max\{T(\vec P_0),T(\vec P_1) \} + \Psi + O(\Phi)
\end{align*}
and message size bounded by
\begin{align*}
  M(\vec P) & \le M(\vec W)+M(\vec C(f)) \\
            & \in  \max\{M(\vec P_0),M(\vec P_1) \} + O(M(\vec C(f))).
\end{align*}

Applying the bounds given by the induction hypothesis to $\vec P_0$ and $\vec P_1$, the definitions of $\Phi$, $\Psi_0$ and $\Psi_1$, and the fact that both $T(\vec C(f))$ and $M(\vec C(f))$ are non-decreasing in $f$, we get that the stabilisation time satisfies
\begin{align*}
  T(\vec P) &\in \max\{T(\vec P({f_0,\Psi_0})),T(\vec P({f_1,\Psi_1})) \} + \Psi + O(\Phi) \\
            &\subseteq (1+o(1))\cdot 3\Phi + O\left(\sum_{j=0}^{\lceil \log 2^k \rceil}T(\vec C(2^j))\right) + \Psi + O(\Phi) \\
            &\subseteq \Psi + O(\log \Psi) + O\left(\sum_{j=0}^{\lceil \log 2^k \rceil}T(\vec C(2^j))\right) + O(T(\vec C(f)))  \\
            &\subseteq (1+o(1))\Psi + O\left(\sum_{j=0}^{\lceil \log f \rceil}T(\vec C(2^j))\right), 
\end{align*}
and message size is bounded by
\begin{align*}
  M(\vec P) & \in  \max\{M(\vec P({f_0,\Psi_0})),M(\vec P({f_1,\Psi_1})) \} + O(M(\vec C(f))) \\
            & \subseteq O\left(1+\sum_{j=0}^{\lceil\log 2^k \rceil}M(\vec C(2^j))\right) + O(M(\vec C(f))) \\
            & \subseteq O\left(1+\sum_{j=0}^{\lceil\log f\rceil}M(\vec C(2^j))\right).
\end{align*}
Because we bounded complexities using $\max_i \{ T(\vec P_i) \}$, $\max \{ M(\vec P_i) \}$, $T(\vec C(f))$ and $M(\vec C(f))$, all of which are non-decreasing in $f$ by assumption, we also maintain that the new bounds on stabilisation time and message size are non-decreasing in $f$. Thus, the induction step succeeds and the proof is complete.
\end{proof}

Plugging in the phase king protocol~\cite{berman89consensus}, which has optimal resilience, running time $O(f)$, and constant message size, we can extract a strong pulser that is optimally resilient, has asymptotically optimal stabilisation time, and message size $O(\log f)$.
\begin{corollary}\label{coro:consensus-to-pulsers}
For any $\Psi,f\in \bN$ and $n>3f$, an $f$-resilient strong $\Psi$-pulser on $n$ nodes with stabilisation time $(1+o(1))\Psi + O(f)$ and message size $O(\log f)$ exists.
\end{corollary}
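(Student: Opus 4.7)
The plan is to invoke \theoremref{thm:consensus-to-pulsers} directly, instantiating the family of consensus algorithms with the phase king protocol. Concretely, for each $f' \in \bN$ with $n' > 3f'$, let $\vec C(f')$ be the phase king protocol, which is $f'$-resilient, runs in $T(\vec C(f')) \in O(f')$ rounds, uses $M(\vec C(f')) \in O(1)$-bit messages, and whose stated complexity bounds are clearly non-decreasing in $f'$. So the hypotheses of \theoremref{thm:consensus-to-pulsers} are satisfied.

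It then remains to evaluate the two sums appearing in the theorem. For the stabilisation time, the geometric series dominates:
\[
\sum_{j=0}^{\lceil\log f\rceil} T(\vec C(2^j)) \in \sum_{j=0}^{\lceil\log f\rceil} O(2^j) \subseteq O(2^{\lceil\log f\rceil+1}) = O(f),
\]
yielding $T(\vec P) \in (1+o(1))\Psi + O(f)$. For the message size, the summands are each $O(1)$, so
\[
1 + \sum_{j=0}^{\lceil\log f\rceil} M(\vec C(2^j)) \in O(\log f),
\]
yielding $M(\vec P) \in O(\log f)$. This gives the claimed bounds.

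There is no real obstacle here: once \theoremref{thm:consensus-to-pulsers} is in hand, the corollary is just a matter of substituting the phase king parameters and summing a geometric (for time) and arithmetic (for messages) series. The only minor subtlety is to make sure the case $f=0$ is handled: the sums are empty, and the base case of the induction in \theoremref{thm:consensus-to-pulsers} already supplies a $0$-resilient strong $\Psi$-pulser with stabilisation time $\Psi+1$ and $1$-bit messages, consistent with the stated bounds.
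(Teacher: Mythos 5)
Your proposal is correct and matches the paper's own argument: the corollary is obtained by instantiating \theoremref{thm:consensus-to-pulsers} with the phase king protocol ($O(f)$ rounds, $O(1)$-bit messages, resilience $f<n/3$) and evaluating the resulting geometric and $O(\log f)$-term sums exactly as you do.
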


We obtain efficient solutions to the firing squad and synchronous counting problems.
\begin{corollary}
For any $f\in \bN$ and $n>3f$, an $f$-resilient firing squad on $n$ nodes with stabilisation and response times of $O(f)$ and message size $O(\log f)$ exists.
\end{corollary}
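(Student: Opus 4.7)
The plan is to simply combine the firing squad reduction of \theoremref{thm:squad} with the strong pulser of \corollaryref{coro:consensus-to-pulsers}, using the phase king protocol of Berman and Garay as the underlying binary consensus routine $\vec C$. Recall that the phase king protocol is an $f$-resilient binary consensus algorithm for any $n > 3f$ with $T(\vec C) \in O(f)$ and $M(\vec C) \in O(1)$; in particular, both quantities are non-decreasing in $f$, so the hypotheses of \theoremref{thm:consensus-to-pulsers} are met.

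First, I would choose $\Psi \in \Theta(f)$ large enough to satisfy $\Psi > T(\vec C)$, which is the compatibility requirement in \theoremref{thm:squad}. Applying \corollaryref{coro:consensus-to-pulsers} with this $\Psi$ then yields an $f$-resilient strong $\Psi$-pulser $\vec P$ on $n$ nodes with stabilisation time
\[
T(\vec P) \le (1+o(1))\Psi + O(f) \in O(f)
\]
and message size $M(\vec P) \in O(\log f)$.

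Next, I would feed $\vec P$ and the phase king protocol $\vec C$ into \theoremref{thm:squad}. This immediately produces an $f$-resilient firing squad $\vec F$ with stabilisation time
\[
T(\vec F) \le T(\vec P) + \Psi \in O(f),
\]
response time $R(\vec F) \le \Psi + T(\vec C) \in O(f)$, and message size
\[
M(\vec F) \le M(\vec P) + M(\vec C) + 1 \in O(\log f),
\]
as claimed. There is no real obstacle here, since all of the heavy lifting has already been done in the preceding sections; the only mild care needed is in selecting $\Psi$ to simultaneously satisfy $\Psi > T(\vec C)$ and $\Psi \in O(f)$, which is possible because $T(\vec C) \in O(f)$ for the phase king protocol.
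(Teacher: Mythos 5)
Your proposal is correct and matches the paper's own proof: the paper likewise instantiates \corollaryref{coro:consensus-to-pulsers} with $\Psi\in O(f)$ tied to the phase king protocol's running time and then applies \theoremref{thm:squad} to the resulting strong pulser and the phase king consensus routine. Your extra care in choosing $\Psi$ strictly larger than $T(\vec C)$ is a harmless refinement of the same argument.
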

\begin{proof}
We use \corollaryref{coro:consensus-to-pulsers} with $\Psi\in O(f)$ being the running time of the phase king protocol~\cite{berman89consensus}, followed by applying \theoremref{thm:squad} to the obtained pulser and the phase king protocol.
\end{proof}

\begin{corollary}
For any $C,f\in \bN$ and $n>3f$, an $f$-resilient $C$-counter on $n$ nodes with stabilisation time $O(f + \log C)$ and message size $O(\log f)$ exists.
\end{corollary}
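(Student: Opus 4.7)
The plan is to avoid a naive application of \corollaryref{coro:consensus-to-pulsers} with $\Psi = C$ followed by \lemmaref{lemma:strong_pulsers_equal_counters}, which would cost $(1+o(1))C + O(f)$ rounds---far too slow when $C$ is much larger than $f$. Instead, I will extract the $C$-counter that is already implicit in the strong pulser construction of \sectionref{sec:strong-pulsers}: by \theoremref{thm:strong-agreement}, the internal variable $c(v,t)$ produced by that construction is a synchronous $\Psi$-counter stabilising in $T(\vec W) + T(\vec C) + 1$ rounds, for \emph{any} choice of $\Psi$ together with a $\Psi$-value consensus $\vec C$ and a weak $\Phi$-pulser $\vec W$ satisfying $\Phi \geq T(\vec C)$. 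The crucial point is that the pulser period $\Phi$ is decoupled from the counter modulus $\Psi$, which is what removes the $\Theta(C)$ term.

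Concretely, first apply \theoremref{thm:multi-valued} to the phase king protocol~\cite{berman89consensus} to obtain a $C$-value consensus $\vec C$ with $T(\vec C) \in O(\log C + f)$ and $M(\vec C) \in O(1)$. Set $\Phi = T(\vec C)$. Invoke \corollaryref{coro:consensus-to-pulsers} with this $\Phi$ to obtain an $f$-resilient strong $\Phi$-pulser with stabilisation $(1+o(1))\Phi + O(f) \in O(\log C + f)$ and message size $O(\log f)$. Since properties S1--S2 immediately imply W1--W3 (taking $\Phi = \Psi$ and the same round $t_0$), any strong $\Phi$-pulser is \emph{a fortiori} a weak $\Phi$-pulser, so this pulser may serve as the $\vec W$ required by the construction.

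Finally, instantiate the construction of \sectionref{sec:strong-pulsers} with $\vec W$, the $C$-value consensus $\vec C$, and parameter $\Psi = C$. By \theoremref{thm:strong-agreement}, the variables $c(v,t)$ then implement a synchronous $C$-counter with stabilisation time $T(\vec W) + T(\vec C) + 1 \in O(\log C + f)$ and message size $M(\vec W) + M(\vec C) \in O(\log f)$, as desired. The only non-routine step is recognising that the Section~\ref{sec:strong-pulsers} construction already yields a $C$-counter for arbitrary $C$: one simply discards the rarely-pulsing output bit $b(v,t)$ and reads off the counter variable $c(v,t)$ directly.
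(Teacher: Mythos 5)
Your proposal is correct and matches the paper's own argument in its essential idea: both avoid the $+\Psi=+C$ waiting overhead by reading the counter $c(v,t)$ directly out of the \sectionref{sec:strong-pulsers} construction via \theoremref{thm:strong-agreement}, applied to a $C$-valued consensus routine (from \theoremref{thm:multi-valued} plus phase king) whose companion pulser has period $\Phi\in O(f+\log C)$ decoupled from $C$. The only difference is packaging --- the paper reuses the weak $\Phi$-pulser produced at the last level of the recursion in \theoremref{thm:consensus-to-pulsers}, whereas you feed in a strong $\Phi$-pulser from \corollaryref{coro:consensus-to-pulsers} and note that a strong pulser is a fortiori a weak one --- which yields the same $O(f+\log C)$ stabilisation time and $O(\log f)$ message size.
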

\begin{proof}
In the last step of the construction of \theoremref{thm:consensus-to-pulsers}, we do not use \corollaryref{cor:strong-pulsers} to extract a strong pulser, but directly obtain a counter using \theoremref{thm:strong-agreement}. This avoids the overhead of $\Psi$ due to waiting for the next pulse. Recalling that the $o(\Psi)$ term in the complexity comes from the $O(\log \Psi)$ additive overhead in time of the multi-value consensus routine, the claim follows.
\end{proof}

We remark that one can strengthen the bound on the stabilisation time to $O(f+(\log \Psi)/B)$ using messages of size $B$, by using larger messages in the reduction given by \theoremref{thm:multi-valued}~\cite{lenzen13time}. However, this affects the asymtotic stabilisation time only if $\Psi$ is super-exponential in $f$.

\section{Probabilistic sublinear-time algorithms}\label{sec:rand}

So far, we have confined our discussion to the deterministic setting. However, it is straightforward to adapt our framework to also utilise \emph{randomised} consensus routines, which can break the linear-in-$f$ bound for consensus~\cite{feldman89optimal} and attain better bit complexities than deterministic algorithms~\cite{king11breaking}. Indeed, Ben-Or al.~\cite{ben-or08fast} have shown how to obtain randomised counting algorithms that stabilise in $O(1)$ expected time. However, these algorithms rely on a shared coin, which is costly in terms of communication. 

We now use our framework to obtain fast and communication-efficient \emph{probabilistic} pulsers that stabilise in $\polylog f$ communication rounds, where algorithms need to broadcast only $\polylog f$ bits per round. Here, a probabilistic pulser means that after stabilisation the pulser $\vec P$ may fail to behave correctly in round $t \geq T(\vec P)$ with some small positive probability after which it needs to re-stabilise again.

\subsection{Using probabilistic consensus routines}

For our framework, we require that the running time of the underlying consensus algorithms satisfy deterministic running time bounds, while we allow for a probabilistic guarantee on the agreement and validity properties. That is, we need \emph{Monte Carlo} consensus algorithms. 
Accordingly, we demand that the agreement and validity properties of the Monte Carlo consensus algorithm hold with probability $1-p$, where the probability of failure is $p \leq 1/f^c$ for a sufficiently large constant $c$. Noting that our recursive construction of strong pulsers involves $f^{O(1)}$ calls to the utilised consensus routine within $f^{O(1)}$ rounds, it follows from the union bound that with probability at least 
\[
 1 - \sum^{f^{O(1)}}_{i=1} 1/f^c \ge 1 - 1/f^{c-O(1)}
\]
all consensus instances succeed. These observations give the following generalisation of \theoremref{thm:consensus-to-pulsers}.

\begin{theorem}
Suppose that for constant $\varepsilon\geq 0$ we are given a family of $f$-resilient consensus algorithms $\vec C(f)$ running on any number $n>(3+\varepsilon)f$ of nodes in $T(\vec C(f))$ rounds using $M(\vec C(f))$-bit messages, where $T(\vec C(f))$ and $M(\vec C(f))$ are increasing in $f$, and $\vec C(f)$ fails with probability $p \le 1/f^c$ for sufficiently large $c\in O(1)$. Then, for any $\Psi,f\in \bN$ and $n>(3+\varepsilon)f$, a strong probabilistic $\Psi$-pulser $\vec P$ on $n$ nodes with
\begin{itemize}[noitemsep]
  \item $T(\vec P) \in (1+o(1))\Psi + O\left(\sum_{j=0}^{\lceil\log f\rceil}T(\vec C(2^j))\right)$
  \item $M(\vec P) \in O\left(1+\sum_{j=0}^{\lceil\log f\rceil}M(\vec C(2^j))\right)$
\end{itemize} 
exists, where for $f=0$ the sums are empty and on any round $t \ge T(\vec P)$ the algorithm $\vec P$ fails with probability $f^{O(1)}p$ (and then needs to re-stabilise). 
\end{theorem}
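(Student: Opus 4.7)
The plan is to retrace the inductive construction in the proof of \theoremref{thm:consensus-to-pulsers}, substituting the Monte Carlo consensus routines $\vec C(f)$ in place of their deterministic counterparts. Conditional on every invocation of $\vec C(\cdot)$ satisfying its agreement and validity properties, the arguments of \theoremref{thm:weak-pulsers}, \corollaryref{cor:strong-pulsers}, and \theoremref{thm:silent-consensus} go through verbatim (the silent-consensus transformation introduces no additional randomness), yielding the same stabilisation-time and message-size recurrences as in the deterministic case. The resilience requirement is preserved down the recursion: at each step the two sub-pulsers run on $n_i \approx n/2$ nodes with $f_i < f/2$ faults, so the ratio $n_i/f_i > 3+\varepsilon$ is maintained and $\vec C(2^j)$ is applicable at every depth.

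Next I would bound the aggregate failure probability by a union bound. The recursion has depth $\lceil \log f \rceil$; at level $k$ counted from the top the construction runs $2^k$ pulser instances on disjoint node subsets, each invoking a constant number of consensus instances per pulser cycle. Using the complexity bounds as a bootstrap, within any interval of $f^{O(1)}$ rounds the construction invokes at most $f^{O(1)}$ consensus instances in total across all levels. With $p \le 1/f^c$ for a sufficiently large constant $c$, the union bound gives that all of these succeed with probability at least $1 - f^{O(1)-c}$, which can be driven below any desired inverse polynomial by choosing $c$ large enough. On this event the deterministic analysis applies unchanged, establishing stabilisation within $T(\vec P)$ rounds and the claimed message-size bound.

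For the per-round post-stabilisation guarantee, the key observation is that on any given round at most $O(f)$ consensus instances are simultaneously active, $O(1)$ per pulser instance summed over all $O(\log f)$ recursion levels. A new failure on round $t \ge T(\vec P)$ can originate only from one of these completing incorrectly, giving a per-round failure probability of at most $O(f) \cdot p \in f^{O(1)} p$. Re-stabilisation within a further $T(\vec P)$ rounds then follows from the self-stabilising design inherited from the deterministic analysis, applied from the first round at which all subsequent consensus calls succeed.

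The main obstacle, and the step that I expect to require the most care, is verifying that a single bad consensus outcome really behaves like a localised transient fault that the construction absorbs cleanly, rather than cascading through the recursion and corrupting invariants at higher levels for longer than $T(\vec P)$ rounds. The filtering machinery of \sectionref{sec:weak-pulsers} and the silent-consensus wrapper of \theoremref{thm:silent-consensus} were engineered in the deterministic setting precisely so that downstream components recover once their inputs become well-behaved again; I would make this intuition explicit by checking that every invariant used in the deterministic analysis depends only on the behaviour of consensus instances finishing within the last $T(\vec P)$ rounds, so that a bad instance is indistinguishable from a transient state corruption that the algorithm is already designed to survive.
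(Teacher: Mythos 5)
Your proposal is correct and follows essentially the same route as the paper: the paper's own argument is just the observation that the recursive construction makes $f^{O(1)}$ consensus calls within any window of $f^{O(1)}$ rounds, so a union bound with $p\le 1/f^c$ lets one condition on all calls succeeding and reuse the deterministic analysis verbatim, with the per-round post-stabilisation failure probability $f^{O(1)}p$ obtained by a union bound over the instances completing in that round and a failed instance treated as a transient fault from which the self-stabilising construction recovers. Your treatment is in fact somewhat more explicit than the paper's about the re-stabilisation step, and correctly identifies that nothing further is needed there.
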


The additional reservation that $\vec C$ may require $n>(3+\varepsilon)f$ accounts for the fact that various randomised consensus protocols have slightly suboptimal resilience. Note also that any further model requirements of the randomised consensus protocols, such as private channels, of course still apply when employing our framework.

\subsection{Probabilistic pulsers, counting and firing squads}

As a concrete example, we plug in the consensus algorithm by King and Saia~\cite{king11breaking}, as it satisfies the properties we need. We now make the additional assumptions that (1) the number of faults is restricted to $f<n/(3+\varepsilon)$ (for arbitrarily small constant $\varepsilon>0$) and (2) communication is via private channels, i.e., faulty nodes behavior in round $t$ is a function of all communication from correct nodes to faulty nodes in rounds $t'\leq t$. 

\begin{theorem}[\cite{king11breaking}]
There exists a protocol $\vec C$ that with probability $1-1/f^c$ solves consensus in $\polylog f$ rounds using messages of size $\polylog f$, provided $f<n/(3+\varepsilon)$ and communication is via private channels.
\end{theorem}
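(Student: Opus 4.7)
The theorem is essentially a direct restatement of the main result of King and Saia~\cite{king11breaking}, so the plan is to invoke their protocol as a black box and to verify that its guarantees can be phrased in terms of $f$ rather than $n$. The approach has three steps.

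First, I would recall the relevant guarantees from \cite{king11breaking}: under private channels and an adaptive Byzantine adversary controlling at most a $(1/3-\varepsilon')$-fraction of the $n$ nodes, for any fixed constant $\varepsilon' > 0$, their protocol deterministically terminates within $\polylog n$ rounds, each node broadcasts at most $\polylog n$ bits per round, and with probability at least $1 - n^{-c_0}$ it satisfies agreement and validity, where $c_0 > 0$ is a constant determined by the internal amplification parameters of their construction.

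Second, I would translate the bounds from $n$ to $f$. The hypothesis $f < n/(3+\varepsilon)$ gives $n \leq (3+\varepsilon)f$, so $n = \Theta(f)$; consequently $\polylog n = \polylog f$ and $n^{-c_0} \leq f^{-c_0}$ up to constants. This already yields a protocol with the claimed round complexity and message size, and failure probability $1/f^{c_0}$. By choosing $\varepsilon'$ small enough in terms of $\varepsilon$, the resilience assumption of King and Saia is subsumed by our hypothesis.

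Third, to push the failure probability down to $1/f^{c}$ for an arbitrary prescribed constant $c$, I would scale up the internal committee sizes and sampling parameters of the King--Saia construction by a sufficiently large constant factor. Since these parameters enter the round and message complexity only through the $\polylog$ expressions, such a rescaling preserves all asymptotic bounds, while the Chernoff-style analysis in \cite{king11breaking} ensures that the failure-probability exponent grows (at least) linearly in the sample sizes. The main point I expect to need a careful check is that this rescaling does not disturb the adaptive-adversary argument, i.e., that the adversary cannot use its knowledge of the enlarged committees to bypass the amplification; this is standard for their proof strategy, but is the one nontrivial step in appropriating their theorem for the precise quantitative form stated here.
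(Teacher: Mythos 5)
The paper does not prove this statement at all: it is imported, with attribution, as a black-box result of King and Saia~\cite{king11breaking}, so your overall plan (invoke their protocol and translate parameters) is exactly what the paper does implicitly. The one genuine slip is in your second step: from $f<n/(3+\varepsilon)$ you conclude $n\le(3+\varepsilon)f$ and hence $n=\Theta(f)$, but the hypothesis gives the \emph{reverse} inequality $n>(3+\varepsilon)f$, i.e.\ $f=O(n)$ and not $n=O(f)$. This direction is harmless for the failure probability (since $n>f$, a guarantee of $1-1/n^{c_0}$ is at least as strong as $1-1/f^{c_0}$), but it breaks the translation of the complexity bounds: King--Saia's $\polylog n$ rounds and $\polylog n$-bit messages equal $\polylog f$ only in the regime $f=\Theta(n)$, which is the (implicit) regime in which the paper applies the result, not a consequence of $f<n/(3+\varepsilon)$. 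So either the regime $n=O(f)$ must be made an explicit standing assumption, or the bounds should be read as $\polylog n$; simply ``restricting to a committee of $\Theta(f)$ nodes'' would not repair this, since all $f$ faults could then sit inside the committee. Your third step (rescaling internal parameters to push the failure exponent up to an arbitrary prescribed $c$) is unnecessary for what the paper needs --- it takes $c$ to be whatever sufficiently large constant the King--Saia analysis delivers, and separately notes that failure probability can be reduced by increasing the running time --- but it is not wrong in spirit; also note the paper's remark that King--Saia bound the \emph{total} bits per node by $O(\sqrt{n}\polylog n)$, whereas here only the per-round, per-link message size $\polylog n$ is used.
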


We remark that the consensus algorithm from~\cite{king11breaking} actually limits the number of bits sent by each node to $O(\sqrt{n}\polylog n)$, but in our framework each node broadcasts $\Omega(\log f)$ bits per round.

\begin{corollary}
For any $\Psi,f\in \bN$, $n>(3+\varepsilon)f$ and constant $c$, an $f$-resilient strong probabilistic $\Psi$-pulser on $n$ nodes with stabilisation time $(1+o(1))\Psi + \polylog f$ and message size $\polylog f$ exists, where after stabilisation the algorithm will fail on any round with probability at most $p=1/f^c$.
\end{corollary}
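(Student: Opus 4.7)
The plan is to apply the probabilistic generalisation of \theoremref{thm:consensus-to-pulsers} stated just above, instantiated with the King--Saia consensus algorithm as the subroutine $\vec C(f)$. First I would verify that the King--Saia protocol meets the required hypotheses: it tolerates the optimal resilience $f<n/(3+\varepsilon)$, runs in a deterministically bounded $\polylog f$ rounds, uses $\polylog f$-bit messages, and can be made to fail with probability at most $1/f^{c'}$ for any constant $c'$ of our choosing (by a standard amplification / parameter-tuning argument that is implicit in the statement of the theorem quoted above). Both $T(\vec C(f))$ and $M(\vec C(f))$ are clearly non-decreasing in $f$, as required.

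Next I would bound the two sums appearing in the generalised theorem. Since $T(\vec C(2^j))\in\polylog(2^j)=\polylog j \subseteq \polylog f$ for every $j\le\lceil\log f\rceil$, and there are only $O(\log f)$ summands, we get
\[
\sum_{j=0}^{\lceil\log f\rceil}T(\vec C(2^j)) \in \polylog f,
\]
and the same calculation applies to the message-size sum. Plugging these bounds into the conclusion of the generalised theorem yields the claimed stabilisation time $(1+o(1))\Psi + \polylog f$ and message size $\polylog f$.

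The one piece that requires a small amount of care is the per-round failure probability. The generalised theorem asserts that after stabilisation the pulser fails on any round with probability at most $f^{O(1)}p$, where $p$ is the per-instance failure probability of $\vec C$. Since the recursive construction calls $\vec C$ at most $f^{O(1)}$ times per round (after unfolding the recursion to depth $\lceil\log f\rceil$), this blow-up is a fixed polynomial in $f$. To hit the target bound $1/f^c$ it therefore suffices to instantiate King--Saia with a sufficiently large constant $c'$ so that $f^{O(1)}\cdot f^{-c'}\le f^{-c}$; this is possible because $c'$ is a free parameter. The only mild obstacle in the write-up is bookkeeping: ensuring the private-channel assumption, the resilience constraint $n>(3+\varepsilon)f$, and the non-decreasing monotonicity requirements are all simultaneously preserved through the recursion, all of which follow directly from the statement of the King--Saia result and the hypotheses of the generalised theorem. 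No further ideas are needed beyond this substitution.
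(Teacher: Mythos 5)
Your proposal is correct and follows essentially the same route as the paper: the corollary is obtained by plugging the King--Saia protocol into the probabilistic generalisation of \theoremref{thm:consensus-to-pulsers}, bounding the $O(\log f)$-term sums by $\polylog f$, and absorbing the $f^{O(1)}$ union-bound blow-up in the failure probability by tuning the constant in the exponent. (Only a trivial notational slip: $\polylog(2^j)$ is $\poly(j)$, not $\polylog j$, but since $j\le\lceil\log f\rceil$ the containment in $\polylog f$ you need still holds.)
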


Similarly as before, we can obtain efficient probabilistic counting and firing squads algorithms from the probabilistic pulsers.

\begin{corollary}
For any $f\in \bN$, $n>(3+\varepsilon)f$ and constant $c$, an $f$-resilient firing squad on $n$ nodes with stabilisation and response times of $\polylog f$ and message size $\polylog f$ exists, where after stabilisation the algorithm will fail on any round with probability at most $p=1/f^c$.
\end{corollary}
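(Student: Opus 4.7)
The plan is to instantiate \theoremref{thm:squad} with the probabilistic strong pulser provided by the preceding corollary and the Monte Carlo consensus protocol of King and Saia, choosing parameters so that the resulting firing squad inherits both polylogarithmic complexity and the claimed per-round failure probability.

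Concretely, I would first fix a constant $c' > c$ large enough that the failure probabilities collected by union bound over the events considered below all remain at most $1/f^c$. Invoke the King--Saia theorem to obtain an $f$-resilient consensus protocol $\vec C$ with $T(\vec C),M(\vec C)\in \polylog f$ and per-instance failure probability at most $1/f^{c'}$, valid whenever $n>(3+\varepsilon)f$ and communication is via private channels. Next, pick $\Psi\in \polylog f$ with $\Psi>T(\vec C)$, and apply the probabilistic strong pulser corollary to produce an $f$-resilient strong probabilistic $\Psi$-pulser $\vec P$ with $T(\vec P)\in (1+o(1))\Psi+\polylog f=\polylog f$, $M(\vec P)\in \polylog f$, and per-round failure probability at most $1/f^{c'}$.

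Feeding $\vec P$ and $\vec C$ into \theoremref{thm:squad} immediately yields a firing squad $\vec F$ with stabilisation time $T(\vec F)\leq T(\vec P)+\Psi\in \polylog f$, response time $R(\vec F)\leq \Psi+T(\vec C)\in \polylog f$, and message size $M(\vec F)\leq M(\vec P)+M(\vec C)+1\in \polylog f$. The remaining point is the failure probability after stabilisation: \theoremref{thm:squad}'s correctness analysis rests only on (i) the pulser behaving as a strong $\Psi$-pulser in the current round, and (ii) the single consensus instance active in that round satisfying agreement and validity. Event (i) fails with probability at most $1/f^{c'}$ by the pulser guarantee, while event (ii) fails with probability at most $1/f^{c'}$ by the King--Saia guarantee; since at each round only one consensus instance is live, a union bound gives a per-round failure probability of at most $2/f^{c'}\leq 1/f^c$ for our choice of $c'$.

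The only subtle point, and hence the main place to take care, is to argue that the per-round failure bound really behaves like a bound on the event "in this round $\vec F$ violates its specification" rather than a global bound over the entire execution. This is straightforward here because the firing squad algorithm performs one pulser step and at most one consensus step per round, and the analysis of \theoremref{thm:squad} localises each violation of agreement, safety, or liveness to the failure of a single pulser round or a single consensus instance; thus the union bound above applies on a per-round basis, and upon any such failure the system simply re-stabilises in $\polylog f$ further rounds, exactly as in the pulser statement.
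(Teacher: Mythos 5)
Your proposal is correct and follows essentially the same route as the paper: the paper obtains this corollary by taking the probabilistic strong pulser from the preceding corollary (instantiated with the King--Saia consensus routine, so $\Psi\in\polylog f$) and feeding it together with that consensus routine into \theoremref{thm:squad}, exactly as you do. Your additional care about the per-round union bound over the pulser and the single live consensus instance is a sound elaboration of what the paper leaves implicit.
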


\begin{corollary}
For any $C,f\in \bN$, $n>(3+\varepsilon)f$, and constant $c$, an $f$-resilient $C$-counter on $n$ nodes with stabilisation time $O(\polylog f+\log C)$ and message size $\polylog f$ exists, where after stabilisation the algorithm will fail on any round with probability at most $p=1/f^c$.
\end{corollary}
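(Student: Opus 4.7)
The plan is to mimic the proof of the corresponding deterministic corollary while substituting the probabilistic machinery established in \sectionref{sec:rand}. Concretely, I would instantiate the framework using the King--Saia protocol~\cite{king11breaking} as the underlying binary consensus routine $\vec C$, tuned so that its per-instance failure probability is $1/f^{c'}$ for a constant $c'$ sufficiently larger than $c$; by the quoted theorem this costs only an extra constant factor in both time and message size (both still $\polylog f$), and this choice is compatible with $n>(3+\varepsilon)f$.

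Next I would build a weak $\Phi$-pulser $\vec W$ with $\Phi = \polylog f$ via the probabilistic analogue of \theoremref{thm:consensus-to-pulsers}, using the recursive weak/strong pulser construction from \sectionref{sec:strong-pulsers} and \sectionref{sec:weak-pulsers} all the way down, but plugged with $\vec C$. By the same analysis leading to that theorem, $\vec W$ has stabilisation time $\polylog f$ and message size $\polylog f$.

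The crucial trick--exactly as in the deterministic $C$-counter corollary--is to avoid the additive $\Psi$ overhead of \corollaryref{cor:strong-pulsers}. Instead, I would form a $C$-value consensus routine $\vec C'$ from $\vec C$ via \theoremref{thm:multi-valued}; this preserves message size up to a constant factor and adds $O(\log C)$ rounds, giving $T(\vec C')\in \polylog f + O(\log C)$ and $M(\vec C')\in \polylog f$. Then I would feed $\vec W$ and $\vec C'$ directly into \theoremref{thm:strong-agreement} to obtain a synchronous $C$-counter with stabilisation time $T(\vec W)+T(\vec C')+1\in O(\polylog f + \log C)$ and message size $M(\vec W)+M(\vec C')\in \polylog f$, as required.

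The main obstacle is bookkeeping the failure probability. Up to and including stabilisation, the construction executes $f^{O(1)}$ instances of $\vec C$ within $f^{O(1)}$ rounds, so by a union bound all of them succeed except with probability at most $f^{O(1)}/f^{c'}$, which is at most $1/f^c$ provided $c'$ was chosen large enough. After stabilisation, in each round only $O(1)$ consensus instances are in play (one per block at each level of the recursion, and $O(\log f)$ recursion levels), so the per-round failure probability remains $O(\log f)/f^{c'}\le 1/f^c$ after choosing $c'$ appropriately. This gives the claimed $p=1/f^c$ bound and, when a failure does occur, the algorithm re-stabilises within the stated time, completing the argument.
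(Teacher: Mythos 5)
Your overall route is the paper's own: instantiate the probabilistic analogue of \theoremref{thm:consensus-to-pulsers} with the King--Saia routine, skip the final extraction of a strong pulser via \corollaryref{cor:strong-pulsers}, and instead feed a weak pulser together with a $C$-value consensus routine obtained from \theoremref{thm:multi-valued} directly into \theoremref{thm:strong-agreement}; the union-bound bookkeeping of the failure probability also matches. However, there is a concrete gap in your parameter choice: you fix the weak pulser's period at $\Phi=\polylog f$, yet then run a $C$-value consensus routine $\vec C'$ with $T(\vec C')\in \polylog f+O(\log C)$ under its control. The construction analysed in \theoremref{thm:strong-agreement} explicitly assumes $\Phi\geq T(\vec C')$: its stabilisation argument needs the good pulse to be followed by at least $T(\vec C')-1$ silent rounds so that one instance of $\vec C'$ runs to completion undisturbed. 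Since $C$ is arbitrary, $\log C$ may dominate $\polylog f$ (the claimed bound $O(\polylog f+\log C)$ anticipates exactly this regime), and in that case your $\vec W$ pulses too frequently relative to $T(\vec C')$, no instance of $\vec C'$ is guaranteed to complete, and the counters never provably agree.

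The repair is exactly what the paper's recursion does: choose $\Phi\in\Theta(T(\vec C'))=\Theta(\polylog f+\log C)$, i.e., fold the $O(\log C)$ overhead of the multivalued reduction into the period of the weak pulser (and hence into the periods $2\Phi$ and $3\Phi$ of the two lower-resilience strong pulsers from which it is built). This leaves all bounds intact: the lower-level strong pulsers stabilise in $(1+o(1))\Psi_i+\polylog f\in O(\polylog f+\log C)$ rounds, the weak pulser in $O(\polylog f+\log C)$ rounds, and \theoremref{thm:strong-agreement} then yields stabilisation time $T(\vec W)+T(\vec C')+1\in O(\polylog f+\log C)$ with message size $\polylog f$, as claimed. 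With this adjustment, your failure-probability accounting (per-instance failure $1/f^{c'}$ for a sufficiently large constant $c'$, a union bound over the $f^{O(1)}$ instances executed before stabilisation, and the few instances alive in any fixed round thereafter) goes through, and the argument coincides with the paper's intended proof.
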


We note that we choose a failure probability of $1/f^{\Theta(1)}$ for illustrative purposes; by increasing the running time of the underlying consensus routine (incurring the corresponding linear increase in stabilisation time), one can decrease the failure probability exponentially.

\section{Extensions to other fault models}\label{sec:other}

In this section, we utilise our framework under more benign fault models than the one given by Byzantine faults. This allows us to tolerate a larger amount of faulty nodes: for example, while one cannot tolerate more than $f < n/3$ Byzantine faulty nodes, it is possible to tolerate any number of $f < n$ crash faults or $f < n/2$ send omission faults. 

We start by giving a simple and efficient algorithm for synchronous counting under crash faults; here, our framework is overkill, and a direct approach suffices. Together with the approach used in \sectionref{sec:firing-squad} and a crash-tolerant consensus algorithm, we readily obtain an efficient firing squad protocol in the crash fault setting. After this, we illustrate how to modify the construction of strong and weak pulsers given in \sectionref{sec:strong-pulsers} and \sectionref{sec:weak-pulsers} to work with omission faults. This highlights one of the key features of our construction: the resilience of the underlying consensus routine essentially dictates what kind of -- and how many -- permanent faults our self-stabilising counting and firing squad algorithms tolerate, while only making minor modifications to the various voting steps used in the construction.

\subsection{Counting and firing squads under crash faults\label{sec:crash-faults}}

Crash faults are perhaps the most benign fault type: the nodes do not send misinformation and, in the synchronous setting, all nodes can eventually detect which nodes have crashed. Thus, unlike in the Byzantine setting, designing algorithms under crash faults is relatively easy, as nodes crash cleanly and cause no further trouble.

\begin{definition}[Crash faults]
A \emph{crashing} node stops executing the algorithm in some round $r\in \bN$. In this round, the node manages to send only a subset of the messages it would send if it ran correctly. Thus, only a subset of the respective recipients receive a message from the crashed node in this round. The remaining nodes (and, in rounds $r'>r$ all nodes) receive no message.
\end{definition}

The benign nature of crash faults allows us to use more strict requirements in the synchronous counting and firing squad problems as we will see. In the following, let us use $F(t') \subseteq V$ to denote the set of nodes that have crashed before or in round $t'$. 

\paragraph{Optimal crash-tolerant counting.} Let us start with a definition of the synchronous counting problem under crash faults. The problem is defined similarly as in the case of Byzantine faults, but with the requirement that agreement and consistency are satisfied by the set of currently non-crashed nodes.

\begin{definition}[Counting with crash faults]
In synchronous $C$-counting with crash faults, an execution of an algorithm stabilises in round $t \in \bN$ if and only if all $t\leq t'\in \bN$ the output counters $c(\cdot)$ satisfy
\begin{enumerate}[label=SC\arabic*.,noitemsep]
 \item {\bf Agreement:} $c(v,t')=c(w,t')$ for all $v,w \in V \setminus F(t')$ and 
 \item {\bf Consistency:} $c(v,t'+1)=c(v,t')+1\bmod C$ for all $v,w\in V \setminus F(t'+1)$.
\end{enumerate}
\end{definition}

We now give a simple counting algorithm that attains optimal stabilisation time and resilience under crash faults. Let $c(v,t) \in [C]$ be a local variable that indicates the counter value of node $v$ on round $t$. On every round, every node $v$ broadcasts the value $c(v,t)$ to all other nodes. For every $u,v \in V$, let $c(v,u,t) \in [C] \cup \{ * \}$ denote the value node $v$ receives from node $u$ at the start of round $t+1$. Here, we use the special value $*$ to indicate that node $v$ received no message from node $u$. Observe that we have the guarantee that for any non-crashed node $v \in V \setminus F(t+1)$, we have that $c(v,u,t+1)=*$ for all crashed nodes $u \in F(t)$.

Let $U(v,t) = \{ u \in V : c(v,u,t) \neq * \}$ be the set of nodes $v$ received a message from at the start of round $t+1$. Node $v$ updates its counter value on round $t+1$ by picking the majority value among the values it received:
\[
 c(v,t+1) = \begin{cases} 
              x + 1 \bmod C & \text{if } |\{u \in U(v,t) : c(v,u,t)=x \}| > |U(v,t)|/2, \\
              0 & \text{otherwise.}
            \end{cases}
\]

\begin{lemma}\label{lemma:crash-stab}
 Suppose no node crashes on round $t$. Then for any $u,v \in V \setminus F(t')$ and all $t'>t$, we have that $c(u,t')=c(v,t')$ and $c(u,t'+1) = c(u,t') + 1 \bmod C$.
\end{lemma}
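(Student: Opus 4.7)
The plan is to first establish agreement at round $t+1$ and then propagate both agreement and the increment relation to all later rounds by induction. Since no node crashes in round $t$, we have $F(t)=F(t-1)$, and every sender in $V\setminus F(t)$ successfully delivers $c(u,t)$ to every node in $V\setminus F(t+1)\subseteq V\setminus F(t)$. Therefore, for every $v\in V\setminus F(t+1)$, the multiset of values used in the update rule at the end of round $t$ is identical, namely $\{c(u,t)\mid u\in V\setminus F(t)\}$ (adopting the natural convention that a node also tallies its own value, so that the multisets coincide across recipients). Applying the deterministic update rule to identical inputs yields identical outputs, so all $v\in V\setminus F(t+1)$ share a common counter value $x_{t+1}$.

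For the inductive step, suppose that at some round $t'\ge t+1$ there is a value $x_{t'}$ with $c(v,t')=x_{t'}$ for every $v\in V\setminus F(t')$. Every node that is correct at the start of round $t'$ holds $x_{t'}$ and broadcasts $x_{t'}$; in particular, any node crashing during round $t'$ can only deliver $x_{t'}$ to the asymmetric subset of recipients it reaches before halting, while nodes already in $F(t'-1)$ send nothing. Hence for every $v\in V\setminus F(t'+1)$, the tally $\{c(v,u,t')\mid u\in U(v,t')\}$ (together with $v$'s own value) consists exclusively of copies of $x_{t'}$, which is trivially a strict majority. The update rule thus produces $c(v,t'+1)=x_{t'}+1\bmod C$ at every non-crashed node, simultaneously giving agreement at round $t'+1$ and the consistency relation $c(v,t'+1)=c(v,t')+1\bmod C$ for every $v\in V\setminus F(t'+1)$.

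The main obstacle is the partial-delivery behaviour of nodes crashing during a round: such a node may send its value to some non-crashed recipients and not others, so different recipients can see different subsets of messages. What rescues agreement is that, by the inductive invariant, the value such a node would send is exactly the common $x_{t'}$, so asymmetric delivery merely changes how many copies of $x_{t'}$ appear in each tally, never introducing a competing value. A minor auxiliary point is the base-case convention that a node includes its own counter value in the tally, which ensures that the multisets of tallied values are identical across non-crashed recipients even before agreement has been achieved; without this convention one would need a slightly more careful counting argument in the base case, but the inductive step is unaffected.
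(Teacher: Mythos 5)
Your proof is correct and follows essentially the same route as the paper's: agreement at round $t+1$ because the crash-free round $t$ makes all tallied value sets identical at every live node, followed by an induction showing that once all live nodes share a value, every message received carries that value, so the majority rule yields the incremented common value at all later rounds. Your explicit self-tally convention and the remark about partially delivered messages from nodes crashing mid-round merely make explicit what the paper's claim $U(u,t)=U(v,t)$ and its inductive step implicitly assume.
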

\begin{proof}
 Since no node crashes on round $t$, we have that $U(u,t)=U(v,t)$. Hence, both $u$ and $v$ set the same value $x$ for their counter for round $t+1$ when using the above update rule and we have $c(v,t+1)=c(u,t+1)$. It remains to argue that non-crashed nodes will not ever disagree on their counter values after round $t+1$. To this end, suppose all non-crashed nodes agree on the output on some round $t'$, that is, there exists $x \in [C]$ such that for all $v \in V \setminus F(t')$ we have $c(v,t')=x$. Now for any $v \in V \setminus F(t'+1)$ and each $w \in U(v,t')$ it holds that $c(v,w,t')=x$. Thus, by the above update rule, node $v \in V \setminus F(t'+1)$ satisfies $c(v,t'+1)=x+1 \bmod C$. 
\end{proof}

\begin{theorem}
Let $C > 1$ and $f < n$. There exists a synchrous $C$-counter for $n$ nodes that tolerates $f$ crash faults and stabilises in $f+1$ rounds, where each node broadcasts $\lceil \log C \rceil$ bits every round. Moreover, if no node crashes on some round $t < f+1$, then the algorithm stabilises on round $t+1$.
\end{theorem}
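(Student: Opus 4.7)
The plan is to combine Lemma~\ref{lemma:crash-stab} with a pigeon-hole argument. The lemma already does the heavy lifting: any single crash-free round $t$ forces all non-crashed nodes to agree on their counters from round $t+1$ onward, and consistency is preserved thereafter. It therefore suffices to exhibit a crash-free round early enough in the execution.

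The message-size bound is immediate from the algorithm, since each node broadcasts only its local counter value $c(v,t) \in [C]$, which fits in $\lceil \log C \rceil$ bits. The ``moreover'' clause is a direct application of Lemma~\ref{lemma:crash-stab}: if no node crashes on some round $t < f+1$, the lemma yields agreement $c(u,t') = c(v,t')$ and the increment property $c(u,t'+1) = c(u,t')+1 \bmod C$ for all $t' > t$ and all surviving $u,v$, which is exactly stabilisation on round $t+1$.

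For the main $f+1$-round bound, I would observe that at most $f$ distinct rounds can contain a crash event, since there are only $f$ potentially faulty nodes and each crashes in at most one round. By pigeon-hole, among the first $f+1$ rounds at least one must be crash-free; let $t^\ast \le f+1$ denote the earliest such round. Applying Lemma~\ref{lemma:crash-stab} at $t^\ast$ gives stabilisation by round $t^\ast + 1$, which combined with the ``moreover'' clause yields the claimed bound in the cases $t^\ast \le f$.

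The place that deserves real care is the boundary case $t^\ast = f+1$, which forces each of the $f$ faulty nodes to have crashed in its own distinct round of $\{1,\ldots,f\}$. Here I would argue directly: by the end of round $f$ all faulty nodes have crashed, and in round $f+1$ the surviving set $V \setminus F(f)$ exchanges messages among themselves without any failure, so all surviving nodes share a common $U(\cdot,f+1)$ and deterministically compute the same counter update, matching the consistent-state requirement and completing the stabilisation argument. This edge-of-the-bound accounting is essentially the only non-routine step; away from it the proof is a one-line consequence of Lemma~\ref{lemma:crash-stab}.
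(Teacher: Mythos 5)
Your overall route is the same as the paper's: message size is immediate, the ``moreover'' clause is Lemma~\ref{lemma:crash-stab} verbatim, and the $f+1$ bound is reduced to exhibiting an early crash-free round. You were also right to flag the one delicate point: since each of the $f$ faulty nodes can crash in a distinct round, the pigeonhole only guarantees a crash-free round among the first $f+1$ rounds, so the case $t^\ast=f+1$ needs separate treatment (the paper's own proof simply asserts that some round $t<f+1$ is crash-free, which is precisely the assertion you were right to distrust). The problem is that your handling of that boundary case does not deliver the claimed bound. Showing that in round $f+1$ all surviving nodes receive the same message set and hence compute identical updates is just Lemma~\ref{lemma:crash-stab} applied with $t=f+1$: it gives agreement on $c(\cdot,f+2)$ and onwards, i.e.\ stabilisation in round $f+2$. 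Stabilisation ``in $f+1$ rounds'' requires agreement already on the values $c(\cdot,f+1)$, and these are computed from the messages of round $f$ --- a round in which, in this case, a node still crashes and may deliver its counter to only some of the survivors.

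Moreover, this gap cannot be closed by a cleverer argument, because in the boundary case disagreement at round $f+1$ can genuinely occur. Take $n=4$, $f=2$, $C=3$, correct nodes $a,b$ and faulty nodes $x,y$, initial counters $c(a,1)=c(x,1)=2$ and $c(b,1)=c(y,1)=0$; let $x$ crash in round $1$ with its message reaching only $a$, and $y$ crash in round $2$ with its message reaching only $a$. Then (taking the vote over the values actually received, excluding the node's own) $c(a,2)=1$ and $c(b,2)=c(y,2)=0$, after which $a$ sees $\{0,0\}$ and sets $c(a,3)=1$ while $b$ sees only $\{1\}$ and sets $c(b,3)=2$; a similar computation gives $c(a,3)=2\neq 0=c(b,3)$ if each node also counts its own value. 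Either way SC1 fails at round $3=f+1$, and the execution stabilises only in round $f+2$. So your instinct to isolate $t^\ast=f+1$ was correct, but the honest conclusion of the pigeonhole argument is a worst-case stabilisation bound of $f+2$ (one crash-free round among the first $f+1$), and the $f+1$ bound holds only under an extra assumption excluding one crash per round in rounds $1,\dots,f$ --- an assumption that neither your argument nor the paper's one-line proof supplies.
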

\begin{proof}
Since there are at most $f$ crash faults, there exists a round $t < f+1$ such that no node crashes. Applying \lemmaref{lemma:crash-stab} to this round implies that the algorithm stabilises. Since nodes only need to communicate their current counter values every round, a node needs to broadcast at most $\lceil \log C \rceil$ bits every round.
\end{proof}

The above algorithm has exactly optimal stabilisation time: it is known that any $t$-round counting algorithm solves consensus in $t$ rounds~\cite{dolev16synthesis}, but even under crash faults consensus requires $f+1$ rounds~\cite{aguilera99simple}. Moreover, the algorithm is ``early-stabilising'' in the sense that if there is no crash on some round $t$, then the algorithm stabilises on round $t+1$ even if some nodes crash on later rounds $t'>t$. Finally, the message size is optimal in the worst case: if there are no crashes on the first round, then it is necessary for the correct nodes to communicate $\lceil \log C \rceil$ bits to stabilise in one round.

\paragraph{Asymptotically optimal crash-tolerant firing squads.} Let us now consider the firing squad problem under crash faults. Observe that Dolev et al.~\cite{dolev12optimal} give a crash-tolerant firing squad algorithm with \emph{exactly} optimal stabilisation and response time. However, their algorithm uses messages of size $O(f \log f)$ for $f \in \Theta(n)$. We now show that if one relaxes the stabilisation and response times to be \emph{asymptotically} optimal, then messages of size $O(\log f)$ suffice.

\begin{definition}[Firing squad with crash faults]
In the firing squad problem with crash faults, we say that an execution of an algorithm \emph{stabilises in round $t\in \bN$} if the following properties hold:
\begin{itemize}[noitemsep]
  \item {\bf Agreement:} $\sfire(v,t')=\sfire(w,t')$ for all $v,w\in V\setminus F(t')$ and $t\leq t'\in \bN$.
  \item {\bf Safety:} If $\sfire(v,t_F)=1$ for $v\in V\setminus F(t_F)$ and $t\leq t_F\in \bN$, then there is $t_F\geq t_G\in \bN$ such that
 \begin{enumerate}[label=(\roman*)]
  \item $\sgo(w,t_G)=1$ for some $w\in V\setminus F(t_G)$, and 
  \item $\sfire(v,t')=0$ for all $t'\in \{t_G+1,\ldots,t_F-1\}$.
 \end{enumerate}
  \item {\bf Liveness:} If $\sgo(v,t_G)=1$ for $v\in V\setminus F(t_G+1)$ and $t\leq t_G\in \bN$, then $\sfire(v,t_F)=1$ for all nodes $v\in V\setminus F(t_F)$ and some $t_G< t_F\in \bN$.
\end{itemize}
\end{definition}

In \sectionref{sec:firing-squad} we saw that firing squad can be solved easily using consensus and a strong pulser algorithm. The same reduction works also under crash faults. The only difference is that we need to modify the second line of the firing squad algorithm given in \sectionref{ssec:counting-squad}. We replace the condition of seeing at least $f+1$ times $\sgo(w,t-1)=1$ with seeing at least \emph{one} node $w$ with $\sgo(w,t-1)=1$. This yields an result analogous to \theoremref{thm:squad} under crash faults.

Similarly, in the case of consensus under crash faults, the agreement and validity conditions need to be satisfied by all non-crashed nodes at the end of the execution. In this setting, consensus can be solved in $f+1$ rounds using $1$-bit messages~\cite{raynal10survey}. For example, we can adapt the same majority voting technique as in the counting algorithm above for $f+1$ rounds to solve consensus as well. For $C=f+1$, we can use a crash-tolerant $C$-counter to obtain a crash-tolerant strong $C$-pulser using \lemmaref{lemma:strong_pulsers_equal_counters}. Using similar arguments as in \theoremref{thm:squad}, we obtain the following result.

\begin{corollary}
For any $f\in \bN$ and $n>f$, there exists an $f$-crash-tolerant firing squad on $n$ nodes with stabilisation and response times of $O(f)$ and message size $O(\log f)$.
\end{corollary}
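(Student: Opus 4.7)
The plan is to assemble the result directly from the pieces already constructed in \sectionref{sec:crash-faults}, following the same reduction pattern used in the Byzantine case but with the crash-tolerant variants of the building blocks. First, I would fix $\Psi = f+2$ (any $\Psi > T(\vec C)$ with $\Psi \in \Theta(f)$ works) and instantiate the crash-tolerant $C$-counter from the theorem above with $C = \Psi$. This gives a synchronous counter with stabilisation time $f+1$ and per-round message size $\lceil \log \Psi \rceil = O(\log f)$. Feeding this counter into \lemmaref{lemma:strong_pulsers_equal_counters} yields a crash-tolerant strong $\Psi$-pulser $\vec P$ with stabilisation time at most $(f+1) + (\Psi-1) = O(f)$ and message size $O(\log f)$.

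Next, I would take a standard crash-tolerant binary consensus algorithm $\vec C$ running in $f+1$ rounds with $1$-bit messages (for concreteness, the majority-voting routine used in the counter construction, run for $f+1$ rounds and with a final thresholding step, is sufficient; see \cite{raynal10survey}). Since $\Psi > T(\vec C) = f+1$, the pair $(\vec P, \vec C)$ satisfies the hypothesis of the crash-adapted version of \theoremref{thm:squad}. Applying this reduction produces a firing squad $\vec F$ with
\[
T(\vec F) \le T(\vec P) + \Psi = O(f), \qquad R(\vec F) \le \Psi + T(\vec C) = O(f),
\]
and message size $M(\vec F) \le M(\vec P) + M(\vec C) + 1 = O(\log f)$.

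The one piece of work that requires attention, and which I expect to be the only non-routine part, is verifying that \theoremref{thm:squad} really does carry over to the crash fault model under the modification indicated at the end of the introductory paragraph of \sectionref{sec:crash-faults}: in Step~2 of the firing squad algorithm, the threshold $f+1$ on incoming $\sgo$ values is replaced by $1$, because under crash faults a single witness already guarantees that some correct node saw the $\sgo$ input (faulty nodes cannot forge inputs, they can only stop communicating). With this single change, the safety argument of \theoremref{thm:squad} still goes through verbatim, while the liveness argument becomes easier: as soon as one non-crashed node observes $\sgo=1$, all non-crashed nodes learn this by the next round, so every node enters the next consensus instance with input $1$ and the validity of $\vec C$ forces output $1$. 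Agreement follows from the agreement property of $\vec C$ exactly as before, taken over the set of nodes still alive at the time the respective instance terminates. Combining these three properties with the bounds on $T(\vec P)$, $T(\vec C)$ and $\Psi$ derived above gives the claimed $O(f)$ stabilisation and response times and $O(\log f)$ message size.
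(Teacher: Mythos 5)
Your proposal is correct and follows essentially the same route as the paper: build a strong pulser from the crash-tolerant counter via \lemmaref{lemma:strong_pulsers_equal_counters}, pair it with an $(f+1)$-round crash-tolerant binary consensus routine, and apply \theoremref{thm:squad} with the $\sgo$-threshold in Step~2 lowered from $f+1$ to a single witness. Your choice $\Psi=f+2$ is in fact slightly more careful than the paper's sketch (which takes $C=f+1$) about strictly enforcing $\Psi > T(\vec C)$, but otherwise the argument matches the paper's.
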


\subsection{The framework under omission faults}

We consider a fault type that falls between crash and Byzantine faults: omission faults. The case of omission faults is more challenging than crash faults, as faulty nodes may drop some of the messages, while still continuing to participate in the execution of the algorithm for indefinitely long. For simplicity, we focus on send omission faults, as our primary goal here is to demonstrate the flexibility our framework. One could also consider e.g.\ receive or general omission faults~\cite{raynal10survey}.

\begin{definition}[Omission faults]
We say that a node $v\in V$ suffers from (send) \emph{omission} faults if in each round $r$ the messages sent by $v$ are only received by some (arbitary) subset $U(r) \subseteq V$ of the nodes only. The remaining nodes in $V \setminus U(r)$ receive no message from $v$.
\end{definition}

Note that under send omission faults, the faulty nodes still receive messages from correct nodes. Hence, we modify the definitions of synchronous counting and firing squad problems as follows.

\begin{definition}[Counting with omission faults]
In the synchronous $C$-counting problem with omission faults, we require that the agreement and consistency conditions are satisfied by all nodes.
\end{definition}

\begin{definition}[Firing squad with omission faults]
In the firing squad problem with omission faults, the agreement, safety, and liveness conditions are adapted as follows. We say that an execution of an algorithm \emph{stabilises in round $t\in \bN$} if the following three properties hold:
\begin{itemize}[noitemsep]
  \item {\bf Agreement:} $\sfire(v,t')=\sfire(w,t')$ for all $v,w\in V$ and $t\leq t'\in \bN$.
  \item {\bf Safety:} If $\sfire(v,t_F)=1$ for $v\in V$ and $t\leq t_F\in \bN$, then there is $t_F\geq t_G\in \bN$ such that 
\begin{enumerate}[label=(\roman*)]
 \item $\sgo(w,t_G)=1$ for some $w\in V$,
 \item $\sfire(v,t')=0$ for all $t'\in \{t_G+1,\ldots,t_F-1\}$.
\end{enumerate}
  \item {\bf Liveness:} If $\sgo(v,t_G)=1$ for $v\in V\setminus F$ and $t\leq t_G\in \bN$, then $\sfire(v,t_F)=1$ for all nodes $v\in V$ and some $t_G< t_F\in \bN$.
\end{itemize}
\end{definition}

Finally, we remark that also the definition of consensus needs to be adapted in the case of send omission faults. For send omission faults, termination, agreement, and validity apply to all nodes in the system. 

\paragraph{Adjustments to the basic framework.} As Byzantine faults also cover omission faults, our framework could be used as-is with minimal modifications. However, weaker fault types permit a larger number of faults to be tolerated. Moreover, we can readily employ consensus protocols tailored for various different fault types from the literature by slightly adapting the voting schemes used in our constructions outside the consensus routines. In addition, we must adapt our reductions of multivalue consensus and silent consensus to standard binary consensus. More precisely, for each fault type, we need to address and handle the following issues:
\begin{enumerate}[noitemsep]
  \item We need an equivalent of \theoremref{thm:multi-valued}.
  \item Distributing knowledge of $\sgo$ inputs of $1$ in the firing squad algorithm.
  \item Adapting the filtering construction.
  \item We need an equivalent of \theoremref{thm:silent-consensus}.
  \item Adapting the pruning algorithm.
\end{enumerate}

In the following, we illustrate how to do the above modifications in the case of send omission faults. To this end, we require that $n>2f$, which is necessary and sufficient to achieve consensus in the presence of omission faults~\cite{raynal10survey}. We handle each of the above points as follows.

\begin{enumerate}
  \item Each node broadcasts its input bit by bit. There is a unique input $x$ that can be received $n-f>f$ times by any node (the threshold must be met for every bit, but the senders may differ). If $v$ receives such an input, it stores it and sends it again bit by bit; if not, it sends nothing in this second transmission. If $x$ is receveived at least $n-f$ times by $v\in V$ in this second iteration, $v$ uses input $1$ in a call to the binary consensus routine, otherwise $0$. If $v$ received \emph{any} value $x$ in this second transmission, it returns it in case the consensus routine outputs $1$. If the routine outputs $0$, it returns $0$. Note that if any node used input $1$, it received $x$ $f+1$ times in the second iteration, entailing that every node received $x$. Thus, agreement holds by the properties of the binary consensus routine. Likewise, validity of the latter implies validity of the former: if all nodes have the same input $x$, it is received $n-f$ times by each node in both iterations.

  \item Again, we replace the threshold of receiving $\sgo(w,t-1)=1$ from $f+1$ nodes $w\in V$ with the threshold of receiving $\sgo(w,t-1)=1$ from any node $w\in V$ in Step 2 of the firing squad algorithm, and adjusting the proof of \theoremref{thm:squad} is straightforward.

  \item In the filtering construction, we replace the requirement from $f<n/3$ to $f<n/2$. The only change is that $\ell_i(v,t+1)$ is set to $0$ if there is any node $u$ sending $m_i(u,t)=1$. One can readily check that this does not affect the correctness of \lemmaref{lemma:one_correct}, \lemmaref{lemma:correct-block-votes}, or \lemmaref{lemma:correct-block-fires}. Concerning \lemmaref{lemma:any-block-fires}, observe that any node having $b_i(v,t)=1$ implies $M_i(v,t)=1$ and thus $m_i(w,t-1)=1$ for at least $n-f>f$ nodes $w\in V$. Hence, each node $u\in V$ receives $m_i(w,t-1)=1$ from at least one node $w\in V$ and sets $l_i(u,t)=0$. \lemmaref{lemma:any-block-fires} now follows by similar reasoning as in the Byzantine case.

  \item We follow the same strategy as for the Byzantine case. In the first two rounds, a node sets its input to $0$ if receiving fewer than $n-f$ times $1$. Any node receiving a message in the first round participates in the execution of the (non-silent) binary consensus protocol. Each node returns $0$ if it received no message in the second round, it was forced to abort the binary consensus protocol due to violation of message size bound or an otherwise invalid execution, or the binary consensus protocol returned $0$. If a node does not participate, there are at most $f$ nodes with non-zero input, implying that no node receives a message in the second round. Thus, agreement holds in this case. If a node uses input one for the call to the non-silent consensus routine, all nodes participate, as at least $f+1$ nodes sent $1$ in the first round. Thus agreement follows from the correct execution of the non-silent protocol. Silence and validity are easily verified.

  \item We modify Step 1 of the pruning algorithm to set $r_i(v,t+1)=1$ if received $b_i(w,t)=1$ from any $w\in V$. It follows that if any node $v\in V$ uses input $1$ for a consensus instance whose first round is simulated in round $t$, each node received $b_i(v,t-1)=1$ and thus participates in the instance. Moreover, if all nodes have $b_i(v,t-1)=1$, all use input $1$ for the instance. Similar reasoning to the Byzantine case now establishes the required properties of the pruning routine.
\end{enumerate}

\paragraph{Results for omission faults.}
None of the above modifications change message size or time bounds, implying that we can feed the modified machinery with an arbitrary binary consensus algorithm resilient to $f<n/2$ omission faults to obtain results analogous to the Byzantine case.

\begin{theorem}
Suppose that we are given a family of $f$-omission-resilient deterministic consensus algorithms $\vec C(f)$ running on any number $n>2f$ of nodes in $T(\vec C(f))$ rounds using $M(\vec C(f))$-bit messages, where $T(\vec C(f))$ and $M(\vec C(f))$ are non-decreasing in $f$. Then, for any $\Psi,f\in \bN$ and $n>2f$, a strong $f$-omission-resilient $\Psi$-pulser $\vec P$ on $n$ nodes with
\begin{itemize}[noitemsep]
  \item $T(\vec P)\in (1+o(1))\Psi + O\left(\sum_{j=0}^{\lceil\log f\rceil}T(\vec C(2^j))\right)$
  \item $M(\vec P)\in O\left(1+\sum_{j=0}^{\lceil\log f\rceil}M(\vec C(2^j))\right)$
\end{itemize} 
exists, where for $f=0$ the sums are empty.
\end{theorem}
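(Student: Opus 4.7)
The plan is to mirror the proof of \theoremref{thm:consensus-to-pulsers} verbatim, with each Byzantine primitive replaced by its send-omission counterpart described in the preceding subsection. I would prove by induction on $k$ that $f$-omission-resilient strong $\Psi$-pulsers $\vec P(f,\Psi)$ on $n>2f$ nodes with the stated bounds (and non-decreasing in $f$) exist for all $f<2^k$. The base case $k=0$, i.e.\ $f=0$, is independent of the fault model: a designated node locally generates pulses and broadcasts them, stabilising in $\Psi+1$ rounds with $1$-bit messages.

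For the inductive step, fix $2^k\leq f<2^{k+1}$ and pick $f_0,f_1<2^k$ with $f_0+f_1+1=f$. Since $n>2f=2(f_0+f_1)+2$, I can split $V=V_0\cup V_1$ with $|V_i|=n_i>2f_i$. Choose $\Phi\in O(\log \Psi)+T(\vec C(f))$, large enough so that the omission-adapted multivalued reduction (modification Item 1 from the preceding subsection) produces a $\Psi$-value consensus routine $\vec C'$ with $T(\vec C')\leq \Phi$. Apply the omission-adapted weak pulser construction to $\vec C(f)$ and the inductive pulsers $\vec P_i=\vec P(f_i,\Psi_i)$ with $\Psi_0=2\Phi$, $\Psi_1=3\Phi$ to obtain a weak $\Phi$-pulser $\vec W$; then feed $\vec W$ and $\vec C'$ into \corollaryref{cor:strong-pulsers} to extract the desired strong $\Psi$-pulser $\vec P$. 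The same unfolding as in the Byzantine proof turns the $\max\{T(\vec P_0),T(\vec P_1)\}+O(\Phi)+\Psi$ recursion into $(1+o(1))\Psi+O\bigl(\sum_{j=0}^{\lceil\log f\rceil}T(\vec C(2^j))\bigr)$, and analogously for the message size.

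The main obstacle is not the induction, which is a transcription, but establishing that the omission-adapted primitives really do present the same interface that the original proof relies on. Concretely, I would need to verify: that the reduction in Item 1 satisfies agreement and validity under $n>2f$ omissions; that the omission analogue of \theoremref{thm:silent-consensus} preserves resilience, message size, and adds only $O(1)$ rounds while being genuinely silent; that \lemmaref{lemma:one_correct}, \lemmaref{lemma:correct-block-votes}, \lemmaref{lemma:correct-block-fires}, and \lemmaref{lemma:any-block-fires} remain valid after replacing $n>3f$-thresholds by the $n>2f$-thresholds of Item~3 (the key point is that $n_i-f_i>f_i$ still fails when $\vec P_i$ does not pulse, and that $n-f>f$ still forces $\ell_i(u,t)=0$ at every correct node whenever any $b_i(v,t)=1$); and that the pruning rule of Item 5 still yields the analogue of \corollaryref{coro:full_participation}, using that a single received $b_i(w,t)=1$ now suffices to force participation. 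Once these interface lemmas are in hand, plugging them into the recursion of \theoremref{thm:consensus-to-pulsers} produces the claimed bounds without further effort.
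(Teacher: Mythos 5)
Your proposal matches the paper's own treatment: the paper proves this theorem exactly by re-running the recursion of \theoremref{thm:consensus-to-pulsers} with the five omission-adapted primitives (multivalued reduction, filtering thresholds, silent consensus, pruning) and observing that none of the adaptations change the time or message bounds, which is precisely your plan, including the interface lemmas you list for verification. The only detail worth noting is that your partition argument ($n>2f$ gives $n_0>2f_0$, $n_1>2f_1$) is correct and mirrors the paper's $n>3f$ split.
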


Plugging in a folklore algorithm for $f<n/2$ faults (e.g. the first omission-resilient algorithm described in~\cite{raynal10survey}) or applying the phase king algorithm under omission faults, we obtain the following results.

\begin{corollary}
For any $\Psi,f\in \bN$ and $n>2f$, there exists a deterministic $f$-omission-resilient strong $\Psi$-pulser on $n$ nodes with stabilisation time $(1+o(1))\Psi + O(f)$ and message size of $O(\log f)$~bits.
\end{corollary}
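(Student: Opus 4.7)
The plan is to instantiate the preceding omission-fault analogue of Theorem~\ref{thm:consensus-to-pulsers} with a concrete family of deterministic $f$-omission-resilient binary consensus algorithms, and then carry out the two summations that appear in the resulting time and message-size bounds.

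For the consensus family, I would pick an omission-resilient variant of the phase-king protocol, or, equivalently, the folklore flooding algorithm mentioned by Raynal~\cite{raynal10survey}. In either case one obtains, for every $f$ with $n>2f$, a deterministic $f$-omission-resilient binary consensus algorithm $\vec C(f)$ with running time $T(\vec C(f))\in O(f)$ and message size $M(\vec C(f))\in O(1)$; moreover, both quantities are non-decreasing in $f$ (if necessary one can pad the algorithm with idle rounds to enforce monotonicity, which does not change the asymptotic bounds). Thus the hypotheses of the theorem apply.

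Plugging this family into the theorem, the stabilisation time of the resulting strong $\Psi$-pulser is
\[
T(\vec P) \in (1+o(1))\Psi + O\!\left(\sum_{j=0}^{\lceil\log f\rceil} T(\vec C(2^j))\right)
        = (1+o(1))\Psi + O\!\left(\sum_{j=0}^{\lceil\log f\rceil} 2^j\right)
        = (1+o(1))\Psi + O(f),
\]
where I used the geometric series. Similarly, the message size is
\[
M(\vec P) \in O\!\left(1+\sum_{j=0}^{\lceil\log f\rceil} M(\vec C(2^j))\right)
        = O\!\left(1+\sum_{j=0}^{\lceil\log f\rceil} 1\right)
        = O(\log f).
\]

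There is essentially no obstacle here beyond verifying that a deterministic omission-resilient binary consensus routine with linear running time and constant message size exists and is monotone in $f$; both the phase-king construction and the folklore algorithm satisfy this, and the remaining work is just the two elementary summations above. The result follows directly.
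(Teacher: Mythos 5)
Your proposal is correct and follows exactly the route the paper takes: instantiating the omission-fault analogue of Theorem~\ref{thm:consensus-to-pulsers} with a deterministic $f$-omission-resilient consensus routine (phase king adapted to omissions, or the folklore algorithm from~\cite{raynal10survey}) having $T(\vec C(f))\in O(f)$ and $M(\vec C(f))\in O(1)$, then evaluating the geometric sum for time and the logarithmic sum for message size. Nothing is missing.
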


\begin{corollary}
For any $f\in \bN$ and $n>f$, there exists a deterministic $f$-omission-resilient firing squad on $n$ nodes with stabilisation and response times of $O(f)$ and message size of $O(\log f)$~bits.
\end{corollary}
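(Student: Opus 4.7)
The plan is to combine the omission-resilient strong pulser from the preceding corollary with an omission-resilient binary consensus routine, using the (suitably modified) construction of \theoremref{thm:squad}. First I would instantiate the previous corollary with $\Psi \in O(f)$ chosen slightly larger than $T(\vec C)$, where $\vec C$ is a deterministic $f$-omission-resilient binary consensus algorithm running in $O(f)$ rounds with $O(1)$-bit messages (for example, the folklore algorithm referenced right after the previous corollary, or a phase-king-style variant adapted to $f<n/2$ omission faults). This yields a strong $\Psi$-pulser $\vec P$ with stabilisation time $O(f)$ and message size $O(\log f)$.

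Next I would run the firing squad construction of \sectionref{ssec:counting-squad} with $\vec P$ and $\vec C$, applying modification~(2) from the list preceding the statement: in Step~2 of the algorithm, the threshold ``at least $f+1$ nodes $w\in V$ sent $\sgo(v,w,t-1)=1$'' is replaced by ``at least one node $w\in V$ sent $\sgo(v,w,t-1)=1$''. This weakening is safe under send-omission faults because faulty nodes do not forge messages: a single report of $\sgo(w,t-1)=1$ certifies that $w$ genuinely had this input. It is also necessary, since the omission-resilient liveness condition requires firing even when only a single correct node observed a $\sgo$ input of $1$.

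The main step is to re-verify the proof of \theoremref{thm:squad} with this weaker threshold. Agreement follows verbatim, since after stabilisation of $\vec P$ all correct nodes simulate the same consensus instance in lock-step and $\vec C$ guarantees agreement on $y(v,t)$ under omission faults. For safety, the same induction as in the Byzantine proof applies, but now the invariant ``whenever a correct node sets $x(v,t)=1$ there is a correct $w$ with $\sgo(w,t')=1$ for some $t'$ in the current pulser window'' follows directly from the fact that omission-faulty nodes can only drop messages, not invent them. For liveness, if some $v \in V\setminus F$ has $\sgo(v,t_G)=1$, then in round $t_G$ the correct node $v$ attempts to broadcast $\sgo(v,t_G)=1$ to all nodes; every correct recipient receives it (since $v$ is correct), so each correct node sets $x(\cdot,t_G+1)=1$ and $m(\cdot,t_G+1)=1$ by Step~2. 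The remainder of the argument is identical to the Byzantine case: by validity of $\vec C$, the consensus instance begun at the next pulse outputs $1$, triggering a joint $\sfire$ within $\Psi + T(\vec C) \in O(f)$ rounds of $t_G$.

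Reading off the complexity bounds, we obtain $T(\vec F) \le T(\vec P)+\Psi \in O(f)$, $R(\vec F) \le \Psi + T(\vec C) \in O(f)$, and $M(\vec F) \le M(\vec P) + M(\vec C) + 1 \in O(\log f)$, as required. The only non-routine part will be rechecking the safety induction of \theoremref{thm:squad} with the weakened Step~2 threshold; I do not expect genuine trouble because faulty nodes cannot produce spurious $\sgo$ reports under omission faults, so the chain of implications ``$x(w,t_k)=1 \Rightarrow m(w,\cdot)=1$ or fresh $\sgo$ observed $\Rightarrow$ some correct $u$ had $\sgo(u,t_G)=1$'' goes through with ``at least one'' replacing ``at least $f+1$'' throughout.
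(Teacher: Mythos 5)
Your proposal is correct and matches the paper's intended argument exactly: the corollary is obtained by plugging an $f$-omission-resilient $O(f)$-round constant-message-size consensus routine and the omission-resilient strong $\Psi$-pulser with $\Psi\in O(f)$ into the firing squad construction of \sectionref{ssec:counting-squad}, with the threshold in Step~2 weakened from $f+1$ reports of $\sgo=1$ to a single report, which is precisely adjustment~(2) in the list preceding the statement, and the re-verification of \theoremref{thm:squad} goes through as you describe since send-omission-faulty nodes cannot forge $\sgo$ reports. The only caveat is that the construction actually requires $n>2f$ (as your use of an $f<n/2$ omission-resilient consensus routine implicitly assumes), so the bound $n>f$ in the statement appears to be a typo.
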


\begin{corollary}
For any $C,f\in \bN$ and $n>3f$, there exists a deterministic $f$-omission-resilient $C$-counter on $n$ nodes with stabilisation time $O(f + \log C)$ and message size of $O(\log f)$ bits.
\end{corollary}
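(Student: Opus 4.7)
The plan is to mirror the proof of the analogous Byzantine corollary for $C$-counting, but applied to the omission-resilient version of the machinery developed in \sectionref{sec:other}. First I would instantiate the omission-resilient analogue of \theoremref{thm:consensus-to-pulsers} using an $f$-omission-resilient binary consensus routine with running time $T(\vec C(f)) \in O(f)$ and constant message size (for instance, the phase king protocol executed in the omission setting, or the folklore algorithm referenced just above). This gives, via the recursion, the weak and strong pulsers at all the required intermediate resiliences under $n>2f$ (a fortiori under $n>3f$ as stated).

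The key trick, as in the Byzantine counting corollary, is to avoid paying the additive $\Psi$ overhead that comes from going through \corollaryref{cor:strong-pulsers}: in the final recursive step I would not extract a strong $\Psi$-pulser and then convert via \lemmaref{lemma:strong_pulsers_equal_counters}, but instead feed the top-level weak $\Phi$-pulser together with the top-level $C$-value consensus routine directly into \theoremref{thm:strong-agreement} (whose statement and proof go through verbatim in the omission model once the adaptations of item~1 of the checklist at the end of \sectionref{sec:other} are applied). The output is immediately a synchronous $C$-counter, with stabilisation time $T(\vec W) + T(\vec C') + 1$ rather than $T(\vec W) + T(\vec C') + \Psi$.

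For the complexity bounds, the $O(\log C)$ summand comes from the omission analogue of \theoremref{thm:multi-valued}, which adds $O(\log C)$ rounds when reducing $C$-value consensus to the binary one while preserving $O(1)$ message size and the resilience bound $n > 2f$. The $O(f)$ summand comes from $\sum_{j=0}^{\lceil \log f \rceil} T(\vec C(2^j)) = \sum_{j=0}^{\lceil \log f \rceil} O(2^j) = O(f)$, which is dominated by the top level of the recursion; likewise $\sum_{j=0}^{\lceil \log f \rceil} M(\vec C(2^j)) = O(\log f)$ for constant-size consensus messages, yielding the claimed $O(\log f)$-bit messages.

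The only real thing to check, rather than a genuine obstacle, is that the chain of adapted lemmas in \sectionref{sec:other} (filtering, silent consensus, pruning, multivalued reduction) really does carry the proof of \theoremref{thm:strong-agreement} through in the omission model without altering its running-time expression; this is straightforward since none of those adaptations change time or message-size bounds, they only adjust voting thresholds. Once this is observed, the corollary follows exactly as the Byzantine $C$-counter corollary did from its Byzantine precursors.
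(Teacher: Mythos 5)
Your proposal is correct and follows essentially the same route as the paper: the paper's Byzantine $C$-counter corollary is proved exactly by skipping \corollaryref{cor:strong-pulsers} in the last recursive step and feeding the top-level weak pulser and $C$-value consensus directly into \theoremref{thm:strong-agreement}, and the omission version follows by the same argument once the adaptations listed in \sectionref{sec:other} (none of which alter time or message-size bounds) are applied. Your accounting of the $O(\log C)$, $O(f)$, and $O(\log f)$ terms matches the paper's.
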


\section*{Acknowledgements}

We are grateful to Danny Dolev for inspiring discussions and valuable comments, especially concerning silent consensus. We thank anonymous reviewers for their comments on an earlier draft of this manuscript.


\DeclareUrlCommand{\Doi}{\urlstyle{same}}
\renewcommand{\doi}[1]{\href{http://dx.doi.org/#1}{\footnotesize\sf
    doi:\Doi{#1}}}
\bibliographystyle{plainnat}
\bibliography{squad}

\end{document}